\renewcommand\footnotetextcopyrightpermission[1]{} % removes footnote with conference information in first column
\theoremstyle{remark}
\newtheorem{remark}{Remark}%[theorem]
\DeclarePairedDelimiter\ceil{\lceil}{\rceil}
\DeclarePairedDelimiter\floor{\lfloor}{\rfloor}
\DeclareMathOperator{\var}{Var}
\DeclareMathOperator{\mean}{E}
\DeclareMathOperator{\prb}{Pr}
\newcommand{\prob}[1]{\prb[#1]}
\newcommand{\xE}{\mathcal{E}}
\newcommand{\xB}{\mathcal{B}}
\newcommand{\xS}{\mathcal{S}}
\newcommand{\xU}{\mathcal{U}}
\newcommand{\bbar}{\bar{b}}
\newcommand{\alphabar}{\bar{\alpha}}
\let\eps\epsilon
\DeclareMathOperator{\frc}{frac}
\newcommand{\bS}{\mathbb{S}}
\newcommand{\bL}{\mathbb{L}}
\newcommand{\bC}{\mathbb{C}}
\newcommand{\fa}{{\tilde f}}
\newcommand{\yA}{{\tilde A}}
\newcommand{\ypi}{{\tilde \pi}}
\newcommand{\yC}{{\tilde C}}
\newenvironment{romlist}{

  \begin{enumerate}
  }{\end{enumerate}}
\newcommand{\eat}[1]{}
\newcommand{\revision}[1]{#1}
\begin{document}

\title[Temporally-Biased Sampling Schemes]{Temporally-Biased Sampling Schemes\\ for Online Model Management}

\author{Brian Hentschel}
\affiliation{%
  \institution{Harvard University}
  \city{Cambridge} 
  \state{Massachusetts} 
  \country{USA}
}
\email{bhentschel@g.harvard.edu}

\author{Peter J. Haas}
\orcid{0000-0001-5694-3065}
\affiliation{%
  \institution{University of Massachusetts Amherst}
  \city{Amherst} 
  \state{Massachusetts} 
  \country{USA}
}
\email{phaas@cs.umass.edu}

\author{Yuanyuan Tian}
\affiliation{%
  \institution{IBM Research -- Almaden}
  \city{San Jose} 
  \state{California} 
  \country{USA}
}
\email{ytian@us.ibm.com}

% !TEX root = tbs-tods.tex
% above command is for TeXShop

\begin{abstract}
To maintain the accuracy of supervised learning models in the presence of evolving data streams, we provide temporally-biased sampling schemes that weight recent data most heavily, with inclusion probabilities for a given data item decaying over time according to a specified ``decay function''. We then periodically retrain the models on the current sample. This approach speeds up the training process relative to training on all of the data. Moreover, time-biasing lets the models adapt to recent changes in the data while---unlike in a sliding-window approach---still keeping some old data to ensure  robustness in the face of temporary fluctuations and periodicities in the data values. In addition, the sampling-based approach allows existing analytic algorithms for static data to be applied to dynamic streaming data essentially without change. We provide and analyze both a simple sampling scheme (T-TBS) that probabilistically maintains a target sample size and a novel reservoir-based scheme (R-TBS) that is the first to provide both control over the decay rate and a guaranteed upper bound on the sample size. If the decay function is exponential, then control over the decay rate is complete, and R-TBS maximizes both expected sample size and sample-size stability. For general decay functions, the actual item inclusion probabilities can be made arbitrarily close to the nominal probabilities, and we provide a scheme that allows a trade-off between sample footprint and sample-size stability. R-TBS rests on the notion of a ``fractional sample'' and allows for data arrival rates that are unknown and time varying (unlike T-TBS). The R-TBS and T-TBS schemes are of independent interest, extending the known set of unequal-probability sampling schemes.
%We show how to implement efficient distributed versions of T-TBS and R-TBS on Spark, leveraging a recent ``in-place updating'' technique for RDDs due to Xie, et al.. 
We discuss distributed implementation strategies; experiments in Spark illuminate the performance and scalability of the algorithms, and show that our approach can increase machine learning robustness in the face of evolving data. 
\end{abstract}
 % abstract must now precede \maketitle in new amcart template

\maketitle
\thispagestyle{empty}

% !TEX root = tbs-tods.tex
% above command is for TeXShop

\section{Introduction}\label{sec:intro}

A key challenge for machine learning (ML) is to keep ML models from becoming stale in the presence of evolving data. In the context of the emerging Internet of Things (IoT), for example, the data comprises dynamically changing sensor streams~\cite{WhitmoreAX15}, and a failure to adapt to changing data can lead to a loss of predictive power.

One way to deal with this problem is to re-eng\-i\-neer existing static supervised learning algorithms to become adaptive. \revision{Some parametric methods---such as support-vector machines (SVMs) without the ``kernel trick'', hidden Markov models, and regression models---can indeed be re-engineered so that the parameters are time-varying, but for many popular non-parametric algorithms such as k-nearest neighbors (kNN) classifiers, decision trees, random forests, gradient boosted machines, and so on, it is not at all clear how re-engineering can be accomplished. The 2017 Kaggle Data Science Survey~\cite{kaggleSurvey} indicates that a substantial portion of the models that developers use in industry are non-parametric.} We therefore consider alternative approaches in which we periodically retrain ML models, allowing static ML algorithms to be used in dynamic settings essentially as-is. There are several possible retraining approaches.

\textbf{Retraining on cumulative data:} Periodically retraining a model on all of the data that has arrived so far is clearly infeasible because of the huge volume of data involved. Moreover, recent data is swamped by the massive amount of past data, so the retrained model is not sufficiently adaptive.

\textbf{Sliding windows:} A simple sliding-window approach would be to, e.g., periodically retrain on the data from the last two hours. If the data arrival rate is high and there is no bound on memory, then one must deal with long retraining times caused by large amounts of data in the window. The simplest way to bound the window size is to retain the last $n$ items. Alternatively, one could try to subsample within the time-based window~\cite{GemullaL08}. The fundamental problem with all of these bounding approaches is that old data is completely forgotten; the problem is especially severe when the data arrival rate is high. This can undermine the robustness of an ML model in situations where old patterns can reassert themselves. For example, a singular event such as a holiday, stock market drop, or terrorist attack can temporarily disrupt normal data patterns, which will reestablish themselves once the effect of the event dies down. Periodic data patterns can lead to the same phenomenon. Another example, from \cite{XieTSBH15}, concerns influencers on Twitter: a prolific tweeter might temporarily stop tweeting due to travel, illness, or some other reason, and hence be completely forgotten in a sliding-window approach. Indeed, in real-world Twitter data, almost a quarter of top influencers were of this type, and were missed by a sliding window approach.

\textbf{Temporally-biased sampling:} An appealing alternative is a temporally biased sampling-based approach, i.e., maintaining a sample that heavily emphasizes recent data but also contains a small amount of older data, and periodically retraining a model on the sample. By using a time-biased sample, the retraining costs can be held to an acceptable level while not sacrificing robustness in the presence of recurrent patterns. This approach was proposed in \cite{XieTSBH15} in the setting of graph analysis algorithms, and has recently been adopted in the MacroBase system~\cite{BailisGMNRS17}. The orthogonal problem of choosing when to retrain a model is also an important question, and is related to, e.g.,  the literature on ``concept drift''~\cite{GamaZBPB14}; in this paper we focus on the problem of how to efficiently maintain a time-biased sample.

In more detail, our time-biased sampling algorithms ensure that the ``appearance probability'' for a given data item---i.e., the probability that the item appears in the current sample---decays over time at a controlled rate. Specifically, we assume that items arrive in \emph{batches} $\xB_1,\xB_2,\ldots$, at time points $t_1,t_2,\cdots$, where each batch contains 0 or more items and $t_k\to\infty$ as $k\to\infty$. Our goal is to generate a sequence $\{S_k\}_{k\ge 1}$, where $S_k$ is a sample of the items that have arrived at or prior to time~$t_k$, i.e., a sample of the items in $\xU_k=\bigcup_{i=1}^k \xB_i$. These samples should be biased towards recent items, in the following sense. For $1\le i\le k$, denote by $\alpha_{i,k}=t_k-t_i$ the \emph{age} at time $t_k$ of an item belonging to batch $\xB_i$. Then for arbitrary times $t_i\le t_j$ and items $x\in\xB_i$ and $y\in\xB_j$,
\begin{equation}\label{eq:expratio}
\prob{x\in S_k}/\prob{y\in S_k}=f(\alpha_{i,k})/f(\alpha_{j,k}),
\end{equation}
for any batch arrival time $t_k\ge t_j$, where $f$ is a nonnegative and nonincreasing \emph{decay function} such that $f(0)=1$. Thus items with a given timestamp are sampled uniformly, and items with different timestamps are handled in a carefully controlled manner, such that the appearance probability for an item of age $\alpha$ is proportional to $f(\alpha)$. The criterion in \eqref{eq:expratio}, which is expressed in terms of wall-clock time, is natural and appealing in applications and, importantly, is interpretable and understandable to users.

\revision{\textbf{Choosing a decay function:} Although our primary focus is on developing sampling methods that can support a variety of decay functions, the question of how to choose a good decay function $f$ is important, and a topic of ongoing research. Cohen and Strauss~\cite{CohenS06} discuss the choice of decay functions in the setting of time-decayed aggregates in telecommunications networks, and argue that the proper choice of a decay function depends on domain knowledge. In an example involving a reliability comparison between two telecommunications links, they conclude that a polynomial decay function best matches their intuition on how the comparison should evolve over time. More generally, the authors assert that there is a trade-off between the ability to decay quickly in the short term and the ability to potentially retain older data, and that the choice should depend on the perceived importance of older data and on the time scales of correlations between values; e.g., the latter might correspond to the amount of time it takes a prior pattern to reassert itself. The authors therefore argue for supporting a rich class of decay functions. Similarly, Xie et al.~\cite{XieTSBH15} show how a decay function can be chosen to meet application-specific criteria. For example, by using an exponential decay function $f(\alpha)=\exp(-\lambda \alpha)$ with $\lambda=0.058$, a data item from 40 batches ago is $1/10$ as likely to appear in the current analysis as a newly arrived item. If training data is available, $\lambda$ can also be chosen to maximize accuracy of a specified ML model via cross validation combined with grid search---in our experiments, where ground truth data was available, we found empirically that accuracy tended to be a quasiconvex function of $\lambda$, which bodes well for automatic optimization methods such as stochastic gradient descent. We find exponential and sub-exponential decay functions such as polynomial decay to be of the greatest interest. As will become apparent, exponential decay functions, though of limited flexibility, are the easiest to work with, and most prior work has centered around exponential decay.
%As discussed in \cite{CormodeSSX09}, exponential decay, for example, resembles ``radioactive decay", whereas polynomial decay resembles the "dispersion of (sound) energy''. Most prior work has centered around exponential decay, but, e.g., polynomial decay is preferred in some analytical settings~\cite{CohenS06}.
Super-exponential decay functions are of less practical interest: older items decay too fast and the sampling scheme behaves essentially like a sliding window.}
%As another example, suppose that, $k=150$ batches ago, an entity such as a person or city was represented by $n=1000$ data items  and we want to ensure that, with probability $q=0.01$,  at least one of these data items remains in the current sample. Then we would set $\lambda=-k^{-1}\ln\bigl(1-(1-q)^{1/n}\bigr)\approx 0.077$.

%The exponential form of the decay function has been adopted by the majority of time-biased-sampling applications in practice because otherwise one would typically need to track the arrival time of every data item---both in and outside of the sample---and decay each item individually at an update, which would make the sampling operation intolerably slow. (A ``forward decay" approach that avoids this difficulty, but with its own costs, has been proposed in \cite{CormodeSSX09}; we plan to investigate forward decay in future work.) Exponential decay functions make update operations fast and simple.

\textbf{Sample-size control:} For the case in which the item-arrival rate is high, the main issue is to keep the sample size from becoming too large. On the other hand, when the incoming batches become very small or widely spaced, the sample sizes for all of the time-biased algorithms that we discuss (as well as for sliding-window schemes based on wall-clock time) can become small. This is a natural consequence of treating recent items as more important, and is characteristic of any sampling scheme that satisfies \eqref{eq:expratio}. We emphasize that---as shown in our experiments---a smaller, but carefully time-biased sample typically yields better prediction accuracy than a sample that is larger due to overloading with too much recent data or too much old data. I.e., more sample data is not always better. Indeed, with respect to model management, this decay property can be viewed as a feature in that, if the data stream dries up and the sample decays to a very small size, then this is a signal that there is not enough new data to reliably retrain the model, and that the current version should be kept for now. In any case, we show that, within the class of sampling algorithms that support exponential decay, our new R-TBS algorithm with an exponential decay function maximizes the expected sample size whenever the sample is not saturated.

\textbf{Prior work:} It is surprisingly hard to both enforce \eqref{eq:expratio} and to bound the sample size. As discussed in detail in Section~\ref{sec:relwork}, prior algorithms cannot handle arbitrary decay functions, and can only support ``forward decay'' schemes, which are less intuitive for users than ``backward decay'' schemes that enforce \eqref{eq:expratio}, and can lead to both numerical issues and poor adaptation behavior for ML algorithms. The only decay functions that support \eqref{eq:expratio} and are handled by prior algorithms are the exponential decay functions, because backward and forward decay coincide in this case. Even in this restricted setting, prior algorithms that bound the sample size either cannot consistently enforce \eqref{eq:expratio} or cannot handle wall-clock time. Examples of the former include algorithms based on the A-Res scheme of Efraimidis and Spirakis~\cite{efraimidisS06} and Chao's algorithm~\cite{Chao82}. A-Res enforces conditions on the \emph{acceptance} probabilities of items; this leads to appearance probabilities that, unlike \eqref{eq:expratio}, are both hard to compute and not intuitive. In Appendix~\ref{sec:chao}
we demonstrate how Chao's algorithm can be specialized to the case of exponential decay and modified to handle batch arrivals. We then observe that the resulting algorithm fails to enforce \eqref{eq:expratio} either when initially filling up an empty sample or in the presence of data that arrives slowly relative to the decay rate, and hence fails if the data rate fluctuates too much. The second type of algorithm, due to Aggarwal~\cite{Aggarwal06}, can only control appearance probabilities based on the indices of the data items. For example, after $n$ items arrive, one could require that, for some specified $k<n$, the $(n-k)$th item is 1/10 as likely to be in the sample as the current item. If the data arrival rate is constant, then this might correspond to a constraint of the form ``a data item that arrived 10 hours ago is 1/10 as likely to be in the sample as the current item". For varying arrival rates, however, it is impossible to enforce the latter type of constraint, and a large batch of arriving data can prematurely flush out older data. Thus our new sampling schemes are interesting in their own right, significantly expanding the set of unequal-probability sampling techniques.

\textbf{T-TBS:} We first provide and analyze Target\-ed-Size Time-Biased Sampling (T-TBS), a relatively simple algorithm that generalizes the Bernoulli sampling scheme in \cite{XieTSBH15}. T-TBS allows complete control over the decay rate (expressed in wall-clock time) and probabilistically maintains a target sample size. That is, the expected and average sample sizes converge to the target and the probability of large deviations from the target decreases exponentially or faster in both the target size and the deviation size. T-TBS is easy to implement and highly scalable when applicable, but only works under the strong restriction that the mean sizes of the arriving batches are constant over time and known a priori. T-TBS is a good choice in some scenarios (see Section~\ref{sec:ttbs}), but many applications have non-constant, unknown mean batch sizes, thus cannot tolerate sample overflows.

\textbf{R-TBS:} We then provide a novel algorithm, Reservoir-Based Time-Biased Sampling (R-TBS), that is the first to simultaneously enforce \eqref{eq:expratio} at all times, provide a guaranteed upper bound on the sample size, and allow unknown, varying data arrival rates. Guaranteed bounds are desirable because they avoid memory management issues associated with sample overflows, especially when large numbers of samples are being maintained---so that the probability of \emph{some} sample overflowing is high---or when sampling is being performed in a limited memory setting such as at the ``edge'' of the IoT. Also, bounded samples reduce variability in retraining times and do not impose upper limits on the incoming data flow. For an exponential decay function, the appearance probability for an item of age $\alpha$ is always exactly proportional to $f(\alpha)$; in general, for a ``cutoff age'' $\alpha^*$, the appearance probability for an item of age $\alpha$ is proportional to $\fa(\alpha)$, where $\fa(\alpha)=f(\alpha)$ for $\alpha\le \alpha^*$ but $\fa(\alpha)\not=f(\alpha)$  for $\alpha> \alpha^*$. At the cost of additional storage, the user can make $\alpha^*$ arbitrarily large---so that only a small set of very old items are affected---and the discrepancy $|f-\fa|$ for these old items arbitrarily small. We emphasize that, even though R-TBS involves some approximations in the case of general decay functions, the magnitude of these approximations is completely controllable a priori by the user; in contrast, prior schemes such as A-Res and Chao's Algorithm offer no control over departures from \eqref{eq:expratio} and indeed it can be difficult even to quantify the extent of these departures.

The idea behind R-TBS is to adapt the classic reservoir sampling algorithm, which bounds the sample size but does not allow time biasing. Our approach rests on the notion of a ``fractional'' sample whose nonnegative size is real-valued in an appropriate sense. For exponential decay, we show that R-TBS maximizes the expected sample size whenever the data arrival rate is low and also minimizes the sample-size variability; in general, there again is a user-controllable tradeoff between storage requirements and sample size stability.

\textbf{Distributed implementation:} Both T-TBS and R-TBS can be parallelized. Whereas T-TBS is relatively straightforward to implement, an efficient distributed implementation of R-TBS is nontrivial. We exploit various implementation strategies to reduce I/O relative to other approaches, avoid unnecessary concurrency control, and make decentralized decisions about which items to insert into, or delete from, the reservoir.

\textbf{Extensions of our prior work:} A preliminary version of this work appeared in \cite{HentschelHT18}; that paper focused entirely on the case of exponential decay and was missing many of the proofs for the given theoretical results. In the current paper, we extend our results to the setting of general decay functions. Handling such functions requires significant extensions to the algorithms, theory, and experimental study given in \cite{HentschelHT18}. Interestingly, viewing the original R-TBS algorithm in \cite{HentschelHT18} as a special case of the general algorithm has led to streamlining of the original algorithm as well as its theoretical analysis. The current paper contains all relevant proofs. 

%We leverage a recent in-place updating technique for Spark RDDs due to Xie et al.~\cite{XieTSBH15}; in our setting, this technique allows us to exploit native Spark operations, reduce I/O relative to other approaches, avoid unnecessary concurrency control, and make decentralized decisions about which items to insert into, or delete from, the reservoir.

\textbf{Organization:} The rest of the paper is organized as follows. In Section~\ref{sec:background} we describe our batch-arrival model and, to provide context for the current work, discuss two prior simple sampling schemes: a simple Bernoulli scheme as in~\cite{XieTSBH15} and the classical reservoir sampling scheme, modified for batch arrivals.  These methods either bound the sample size but do not control the decay rate, or control the decay rate but not the sample size. We next present and analyze the T-TBS and R-TBS algorithms in Section~\ref{sec:ttbs} and Section~\ref{sec:tbsamp}. We describe the distributed implementation in Section~\ref{sec:imp}, and Section~\ref{sec:exp} contains experimental results. We review the related literature in Section~\ref{sec:relwork} and conclude in Section~\ref{sec:concl}.
% !TEX root = tbs-tods.tex
% above command is for TeXShop

\section{Background}\label{sec:background}

For the remainder of the paper, we focus on settings in which batches arrive at regular time intervals, so that $t_i=i\Delta$ for some $\Delta>0$. This simple integer batch sequence often arises from the discretization of time~\cite{QianHSWZZZYZ13,ZahariaDLSS13}. Specifically, the continuous time domain is partitioned into intervals of length $\Delta$, and the items are observed only at times $\{k\Delta:k=1,2,\ldots\}$. All items that arrive in an interval $\bigl((k-1)\Delta,k\Delta\bigr]$ are treated as if they arrived at time $k\Delta$, i.e., at the end of the interval, so that all items in batch $\xB_i$ have time stamp~$i\Delta$. It follows that the age at time~$t_k$ of an item that arrived at time $t_i\le t_k$ is simply $\alpha_{i,k}=(k-i)\Delta$.

In this section, we briefly review two classical sampling schemes whose properties we will combine in the R-TBS algorithm.

\begin{algorithm}[h]
\caption{Bernoulli time-biased sampling (B-TBS)}\label{alg:bernsamp}
$\lambda$: decay factor ($\ge 0$)
\BlankLine
Initialize: $S\gets \emptyset$; $p\gets e^{-\lambda\Delta}$\Comment*[r]{$p=$ retention prob.}
\For{$i\gets1,2,\ldots$}{
$M \gets \textsc{Binomial}(|S|,p)$\Comment*[r]{simulate $|S|$ trials}\label{ln:binom}
$S \gets \textsc{Sample}(S,M)$\Comment*[r]{retain $M$ random elements}\label{ln:retainbinom}
$S\gets S\cup \xB_i$\;\label{ln:accept}
output $S$}
\end{algorithm}

\revision{\textbf{Bernoulli Time-Biased Sampling (B-TBS):} A well known, simple Bernoulli time-biased sampling scheme processes each incoming item, one at a time, by first downsampling the current sample and then accepting the incoming item into the sample with probability~1. Downsampling is accomplished by flipping a coin independently for each item in the sample: an item is retained in the sample with probability~$p$ and removed with probability~$1-p$. To adapt this sampling scheme to our batch-arrival setting, we process incoming items a batch at a time, and implicitly assume an exponential decay function $f(\alpha)=e^{-\lambda\alpha}$, setting $p=e^{-\lambda\Delta}$. Moreover, we take advantage of the fact that the foregoing downsampling operation is probabilistically equivalent to pre-selecting the number $M$ of items to retain according to a binomial distribution and then choosing the actual set of $M$ retained items uniformly from the elements in the current sample; see Appendix~\ref{sec:batchproofs}
for a proof of this fact. Generating a sample of $M$ can be done efficiently using standard algorithms \cite{StadloberZ99}, and obviates the need for executing multiple coin flips.

The resulting sampling scheme is given as Algorithm~\ref{alg:bernsamp}. At each time~$t_i$ we accept each incoming item $x\in \xB_i$ into the sample with probability~1 (line~\ref{ln:accept}). Downsampling is accomplished in lines \ref{ln:binom} and \ref{ln:retainbinom}: the function $\textsc{Binomial}(j,r)$ returns a random sample from the binomial distribution with $j$ independent trials and success probability $r$ per trial, and the function $\textsc{Sample}(A,m)$ returns a uniform random sample, without replacement, containing $\min(m,|A|)$ elements of the set $A$; note that the function call $\textsc{Sample}(A,0)$ returns an empty sample for any empty or nonempty $A$.

To see that Algorithm~\ref{alg:bernsamp} enforces the relation in \eqref{eq:expratio} as required, observe that the sequence of samples is a set-valued Markov process, so that
\[
\prob{x\in S_{i+j}\mid x\in S_{i+j-1}, x\in S_{i+j-2},\ldots,x\in S_i}= \prob{x\in S_{i+j}\mid x\in S_{i+j-1}} 
\]
for $i,j\ge 1$. We then have, for $x\in\xB_i$,
\begin{equation}\label{eq:bern1}
\prob{x\in S_k}
=\prob{x\in S_i}\times\prod_{j=1}^{k-i} \prob{x\in S_{i+j}\mid x\in S_{i+j-1}}
= 1\times p^{k-i}=e^{-\lambda (k-i)\Delta} = e^{-\lambda(t_k-t_i)},
\end{equation}
and \eqref{eq:expratio} follows immediately from \eqref{eq:bern1}. Thus Algorithm~\ref{alg:bernsamp} precisely controls the relative inclusion probabilities according to the exponential decay function $f$ given above. This is the algorithm used, e.g., in \cite{XieTSBH15} to implement time-biased edge sampling in dynamic graphs.}

Unfortunately, the user cannot independently control the expected sample size, which is completely determined by $\lambda$ and the sizes of the incoming batches. In particular, if the batch sizes systematically grow over time, then sample size will grow without bound. Arguments in \cite{XieTSBH15} show that if $\sup_i |\xB_i|<\infty$, then the sample size can be bounded, but only probabilistically. Because B-TBS is a special case of T-TBS with an exponential decay function and a unitary acceptance probability for arriving items, the results in Section~\ref{sec:ttbs} represent a significant extension and refinement of the analysis in \cite{XieTSBH15}.

\begin{algorithm}[tbh]
\caption{Batched reservoir sampling (B-RS)}\label{alg:rs}
$n$: maximum sample size
\BlankLine
Initialize: $S\gets\emptyset$; $W\gets 0$;
\For{$i=1,2,\ldots$}{
$C=\min(n,W+|\xB_i|)$\Comment*[r]{new sample size}\label{ln:bddsize}
$M \gets \textsc{HyperGeo}\bigl(C,|\xB_i|,W\bigr)$\;\label{ln:hyper}
\Comment{add $M$ elements to $S$, overwriting $\max\bigl(M-(n-|S|),0\bigr)$ items}
%\Comment{overwrite $\max(|S|+M-n,0)$ elements}
$S\gets \textsc{Sample}\bigl(S,\min(n-M,|S|)\bigr)\cup \textsc{Sample}(\xB_i,M)$\;\label{ln:haccept}
$W\gets W+|\xB_i|$\;
output $S$}
\end{algorithm}

\revision{\textbf{Batched Reservoir Sampling (B-RS):} The classical reservoir sampling algorithm~\cite{kn:sna,mb:ca} maintains a bounded uniform sample of items in a data stream. The idea is to fill up the reservoir with the first $n$ items, where $n$ is the reservoir size. For $k>n$, the $k$th incoming item is accepted into the sample with probability $q_k=n/k$, and an accepted item overwrites a randomly chosen victim. Our choice of $q_k$ is intuitively motivated by the observation that, in general, a given item from a population of size~$k$ appears in a uniform sample of size~$n\le k$ with probability precisely equal to $n/k$.

We can extend the classical algorithm to our batch setting; to our knowledge, a batch-oriented variant has not appeared previously in the literature. To informally motivate the algorithm, we generalize the foregoing intuition. For $k\ge 1$, let $\xU_k=\bigcup_{j=1}^k \xB_j$ be the set of items arriving up through time $t_k$ and set $W_k=|\xU_k|$. Suppose that the sample is full (i.e., $|S_{k-1}|=n$) just before batch $\xB_k$ arrives. After processing $\xB_k$, we ought to have a uniform sample of $n$ items from the set $\xU_k=\xU_{k-1}\cup\xB_k$. We would thus expect the number $M$ of $\xB_k$-items in the sample to follow a hypergeometric$(n,|\xB_k|,W_{k-1})$ distribution; here the hypergeometric$(k,a,b)$ probability mass function is given by $p(n)=\binom{a}{n}\binom{b}{k-n}/\binom{a+b}{k}$ if $\max(0,k-b)\le n\le\min(a,k)$ and $p(n)=0$ otherwise. This motivates us to accept new items from $\xB_k$ into the sample by first generating the number of items to accept as a hypergeometric variate $M$ and then selecting $M$ specific items for acceptance in a random and uniform manner. As with classical reservoir sampling, incoming items arriving before the sample fills up are accepted into the sample with probability~1 and do not overwrite random victims, whereas subsequent incoming items do overwrite random victims. In the corner case where $W_{k-1}<n$ and $W_{k-1}+|\xB_k|\ge n$, so that an incoming batch would cause the sample to overflow if all items were accepted, we generate $M$ as before, but $\min(n-W_{k-1},M)$ of these elements are accepted into the sample without overwriting a random victim, and the remainder overwrite a random victim from $\xU_{k-1}$.

The resulting sampling scheme is given as Algorithm~\ref{alg:rs}. In the algorithm, \textsc{Sample} is defined as before and $\textsc{HyperGeo}(k,a,b)$ returns a sample from the hypergeometric$(k,a,b)$ distribution; see \cite{StadloberZ99} for a discussion of efficient implementations of \textsc{HyperGeo}. Appendix~\ref{sec:batchproofs}
contains a formal proof of correctness. Although B-RS guarantees an upper bound on the sample size, it does not support time biasing in that all items seen so far are equally likely to be in the sample. The R-TBS algorithm (Section~\ref{sec:tbsamp}) maintains a bounded reservoir as in B-RS while simultaneously allowing time-biased sampling as in B-TBS.}

\section{Targeted-Size TBS}\label{sec:ttbs}

As a first step towards time-biased sampling with a controlled sample size, we provide the T-TBS scheme, which improves upon B-TBS by ensuring the inclusion property in \eqref{eq:expratio} while providing probabilistic guarantees on the sample size. Throughout, we focus on the case where the batch sizes $\{|\xB_k|\}_{k\ge 1}$ are independent and identically distributed (i.i.d.) with common mean $b<\infty$, and assume  that the decay function $f$ satisfies $\lim_{\alpha\to\infty}f(\alpha)=0$.

\subsection{The Algorithm}\label{sec:ttbsAlg}

The key idea is to not just downsample to remove older items as in B-TBS, but to also downsample incoming batches at a rate $q$ such that $n$ becomes (asymptotically) the ``equilibrium'' sample size. \revision{Unlike B-TBS, we now want to have the retention probability of an item depend on its age. In particular, if for $x\in \xB_i$, we set the retention probability at time $t_k\ge t_i$ equal to 
\begin{equation}\label{eq:defpik}
p_{i,k}=f(\alpha_{i,k})/f(\alpha_{i,k-1}),
\end{equation}
then we have
\begin{equation}\label{eq:gttbs1}
\prob{x\in S_k}
=\prob{x\in S_i}\times\prod_{j=1}^{k-i} \prob{x\in S_{i+j}\mid x\in S_{i+j-1}}
= q\times\prod_{j=1}^{k-i} \frac{f(\alpha_{i,i+j})}{f(\alpha_{i,i+j-1})}=qf(\alpha_{i,k}),
\end{equation}
%where we have used the fact that the sequence of samples is a Markov process (albeit with a complex state space), so that
%\[
%\begin{split}
%&\prob{x\in S_{i+j}\mid x\in S_{i+j-1}, x\in S_{i+j-2},\ldots,x\in S_i}\\
%&\quad = \prob{x\in S_{i+j}\mid x\in S_{i+j-1}} 
%\end{split}
%\]
%for $1\le j\le k-i$.
and \eqref{eq:expratio} follows immediately from \eqref{eq:gttbs1}.
%We have used the fact that, as discussed in the next section, the sequence of samples can be represented as a Markov process (with a complex state space).

To choose $q$, we reason as follows. Suppose that the sample size equals the target value $n$ and we are about to process batch $\xB_k$. Prior to incrementing the ages and processing the arriving batch, the ages in the sample range from $\alpha_{0,k-1}=t_{k-1}$ down to $\alpha_{k-1,k-1}=0$, with an expected number $n\phi_{i,k-1}$ of sample items belonging to batch $\xB_i$, where $\phi_{i,k-1}=f(\alpha_{i,k-1})/\sum_{j=0}^{k-1}f(\alpha_{j,k-1})$. Thus the expected number of  $\xB_i$ items removed prior to processing $\xB_k$ is $n\phi_{i,k-1}(1-p_{i,k})$, where $p_{i,k}$ is defined in \eqref{eq:defpik}. Summing over all batches $\xB_i$, we find, after some algebra, that the expected total number of removed items is $n\gamma_k$, where 
\begin{equation}\label{eq:defgammak}
\gamma_k=\sum_{i=0}^{k-1}\phi_{i,k-1}(1-p_{i,k})=1-\frac{\sum_{i=0}^{k-1}f(\alpha_{i,k})}{\sum_{i=0}^{k-1}f(\alpha_{i,k-1})}
\end{equation}
for $k\ge 1$. On the other hand, the expected number of items entering the sample is $q_kb$ (where we initially allow $q$ to depend on $k$). For $n$ to be an equilibrium point, we equate the expected inflow and outflow and solve for $q_k$ to obtain $q_k=n\gamma_k/b$. If \begin{equation}\label{eq:gammaconverge}
\lim_{k\to\infty}\gamma_k=\gamma
\end{equation}
for some $\gamma\in (0,1]$, then $\lim_{k\to\infty}q_k= n\gamma/b$. In light of \eqref{eq:gttbs1}, we see that by setting $q=n\gamma/b$, we ensure that \eqref{eq:expratio} holds at all times, and that the sample size $n$ is asymptotically an equilibrium point as $k$, the number of batches processed, becomes large. (See Section~\ref{sec:properties} for a formal statement and proof.) Note that, even if we always accept all items in an arriving batch (i.e., $q=1$) but the resulting expected inflow $b$ is less than the expected outflow $n\gamma$, the sample will consistently fall below $n$, and so we require that $b\ge n\gamma$.

Because we are assuming that $t_i=i\Delta$ for $i\ge 1$, we can easily derive a necessary and sufficient condition for \eqref{eq:gammaconverge} to hold. Writing $f_i=f(i\Delta)$, we have $\gamma_k=(f_0-f_k)/F_{k-1}$, where $F_j=\sum_{i=0}^j f_i$ for $j\ge 0$. Thus, if 
%\begin{equation}\label{eq:fConverge}
$F_\infty=\sum_{i=0}^\infty f_i<\infty$,
%\end{equation}
then \eqref{eq:gammaconverge} holds with $\gamma= f_0/F_\infty=1/F_\infty$, so that $q=n/(bF_\infty)$. Necessity follows from the fact that $\lim_{k\to\infty}f_k=0$ by assumption. For polynomial decay with $f(\alpha)=1/(1+\alpha)^s$, we have $\gamma=\zeta(s,1/\Delta)/\Delta^s$, where $\zeta$ is the Hurwitz zeta function. If the decay function is exponential, i.e., $f(\alpha)=\exp(-\lambda \alpha)$, and we choose a time scale so that $\Delta=1$, then a simple calculation shows that $\gamma_k\equiv 1-\exp(-\lambda)$ and $p_{i,k}\equiv\exp(-\lambda)$ for $k\ge i\ge 1$, and we obtain the T-TBS algorithm for exponential decay as described in \cite{HentschelHT18}. Here $n$ is an equilibrium point for every value of $k$, and not merely in an asymptotic sense as $k\to\infty$. For this special case, we do not need to maintain the arrival timestamp for each item, and therefore do not need to partition the sample items based on arrival time. If we further assume that $q=1$, then we obtain the B-TBS algorithm as a special case in which the equilibrium sample size is $b/\gamma=b/(1-e^{-\lambda})$, which is completely determined by $b$ and $\lambda$. For complex functions $f$, we can compute $\gamma$ numerically.}

\begin{algorithm}[ht]
\caption{Targeted-size TBS (T-TBS)}\label{alg:targsampB}
{\footnotesize
$f$: decay function\;
$n$: target sample size\;
$b$: assumed mean batch size such that $b\ge n\gamma$\;
\BlankLine
Initialize: $S\gets \emptyset$; $\gamma=\textsc{Gamma}(f)$; $q\gets n\gamma/b$\;\label{ln:init}
\For{$k\gets 1,2,\ldots$}{
  \For(\Comment*[f]{update current sample items} ){$H_i\in S$}{
    $p=f(\alpha_{i,k})/f(\alpha_{i,k-1})$\Comment*[r]{retention probability}\label{ln:defp}
    $m \gets \textsc{Binomial}(|H_i|,p)$\Comment*[r]{simulate $|H_i|$ trials}\label{ln:binoma}
    \If{$m>0$}{
      $H_i\gets \textsc{Sample}(H_i,m)$\Comment*[r]{retain $m$ random elements}
    }
    \Else(\Comment*[f]{$H_i$ is now empty}){$S\gets S\setminus \{H_i\}$}
    }
 $l \gets \textsc{Binomial}(|\xB_k|,q)$\;\label{ln:binomb}
 $H_k \gets \textsc{Sample}(\xB_k,l)$\Comment*[r]{downsample new batch}
 \If(\Comment*[f]{insert new items}){$|H_k|>0$}{$S\gets S\cup \{H_k\}$}\label{ln:taccept}
 output $S$}
 }
\end{algorithm}

\revision{The resulting sampling scheme is given as Algorithm~\ref{alg:targsampB}; it precisely controls inclusion probabilities in accordance with \eqref{eq:expratio} while constantly pushing the sample size toward the target value $n$.} We represent a sample $S$ as a collection of sets $H_i$, where $H_i$ is the set of sample items that arrived at time $t_i$; thus $H_i\subseteq\xB_i$. The function $\textsc{Gamma}$ in line~\ref{ln:init} computes the constant $\gamma$ defined above. \revision{Conceptually,} at each time $t_k$, T-TBS first downsamples the current sample by independently flipping a coin for each item. The retention probability for an item depends on its age; specifically, an item $x\in\xB_i$ is retained with probability $p_{i,k}$.
T-TBS then downsamples the arriving batch $\xB_k$ via independent coin flips; an item in $\xB_k$ is inserted into the sample with probability $q$. As with B-TBS, the algorithm efficiently simulates multiple coin flips by directly generating the binomially distributed number of successes; thus the functions $\textsc{Binomial}(j,r)$ and $\textsc{Sample}(A,m)$ are defined as before.

%\begin{remark}
%When batch arrivals are unevenly spaced and the sequence of arrival times is not known in advance, we cannot compute $\gamma$ prior to sampling. In this case,  an approximate version of T-TBS downsamples an arriving batch $\xB_k$ at the rate $q_k=n\gamma_k/b$, where $\gamma_k$ is defined in \eqref{eq:defgammak}. The relation in \eqref{eq:expratio} no longer holds exactly, but rather 
%\[
%\prob{x\in S_k}/\prob{y\in S_k}=\biggl(\frac{q_i}{q_j}\biggr)\frac{f(\alpha_{i,k})}{f(\alpha_{j,k})}
%\]
%for arbitrary times $t_i\le t_j\le t_k$ and items $x\in\xB_i$ and $y\in\xB_j$. As time progresses, the arrival timestamps for items in the sample will increase, since older items are eventually removed. Thus for large $k$ we have $q_i/q_j\approx q/q=1$, and the algorithm satisfies \eqref{eq:expratio} asymptotically.
%\end{remark}

\begin{remark}
The constraint that $b\ge n\gamma$ may lead to an inconveniently large required mean batch size. Intuitively, the problem is that an item's weight can become too small too quickly, even for polynomial decay. For instance, with $f_i=1/(1+i)^2$, all items lose three fourths of their weight going from age~0 to age~1. For subexponential decay functions $f$, we can deal with this issue by using a shifted decay function $f^{(d)}_i=f_{d+i}/f_d$, where $d$ is a positive integer. By choosing $d$ sufficiently large, the corresponding value of $\gamma^{(d)}=f^{(d)}_0/\sum_{i=0}^\infty f^{(d)}_i=f_d/\sum_{i=d}^\infty f(i)$ can be made as small as desired. For instance, with $\Delta=1$ and $f_i=1/(1+i)^2$, we have $\gamma\approx 0.61$, whereas $\gamma^{(3)}\approx 0.22$. Of course, the original constraint requiring that $\prob{x\in S_k}/\prob{y\in S_k}=f_i/f_j$ for $x\in\xB_i$ and $y\in\xB_j$ is now modified to require that $\prob{x\in S_k}/\prob{y\in S_k}=f_{d+i}/f_{d+j}$, so that the relative inclusion probabilities have essentially the same ``tail behavior'' as $f$ for large $i$ and $j$, but the initial decay rate will be slower. This trick will not work for exponential decay, because here $f^{(d)}_i=e^{-\lambda d}f_i$, which implies that $\gamma^{(d)}\equiv\gamma$ for $d\ge 1$. In this case we must select $\lambda$ small enough to accommodate the mean batch size. Thus non-exponential decay functions allow an additional degree of freedom when parameterizing the sampling algorithm. For superexponential decay, shifting will actually increase $\gamma$ but, as discussed previously, such decay functions are of less practical interest. Over a broad range of experiments, quadratic decay with a shift of $d=10$ yielded superior ML robustness results for both T-TBS and R-TBS, and we often use this variant in our experiments (Section~\ref{sec:exp}).
\end{remark}

\subsection{Sample-Size Properties}\label{sec:properties}

\revision{We now analyze the sample size behavior of T-TBS, which directly impacts memory requirements, efficiency of memory usage, and ML model retraining time. We continue to assume that the batch sizes $\{|\xB_k|\}_{k\ge 1}$ are i.i.d.\ with common mean $b\in [n\gamma,\infty)$. Our first result (Theorem~\ref{th:recurr}) describes the probabilistic behavior of the sample size $C_k=\sum_{H_i\in S_k} |H_i|$ for a fixed time $t_k$. Specifically, we give approximate expressions for the mean and variance of $C_k$ when $k$ is large. We also use Hoeffding's inequality to give exponential bounds valid for any $k$, showing that the probability of a very large deviation above or below the target value~$n$ at any given time $t_k$ is very low.  The proof of the theorem (and of most other results in the paper) is given in Appendix~\ref{sec:proofs}. Denote by $\bbar\ge 1$ the maximum possible batch size, so that $\prob{B\le\bbar}=1$. Recall that $F_k\triangleq\sum_{i=0}^k f_i$, and set $F^{(2)}_k=\sum_{i=0}^kf^2_i$.}

\begin{theorem}\label{th:recurr}
For any decay function $f$ such that $F_\infty<\infty$,
\begin{enumerate}\parskip=0pt
\item[(i)] $\mean[C_k]=nF_{k-1}/F_\infty\uparrow n$ as $k\to\infty$;
\item[(ii)] $\var[C_k]\to bqF_\infty - bq^2F^{(2)}_\infty$;
\item[(iii)] if $\bbar<\infty$, then
\begin{enumerate}
\item[(a)] $\prob{C_k\ge (1+\eps)n}\le e^{-O(kn^2\eps^2)}$ for $\eps,k>0$ 
and
\item[(b)] $\prob{C_k \le (1-\eps)n}\le e^{-O(kn^2)}$ for $\eps\in(0,1)$ and sufficiently large $k$.
\end{enumerate}
\end{enumerate}
\end{theorem}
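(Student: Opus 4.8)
The plan is to track, for each batch index $i\le k$, the number $N_{i,k}$ of items of $\xB_i$ still present in $S_k$, so that $C_k=\sum_{i=1}^k N_{i,k}$, and to establish two structural facts: conditionally on the batch size $|\xB_i|$, the variable $N_{i,k}$ is binomial$(|\xB_i|,qf_{k-i})$, and the variables $N_{1,k},\dots,N_{k,k}$ are mutually independent. For the conditional law, I would use the equivalence between the $\textsc{Binomial}/\textsc{Sample}$ operations of Algorithm~\ref{alg:targsampB} and per-item coin flipping (Appendix~\ref{sec:batchproofs}): an item $x\in\xB_i$ lies in $S_k$ iff it is accepted at $t_i$ (probability $q$) and then independently retained at each of $t_{i+1},\dots,t_k$, which by the telescoping product already evaluated in \eqref{eq:gttbs1} happens with probability $qf(\alpha_{i,k})=qf_{k-i}$; since these events are independent across the items of $\xB_i$, the conditional binomial claim follows. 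Independence over $i$ holds because $N_{i,k}$ depends only on $|\xB_i|$ and the coin flips attached to items of $\xB_i$, and these inputs are independent and disjoint over $i$.

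Given this, part~(i) is linearity of expectation: $\mean[N_{i,k}]=bqf_{k-i}$, so $\mean[C_k]=bq\sum_{j=0}^{k-1}f_j=bqF_{k-1}$, and substituting $q=n\gamma/b=n/(bF_\infty)$ gives $\mean[C_k]=nF_{k-1}/F_\infty$; since the $f_j$ are nonnegative, $F_{k-1}\uparrow F_\infty$, hence $\mean[C_k]\uparrow n$. For part~(ii), independence gives $\var[C_k]=\sum_{i=1}^k\var[N_{i,k}]$, and the law of total variance (conditioning on $|\xB_i|$) gives $\var[N_{i,k}]=bqf_{k-i}(1-qf_{k-i})+q^2f_{k-i}^2\var[|\xB_1|]$. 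Summing and letting $k\to\infty$---valid since $F_{k-1}\uparrow F_\infty<\infty$ and $F^{(2)}_{k-1}\uparrow F^{(2)}_\infty\le F_\infty<\infty$---yields $\var[C_k]\to bqF_\infty-bq^2F^{(2)}_\infty+q^2F^{(2)}_\infty\var[|\xB_1|]$, which is the stated limit when the batch sizes are constant and otherwise carries the additional nonnegative term $q^2F^{(2)}_\infty\var[|\xB_1|]$ coming from batch-size fluctuations.

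Part~(iii) follows from Hoeffding's inequality applied to $C_k=\sum_{i=1}^k N_{i,k}$, a sum of $k$ independent variables each valued in $[0,\bbar]$: for every $t>0$, $\prob{|C_k-\mean[C_k]|\ge t}\le 2\exp\!\bigl(-2t^2/(k\bbar^2)\bigr)$. For~(a), part~(i) gives $\mean[C_k]\le n$, so $\{C_k\ge(1+\eps)n\}\subseteq\{C_k-\mean[C_k]\ge\eps n\}$ and Hoeffding with $t=\eps n$ gives the claimed exponential bound for every $k$. For~(b), fix $\eps\in(0,1)$: because $\mean[C_k]=nF_{k-1}/F_\infty\to n$, for all sufficiently large $k$ we have $\mean[C_k]\ge(1-\tfrac{\eps}{2})n$, so $\{C_k\le(1-\eps)n\}\subseteq\{C_k-\mean[C_k]\le-\tfrac{\eps}{2}n\}$ and Hoeffding with $t=\tfrac{\eps}{2}n$ gives the bound.

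The expectation and Hoeffding steps are routine; the part requiring the most care is the opening structural claim---verifying that the successive downsampling/$\textsc{Sample}$ operations compose so that an item's survival is exactly a product of independent thinnings and that no randomness is shared across batches---together with the law-of-total-variance bookkeeping in~(ii) that tracks how the batch-size distribution enters. I do not anticipate a serious obstacle beyond this.
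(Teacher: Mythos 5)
Your proposal follows the paper's proof essentially step for step: the same decomposition of $C_k$ into per-batch counts $N_{i,k}$, the same telescoping of the retention probabilities to conclude that $N_{i,k}$ is conditionally binomial with success probability $qf_{k-i}$, the same conditional-independence argument across batches, and the same Hoeffding application for part~(iii), including the $\delta$-shift of the mean in~(iii)(b). Two places where your more careful bookkeeping diverges from the paper are worth flagging, because in both the discrepancy lies with the theorem as stated rather than with your argument. First, in~(ii) your law-of-total-variance computation is right: the limit of $\var[C_k]$ carries the extra nonnegative term $q^2F^{(2)}_\infty\var[|\xB_1|]$, so the stated limit $bqF_\infty-bq^2F^{(2)}_\infty$ is exact only for deterministic batch sizes; the paper's proof records only $\var[N_{i,k}\mid B_{k-i}]=qf_i(1-qf_i)B_{k-i}$ and silently drops the $\var\bigl[\mean[N_{i,k}\mid B_{k-i}]\bigr]$ contribution. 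Second, your Hoeffding exponent is of order $\eps^2n^2/(k\bbar^2)$, with $k$ in the denominator, which is what a sum of $k$ independent terms each valued in $[0,\bbar]$ actually yields; the paper's proof writes $\exp(-2kn^2\eps^2/\bbar^2)$ and the theorem advertises $e^{-O(kn^2\eps^2)}$, with $k$ in the numerator. That placement cannot be correct: under the conditions of Theorem~\ref{th:recurGam}(iii) the sample-size chain is ergodic with a nondegenerate stationary distribution, so $\prob{C_k\ge(1+\eps)n}$ converges to a positive constant and no bound vanishing as $k\to\infty$ (for fixed $n$) can hold. So you should not assert that your inequality "gives the claimed exponential bound"; it gives the correct bound, which is exponentially small in $n$ but degrades, rather than improves, in $k$. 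Apart from these two points, which reflect errors in the paper's statement, your proof is correct and is the paper's proof.
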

Thus, from (i), the expected sample size converges to the target size $n$  as $t$ becomes large and, from~(ii), the variance also converges to a constant that depends on $b$ and $f$. By (iii), the probability that the sample size deviates from $n$ by more than $100\eps\%$ is exponentially small when $k$ or $n$ is large.

\begin{remark}
If $f_i$ decays very slowly as $i\to\infty$, then the convergence of the expected sample size to $n$ will also  be very slow. For example, if $f_i=1/(1+i\Delta)^s$ for some $s>1$, then, using (i) above and a standard bound,  it is easy to show that $n-\mean[C_k]=\Theta(1/k^{s-1})$. Thus choosing a value of, say, $s=1.0001$ will result in a long sequence of undersized samples. Similarly, if the sample size becomes overly large at some point, recovery will be slow.
\end{remark}

%\begin{remark}
%The requirement in Theorem~\ref{th:recurr}(ii) that $\bbar<\infty$ can sometimes be relaxed. The resulting exponential bounds then take on a more complex form that involves the cumulant generating function for the batch-size distribution. See \cite{longpaper} for an example in the case of exponential decay.
%\end{remark}

\revision{Theorem~\ref{th:recurr} does not tell the entire story. Although it follows from this theorem that, over many different sampling runs, the average sample size at a given  (large) time~$t_k$ is close to $n$ and the probability of being far away from $n$ is small, the successive sample sizes during an individual sampling run need not be well behaved. This issue is addressed by Theorem~\ref{th:recurGam} below. Assertion~(i) shows that any sample size can be attained with positive probability, so one potential type of bad behavior might occur if, with positive probability, the sample size is unstable in that it drifts off to $+\infty$ over time. Assertion~(ii) shows that such unstable behavior is ruled out if the maximum batch size is bounded and $f_i$ decays rapidly enough so that equation~\eqref{eq:sumifi} below holds. If $f_i$ decays even faster, so that equation~\eqref{eq:supratio} below holds, then the stability assertion can be strengthened to guarantee that the times between successive attainments of a given sample size are not only all finite, but all have the same finite mean; moreover, the average sample size---averaged over times $t_1,t_2,\ldots,t_k$---converges to $n$ with probability~1 as $k$ becomes large. On the negative side, it follows that, for a given sampling run, the sample size will repeatedly---though infrequently, since the expected sample size at any time point is finite---become arbitrarily large, even if the average behavior is good. This result shows that the sample-size control provided by T-TBS is incomplete, and thus motivates the more complex R-TBS algorithm.}

In the following, write ``i.o.'' to denote that an event occurs ``infinitely often'', i.e., for infinitely many values of $k$, and write ``w.p.1'' for ``with probability~1''. 
\begin{theorem}\label{th:recurGam}
Let the T-TBS decay function $f$ satisfy $F_\infty<\infty$ and let $\bbar$ be the maximum possible batch size. Then
\begin{enumerate}
\item[(i)] for all $m\ge 0$, there exists $k\ge 0$ such that $\prob{C_k\ge m}>0$;
\item[(ii)] if $\bbar<\infty$ and 
\begin{equation}\label{eq:sumifi}
\sum_{i=0}^\infty if_i<\infty,
\end{equation}
then $\prob{C_k=m\text{ i.o.}}=1$ for all $m\ge 0$; 
\item[(iii)] if $\bbar<\infty$ and, for $k\ge 1$,
\begin{equation}\label{eq:supratio}
\sup_{i\ge 0}(f_{i+k}/f_i)\le g_k
\end{equation}
for some sequence $\{g_k\}_{k\ge 0}$ with $\lim_{k\to\infty}g_k=0$, then (a) the expected times between successive visits to state~$m$ are uniformly bounded for any $m\ge 0$, and (b)
$\lim_{k\to\infty}(1/k)\sum_{i=0}^k C_i=n\text{ w.p.1}$.
\end{enumerate}
\end{theorem}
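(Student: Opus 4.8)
The plan rests on a distributional identity for $C_k$ together with an infinite-server-queue reading of the scheme. Because $t_i=i\Delta$, equation~\eqref{eq:gttbs1} says each item of $\xB_i$ lies in $S_k$ with probability $qf_{k-i}$, and since the downsamplings are independent per item, the count $N_{i,k}$ of $\xB_i$-items in $S_k$ is $\textsc{Binomial}(|\xB_i|,qf_{k-i})$ given $|\xB_i|$, with $N_{1,k},\dots,N_{k,k}$ mutually independent (each uses only batch $i$'s size and coin flips). Re-indexing by age $a=k-i$ and using that the $|\xB_i|$ are i.i.d., $C_k=\sum_i N_{i,k}$ has the same law as $T_k:=\sum_{a=0}^{k-1}\hat N_a$, where the $\hat N_a$ are independent, $\hat N_a\sim\textsc{Binomial}(\hat B_a,qf_a)$, and the $\hat B_a$ are i.i.d.\ copies of $|\xB_1|$; equivalently, $C_k$ is the number-in-system of a discrete-time infinite-server queue fed by i.i.d.\ batches whose members have i.i.d.\ ``service times'' $\xi$ with $\prob{\xi>a}=f_a$ and $\mean[\xi]=F_\infty<\infty$. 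Part~(i) is then immediate: pick $\beta\ge1$ with $\prob{|\xB_1|=\beta}>0$ (possible since $\mean|\xB_1|>0$) and take $k=m$; the positive-probability independent events $\{|\xB_i|=\beta\}$ and ``some item of $\xB_i$ is accepted and survives to $t_m$'' (conditional probability $1-(1-qf_{m-i})^\beta>0$), $i=1,\dots,m$, together force $C_m\ge m$.

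For part~(ii) I would establish $\limsup_k\prob{C_k=m}>0$ for every $m$ and then appeal to a tail $0$--$1$ law. The partial sums $T_k$ increase a.s.\ to $T_\infty:=\sum_{a\ge0}\hat N_a$, and $\mean[T_\infty]=q\mean|\xB_1|\,F_\infty=n<\infty$, so $T_\infty$ is a.s.\ finite and $C_k$ converges in distribution to $T_\infty$; being integer-valued, $\prob{C_k=m}\to\prob{T_\infty=m}$, and this limit is positive because realizing $T_\infty=m$ through a suitable value of the first few $\hat N_a$ with all later ones zero has positive probability (the product $\prod_a\prob{\hat N_a=0}$ is positive, since $\sum_a(1-\prob{\hat N_a=0})\le q\mean|\xB_1|\sum_a f_a<\infty$). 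Hence $\prob{C_k=m\text{ i.o.}}\ge\limsup_k\prob{C_k=m}>0$. Finally, the contribution of any fixed batch to $C_k$ vanishes a.s.\ as $k\to\infty$ (each item departs in finite time, as $f_a\to0$), so $\{C_k=m\text{ i.o.}\}$ agrees up to null sets with a tail event, and Kolmogorov's $0$--$1$ law raises its probability to $1$. (This uses only $F_\infty<\infty$; the hypotheses \eqref{eq:sumifi} and $\bbar<\infty$ do their work in~(iii).)

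For part~(iii) I would upgrade recurrence to positive recurrence via the uniformity in \eqref{eq:supratio}. With $g_k:=\sup_{i\ge0}f_{i+k}/f_i$ one checks that $g$ is nonincreasing and submultiplicative, $g_{k+l}\le g_kg_l$, with $g_k\to0$, hence $g$ decays geometrically; in particular $\prob{\xi>a+k\mid\xi>a}\le g_k$ for every $a$, so the residual lifetime of an item of any age has a geometric tail with a bound uniform in the age. Fixing $b_0$ with $g_{b_0}<1$, over any $b_0$ consecutive steps each of the $C_k$ current items survives with probability at most $g_{b_0}$ (independently), while at most $\bbar b_0$ new items enter (here $\bbar<\infty$ is used); hence the $b_0$-step skeleton of $C_k$ is stochastically dominated, in a path-wise coupling, by the chain $\tilde C_{j+1}=\textsc{Binomial}(\tilde C_j,g_{b_0})+\bbar b_0$ on $\mathbb{Z}_{\ge 0}$, which is positive recurrent with geometrically decaying stationary and return-time distributions. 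A routine coupling-and-excursion argument (from any state with $O(\bbar b_0)$ items, emptying within a fixed number of further steps has probability bounded below, thanks to the uniform geometric residual bound) then shows the age-profile chain returns to $\emptyset$, hence to $\{C=m\}$, with return times having all moments finite; this gives~(a). For~(b), $C_k$ is now regenerative at the visits to $\emptyset$, with cycle length and cycle reward $\sum_{\text{cycle}}C_i$ of finite first (indeed second) moment, so the renewal--reward SLLN gives $\frac1k\sum_{i=0}^kC_i\to\rho$ a.s.\ with $\rho=\mean_\emptyset[\sum_{\text{cycle}}C_i]/\mean_\emptyset[\text{cycle length}]$; and $\rho=n$, since $\frac1k\sum_{i=0}^k\mean[C_i]\to n$ by Ces\`aro averaging of Theorem~\ref{th:recurr}(i) and also $\to\rho$ by the elementary renewal theorem.

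The main obstacle is part~(iii): converting \eqref{eq:supratio} into a genuinely uniform (over all ages and states) geometric departure bound, and then running the regenerative argument with the moment estimates that the a.s.\ time-average convergence requires --- in particular coping with decay functions that have flat stretches, where $g_1=1$ and no per-step comparison is available, by passing to the $b_0$-step skeleton. The remaining points are routine: the tail-$\sigma$-field measurability check in~(ii), and the standing positivity $f_i>0$ (needed for the retention probabilities to be defined) that~(i) tacitly uses --- without it a finitely supported $f$ together with bounded batches would cap $C_k$, so that~(i) fails for large $m$.
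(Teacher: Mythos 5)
Your proposal is correct in its essentials but takes a genuinely different route from the paper's. The paper encodes the sample as an age-profile Markov chain $\{\xS_k\}$ and applies Meyn--Tweedie drift criteria with the Lyapunov function $V(s)=(\psi(s)-n)^2+\alphabar(s)$: hypothesis \eqref{eq:sumifi} is precisely what makes the age term $\mean_s[\alphabar(\xS_k)]$ uniformly bounded (via $\Gamma^*(\bbar)=\bbar^*\sum_i if_i$), yielding recurrence for~(ii), while \eqref{eq:supratio} supplies the uniform contraction $\inf_i r_{i,k}\ge 1-g_k$ needed for the stronger drift inequality giving positive recurrence, after which (iii)(b) follows from the Markov-chain ergodic theorem together with uniform integrability of $C_k$. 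You avoid the chain machinery for~(ii) entirely: the infinite-server representation gives $C_k\Rightarrow T_\infty$ with $\prob{T_\infty=m}>0$, and reverse Fatou plus the Kolmogorov $0$--$1$ law over the independent per-batch inputs (legitimate, since each fixed batch's contribution vanishes a.s., so $\{C_k=m\text{ i.o.}\}$ is a.s.\ a tail event) yields the conclusion using only $F_\infty<\infty$ --- a weaker hypothesis than the theorem's, though you prove only the sample-size statement whereas the paper's argument establishes recurrence of the full age-profile chain. For~(iii), your domination of the $b_0$-step skeleton by a subcritical branching-with-immigration chain, regeneration at the empty sample, and renewal--reward replaces the paper's strengthened drift computation and its appeal to the ergodic theorem; identifying the limit as $n$ by C\`esaro-averaging Theorem~\ref{th:recurr}(i) parallels the paper's use of $\mean[C_\infty]=n$. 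Two caveats are shared with (and not resolved by) the paper's own proof: both arguments tacitly require $f_i>0$ for all $i$ (you flag this), and both require either $q<1$ or $\prob{|\xB_1|=0}>0$ for small sample sizes to be attainable at all --- when $q=1$ and batches are a.s.\ nonempty, $C_k\ge 1$ always, so the paper's one-step-to-$\emptyset$ irreducibility claim and your positivity of $\prob{T_\infty=0}$ fail together; on the separate issue of flat stretches of $f$, your $b_0$-step skeleton is actually more careful than the paper's one-step transition argument.
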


\revision{The proof of Theorem~\ref{th:recurGam} rests on a reduced representation $\xS_k$ of the state $S_k$ of the sample at a time $t_k$, comprising a collection of pairs of the form $(n,i)$, where $n$ is the number of sample items of age~$\Delta i$. In Appendix~\ref{sec:proofs} we argue that the process $\{\xS_k\}_{k\ge 0}$ is a time-homogeneous Markov chain, and hence we can use tools from the theory of Markov chains to establish the ``recurrence'' properties that correspond to our stability results. The state space of this Markov chain is quite complex, as are the transition probabilities between states, so the application of these tools is decidedly nontrivial.}

\begin{remark}
Note that the assumptions on $f$ indeed become increasingly strong when going from Assertions~(i) to (iii). The condition in \eqref{eq:sumifi} trivially implies that $F_\infty<\infty$. Also, \eqref{eq:supratio} implies \eqref{eq:sumifi}. To see this, fix $k$ large enough so that $g_k<1$, and observe that, since $f_{i+k}\le g_kf_i$ for all $i$,
\[
\sum_{i=0}^\infty if_i\le\sum_{m=0}^{k-1}f_m\biggl(\sum_{i=0}^\infty (ki+m)g_k^i\biggr)<\infty.
\]
This increase in strength is strict: the decay function $f_i=1/(i\Delta+1)^2$ satisfies $F_\infty<\infty$ but not \eqref{eq:sumifi}, and the decay function $f_i=1/(i\Delta+1)^3$ satisfies \eqref{eq:sumifi} but not \eqref{eq:supratio}. The condition in \eqref{eq:sumifi}  holds, e.g., for exponential decay and for polynomial decay $f_i=1/(1+i\Delta)^s$ with $s>2$.  The condition in \eqref{eq:supratio}  holds, e.g., for functions that decay exponentially or faster.
\end{remark}

Even in the most stable case, however, we do not have complete control over the sample size. Indeed, any sample size~$m$, no matter how large, is exceeded infinitely often w.p.1 and the expected time between such incidents is uniformly bounded. Although the expected times are often very large, so that the incidents are infrequent, and the faster the decay, the faster the recovery from an incident, T-TBS is ultimately fragile with respect to sample size. This fragility is amplified when batch sizes fluctuate in a non-predicable way, as often happens in practice, and T-TBS can break down; see the experiments in Section~\ref{sec:samplesize}.

Despite the fluctuations in sample size, T-TBS is of interest because, when the mean batch size is known and constant over time, and when some sample overflows are tolerable, T-TBS is relatively simple to implement and parallelize, and is very fast (see Section~\ref{sec:exp}). For example, if the data comes from periodic polling of a set of robust sensors, the data arrival rate will be known a priori and will be relatively constant, except for the occasional sensor failure, and hence T-TBS might be appropriate. On the other hand, if data is coming from, e.g., a social network, then batch sizes may be hard to predict.

% !TEX root = tbs-tods.tex
% above command is for TeXShop

\section{Reservoir-Based TBS}\label{sec:tbsamp}

Targeted time-biased sampling (T-TBS) controls the decay rate but only partially controls the sample size, whereas batched reservoir sampling (B-RS) bounds the sample size but does not allow time biasing. Our new reservoir-based time-biased sampling algorithm (R-TBS) combines the best features of both, controlling the decay rate while ensuring that the sample never overflows. For exponential decay, R-TBS has optimal sample size and stability properties, and in the general case the user can trade off storage for both sample-size stability and accuracy. Importantly, unlike T-TBS, the R-TBS algorithm can handle any sequence of batch sizes.

\subsection{Item Weights and Latent Samples}\label{sec:latent}

R-TBS combines the use of a reservoir with the notion of ``latent samples" to enforce \eqref{eq:expratio} and bound the sample size.  Latent samples, in turn, rest upon the notion of ``item weights".

\textbf{Item weights:} In R-TBS, the \emph{weight} of an item of age $\alpha$ is given by $f(\alpha)$, where $f$ is the decay function; note that a newly arrived item has a weight of $f(0)=1$. As discussed later, R-TBS ensures that the probability that an item appears in the sample is proportional (or approximately proportional) to its weight. All items arriving at the same time have the same weight, so that the \emph{total weight} of all items seen up through time~$t_k$ is $W_k=\sum_{i=1}^k |\xB_i|f(\alpha_{i,k})$. \revision{For traditional (sequential or batch-oriented) reservoir sampling, an item does not decay, and so has a weight equal to 1 at all times; thus the notions of items and item weights coincide. Moreover, in the traditional setting the weight of a sample coincides with the number of items in the sample. In our generalized setting, the ``size'' (weight) of a sample and the number of items in the sample differ, with samples having fractional sizes. We handle this complication via the notion of a \emph{latent fractional sample}.}

\begin{algorithm}[tbh]
\caption{Generating a sample from a latent sample}\label{alg:getSample}
{\footnotesize
$L=(A,\pi,C)$: latent sample\;
\BlankLine
$U\gets\textsc{Uniform}()$\;
\leIf{$U\le\frc(C)$}{$S\gets A\cup\pi$}{$S\gets A$}
\Return $S$
}
\end{algorithm}

\textbf{Latent samples:} A latent fractional sample  formalizes the idea of a sample of fractional size. Formally, given a set $U$ of items, a \emph{latent sample} of $U$ with \emph{sample weight} $C$ is a triple $L=(A,\pi,C)$, where $A\subseteq U$ is a set of $\floor{C}$ \emph{full} items and $\pi\subseteq U$ is a (possibly empty) set containing at most one \emph{partial} item; $\pi$ is nonempty if and only if $C>\floor{C}$.

\begin{figure}[tbh]
       \centering
       \includegraphics[width=0.3\linewidth]{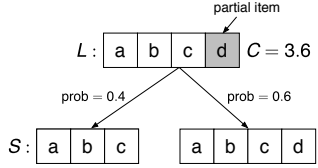}
       \caption{Latent sample $L$ (sample weight $C=3.6$) and possible realized samples}
       \label{fig:latentsamp}
\end{figure}

\revision{R-TBS maintains a latent sample $L$ over time and produces an actual sample $S$ from $L$ on demand by sampling as described in Algorithm~\ref{alg:getSample}}; see Figure~\ref{fig:latentsamp} for an example. In the pseudocode, $\frc(x)=x-\floor{x}$ and the function $\textsc{Uniform}()$ generates a random number uniformly distributed on $[0,1]$. Because each full item is included with probability~1 and the partial item is included with probability $\frc(C)$, we have
\begin{equation}\label{eq:meansize}
\mean[|S|]=\ceil{C}\frc(C)+\floor{C}\bigl(1-\frc(C)\bigr)
=(\ceil{C}-\floor{C})\frc(C)+\floor{C}
=\frc(C)+\floor{C}=C,
\end{equation}
so that the size of $S$ equals $C$ in expectation. By allowing at most one partial item, we minimize the latent sample's footprint: $|A\cup\pi|\le\floor{C}+1$. Importantly, if the weight $C$ of a latent sample $L$ is an integer, then $L$ contains no partial item, and the sample $S$ generated from $L$ via Algorithm~\ref{alg:getSample} is unique and contains exactly $C$ items; \revision{thus, the sample weight and the number of sample items coincide in this case. We now describe two key operations on latent samples that are used by R-TBS.}

\textbf{Downsampling:} Besides extracting an actual sample from a latent sample, another key operation on latent samples is \emph{downsampling}. For $\theta\in(0,1)$, the goal of downsampling $L=(A,\pi,C)$ is to obtain an new latent sample $L'=(A',\pi',\theta C)$ such that, if we generate $S$ and $S'$ from $C$ and $C'$ via Algorithm~\ref{alg:getSample}, we have
\begin{equation}\label{eq:downsample}
\prob{x\in S'}=\theta\prob{x\in S}
\end{equation}
for all $x\in S$. \revision{Thus the appearance probability for each item in $S$, as well as the expected size of the sample $\mean[S]$, is scaled down by a factor of $\theta$.} Theorem~\ref{th:downsamp} (later in this section) asserts that Algorithm~\ref{alg:downsamp} satisfies this property. 

\begin{algorithm}[t]
\caption{Downsampling}\label{alg:downsamp}
{\footnotesize
$L=(A,\pi,C)$: input latent sample\;
$\theta$: scaling factor with $\theta\in(0,1)$\;
\BlankLine
$U\gets \textsc{Uniform}()$; $C'=\theta C$\;
\uIf(\Comment*[f]{no full items retained}){$\floor{C'}=0$\label{ln:no_full}}{
    \If{$U>\frc(C)/C$}{$(A',\pi')\gets\textsc{Swap1}(A,\pi)$}\label{ln:swap0}
    $A'\gets\emptyset$\;\label{ln:killA}
}
\uElseIf(\Comment*[f]{no items deleted}){$0<\floor{C'}=\floor{C}$\label{ln:no_delete}}{
    \If{$U>\bigl(1-\theta\frc(C)\bigr)/\bigl(1-\frc(C')\bigr)$\label{ln:no_swap}}{
        $(A',\pi') \gets\textsc{Swap1}(A,\pi)$\;\label{ln:convert}
        }
}
\Else(\Comment*[f]{items deleted: $0<\floor{C'}<\floor{C}$}){
    \eIf{$U\le \theta\frc(C)$\label{ln:normal}}{
        $A' \gets \textsc{Sample}(A,\floor{C'})$\;\label{ln:swapA}
        $(A',\pi')\gets\textsc{Swap1}(A',\pi)$\;\label{ln:swapB}
        }{
        $A' \gets \textsc{Sample}(A,\floor{C'}+1)$\;\label{ln:smpl}
        $(A',\pi')\gets\textsc{Move1}(A',\pi)$\;\label{ln:move}
        }      
}
\If(\Comment*[f]{no fractional item}){$C'=\floor{C'}$}{
    $\pi'\gets\emptyset$\;
    }
}
\Return $L'=(A',\pi',C')$
\end{algorithm}

In the pseudocode for Algorithm~\ref{alg:downsamp}, the subroutine $\textsc{Swap1}(A,\pi)$ moves a randomly selected item from $A$ to $\pi$ and moves the current item in $\pi$ (if any) to $A$. Similarly, $\textsc{Move1}(A,\pi)$ moves a randomly selected item from $A$ to $\pi$, replacing the current item in $\pi$ (if any). More precisely, $\textsc{Swap1}(A,\pi)$ executes the operations $I\gets \textsc{Sample}(A,1)$, $A\gets (A\setminus I)\cup\pi$, and $\pi\gets I$,
and $\textsc{Move1}(A,\pi)$ executes the operations $I\gets \textsc{Sample}(A,1)$, $A\gets A\setminus I$, and $\pi\gets I$.

\begin{figure*}[tbh]
        \centering
	\subfigure[$C=3\rightarrow C'=1.5$]{
	   \label{fig:downsamp1}\includegraphics[width=0.45\linewidth]{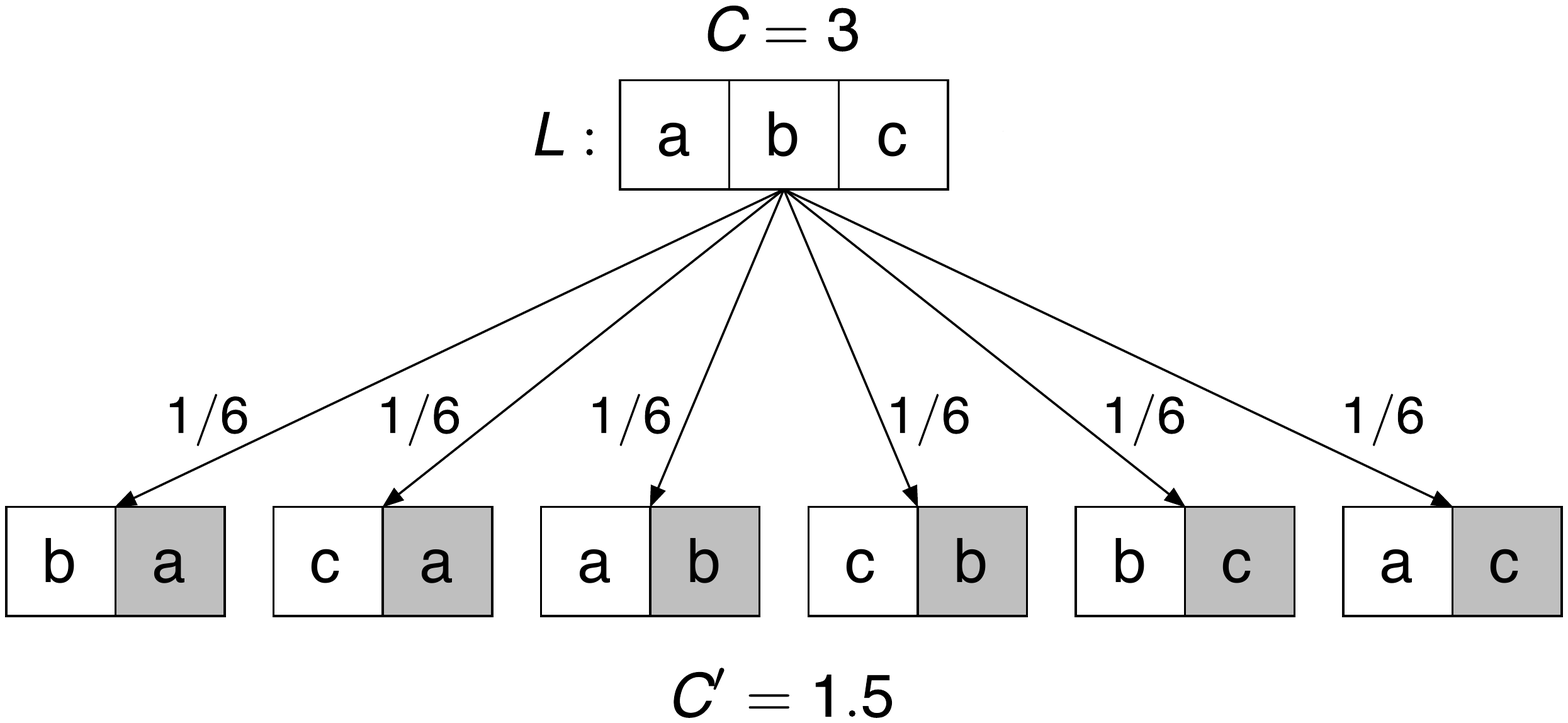}} \hspace{0.02\textwidth}
	\subfigure[$C=3.2\rightarrow C'=1.6$]{
	   \label{fig:downsamp2}\includegraphics[width=0.45\linewidth]{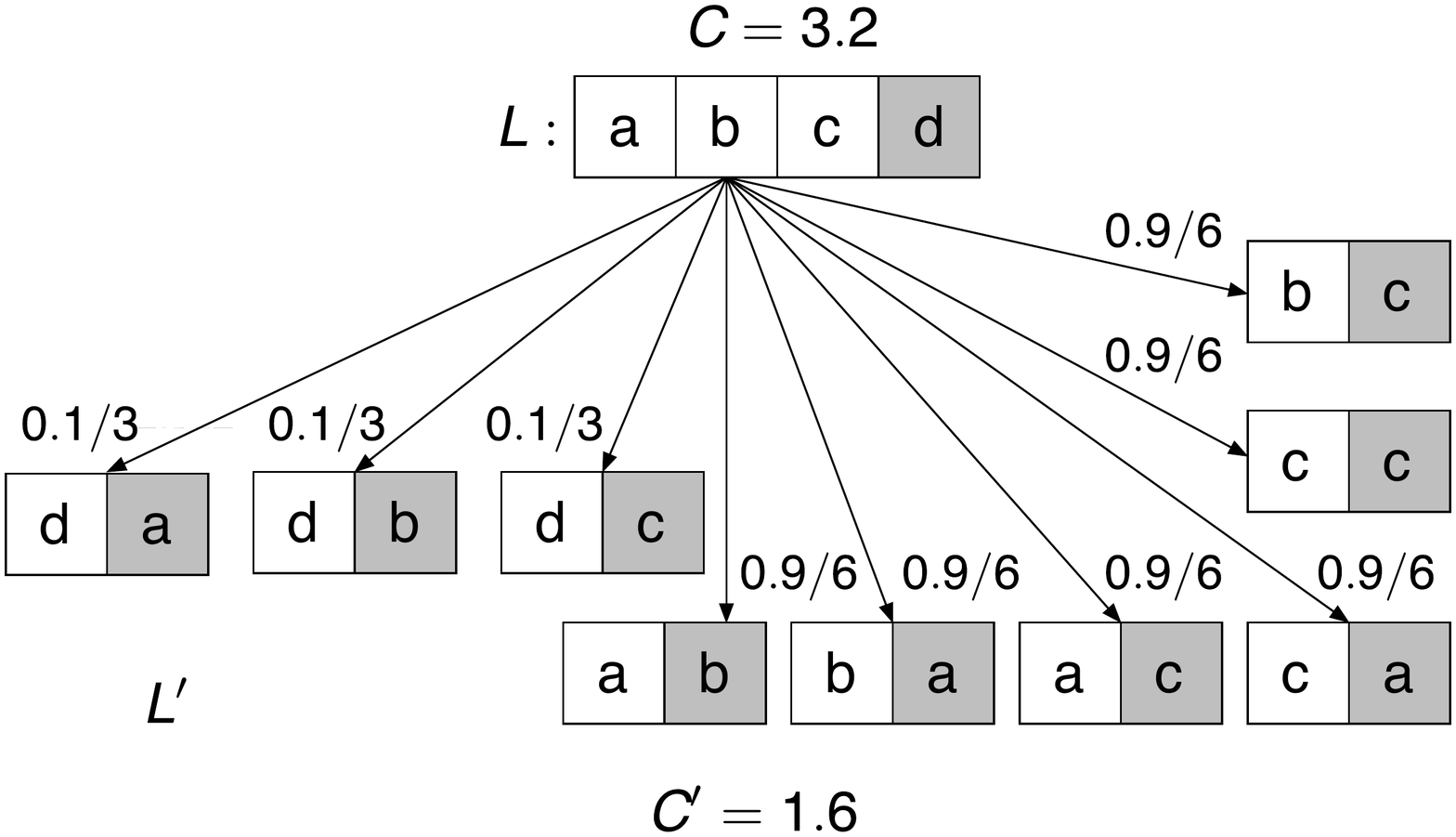}} \hspace{0.02\textwidth}
	\subfigure[$C=2.4\rightarrow C'=0.4$]{
	   \label{fig:downsamp3}\includegraphics[width=0.2\linewidth]{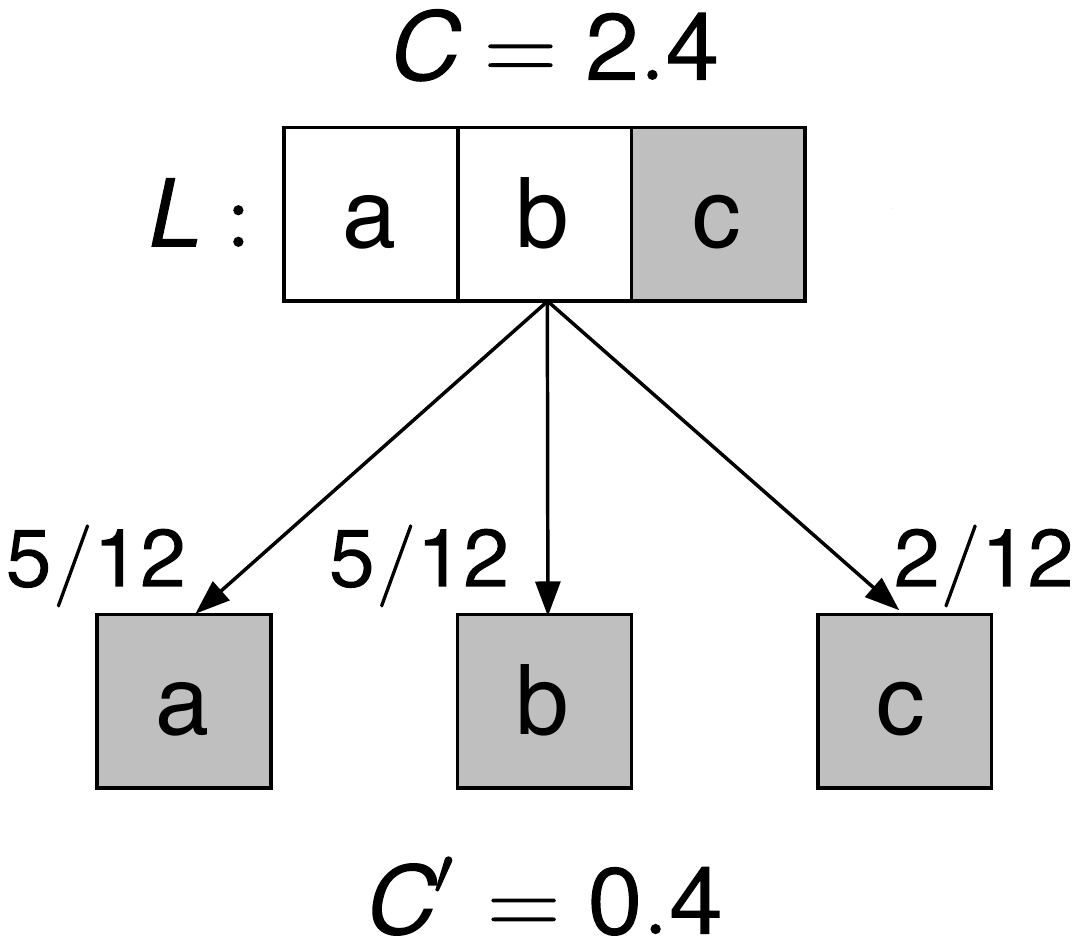}} \hspace{0.02\textwidth}
	\subfigure[$C=2.4\rightarrow C'=2.1$]{
	   \label{fig:downsamp4}\includegraphics[width=0.35\linewidth]{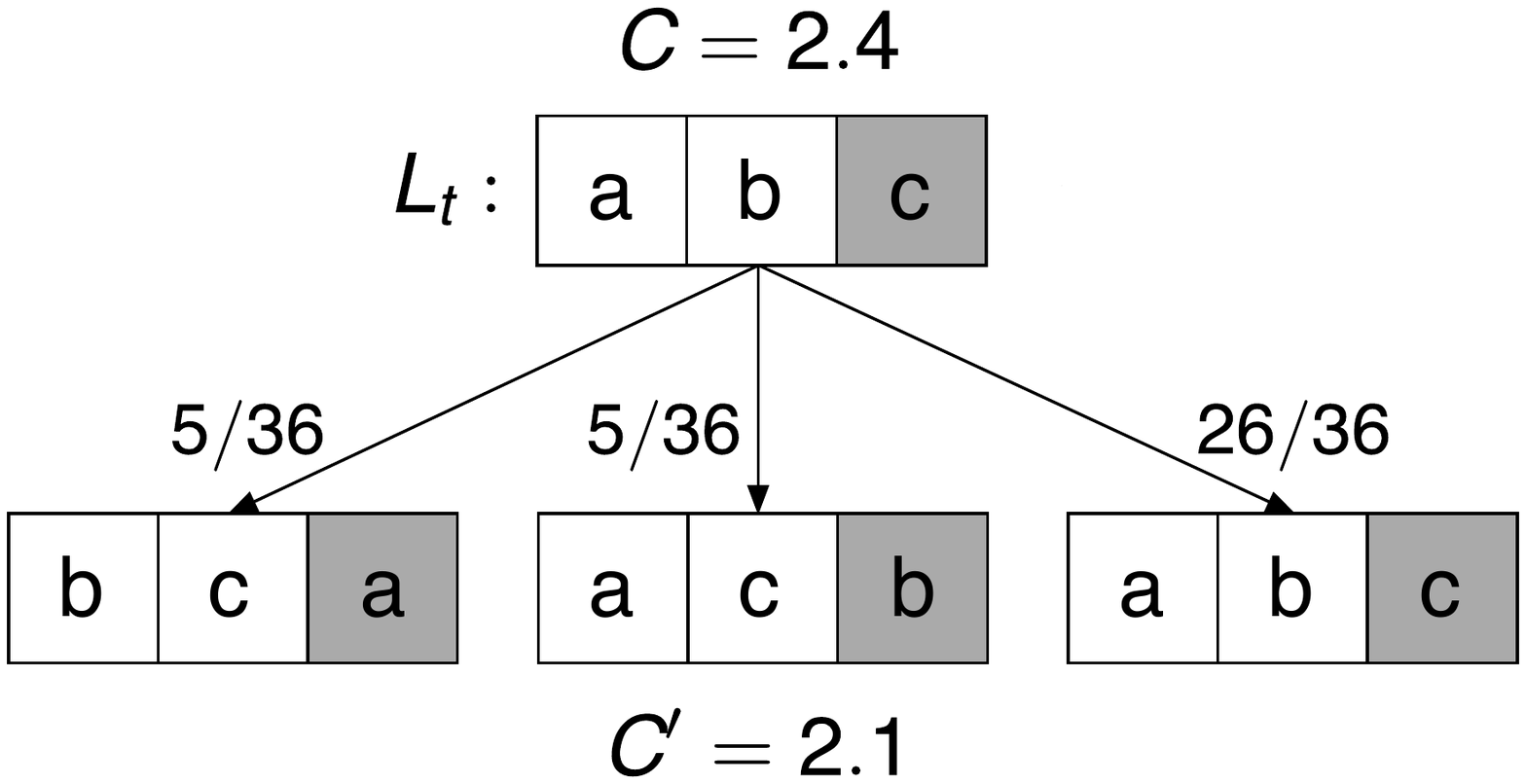}} 
\caption{\label{fig:downsampling}Downsampling examples}
\end{figure*}

To gain some intuition for why the algorithm works, consider a simple special case, where the goal is to form a latent sample $L' = (A',\pi', \theta C)$ from a latent sample $L=(A,\pi,C)$ of integral size $C$; that is, $L$ comprises exactly $C$ full items. Assume that $C'=\theta C$ is non-integral, so that $L'$ contains a partial item, and that $\floor{C}>\floor{C'}$; e.g., $C=3$ and $C'=1.5$, so that $\theta=0.5$. In this case, we simply select an item at random (from $A$) to be the partial item in $L'$ and then select $\floor{C'}$ of the remaining $C -1$ items at random to be the full items in $L'$; see Figure~\ref{fig:downsamp1}. Denote by $S$ and $S'$ the samples obtained from $L$ and $L'$ via Algorithm~\ref{alg:getSample}. By symmetry, each item $i\in L$ is equally likely to be included in $S'$, so that the inclusion probabilities for the items in $L$ are all scaled down by the same fraction, as required for \eqref{eq:downsample}. In Figure~\ref{fig:downsamp1}, for example, item $a$ appears in $S$ with probability~1 since it is a full item. In $S'$, where the weights have been reduced by 50\%, item $a$ (either as a full or partial item, depending on the random outcome) appears with probability $2\cdot(1/6)+2\cdot(1/6)\cdot 0.5=0.5$, as expected. This scenario corresponds to lines~\ref{ln:smpl} and \ref{ln:move} in the algorithm, where we carry out the above selections by randomly sampling $\floor{C'}+1$ items from $A$ to form $A'$ and then choosing a random item in $A'$ as the partial item by moving it to $\pi$.

In the case where $L$ contains a partial item~$x^*$ that appears in $S$ with probability $\frc(C)$, it follows from \eqref{eq:downsample} that $x^*$ should appear in $S'$ with probability $p=\theta P[x^*\in S]=\theta \frc(C)$. Thus, with probability~$p$, lines~\ref{ln:swapA}--\ref{ln:swapB} retain $x^*$ and convert it to a full item so that it appears in $S'$. Otherwise, in lines~\ref{ln:smpl}--\ref{ln:move}, $x^*$ is removed from the sample when it is overwritten by a random item from $A'$; see Figure~\ref{fig:downsamp2}. Again, a new partial item is chosen from $A$ in a random manner to uniformly scale down the inclusion probabilities. For instance, in Figure~\ref{fig:downsamp2}, item $d$ appears in $S$ with probability 0.2 (because it is a partial item) and in $S'$, appears with probability $3\cdot (0.1/3)=0.1$. Similarly, item $a$ appears in $S$ with probability~1 and in $S'$ with probability $(1.8)/6+0.6\cdot (1.8/6)+0.6\cdot(0.1/3)=0.5$.

The if-statement in line~\ref{ln:no_full} corresponds to the corner case in which $L'$ does not contain a full item. The partial item $x^*\in L$ either becomes full or is swapped into $A'$ and then immediately ejected; see Figure~\ref{fig:downsamp3}.

The if-statement in line~\ref{ln:no_delete} corresponds to the case in which no items are deleted from the latent sample, e.g., when $C=4.7$ and $C'=4.2$. In this case, $x^*$ either becomes full by being swapped into $A'$ or remains as the partial item for $L'$. Denoting by $\rho$ the probability of \emph{not} swapping, we have $P[x^*\in S'] = \rho\cdot\frc(C') + (1-\rho)\cdot 1$. On the other hand, \eqref{eq:downsample} implies that $P[x^*\in S']=\theta\frc(C)$. Equating these expression shows that $\rho$ must equal the expression on the right side of the inequality on line~\ref{ln:no_swap}; see Figure~\ref{fig:downsamp4}.

\begin{theorem}\label{th:downsamp}
For $\theta\in(0,1)$, let $L'=(A',\pi',\theta C)$ be the latent sample produced from a latent sample~$L=(A,\pi,C)$ via Algorithm~\ref{alg:downsamp}, and let $S'$ and $S$ be samples produced from $L'$ and $L$ via Algorithm~\ref{alg:getSample}. Then $\prob{x\in S'}=\theta\prob{x\in S}$ for all $x\in A\cup\pi$.
\end{theorem}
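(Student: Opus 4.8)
The plan is to verify the identity $\prob{x\in S'}=\theta\prob{x\in S}$ directly, by conditioning on the single internal coin flip $U$ of Algorithm~\ref{alg:downsamp} and on the uniform sub-sampling it performs. Two reductions make this tractable. First, by Algorithm~\ref{alg:getSample}, for any latent sample $(A,\pi,C)$ the generated sample contains each item of $A$ with probability~$1$ and the (at most one) item of $\pi$ with probability~$\frc(C)$; moreover, since the generating flip is independent of the realized sets $A'$ and $\pi'$ and since $C'=\theta C$ is a deterministic function of the inputs, $\prob{x\in S'}=\prob{x\in A'}+\frc(C')\prob{x\in\pi'}$. So it suffices to show that this quantity equals $\theta$ for every $x\in A$ and equals $\theta\frc(C)$ for the item $x^*\in\pi$ (when $\pi\ne\emptyset$). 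Second, Algorithm~\ref{alg:downsamp} only ever samples \emph{uniformly} from $A$ --- through \textsc{Sample}, \textsc{Swap1}, and \textsc{Move1} --- so $\prob{x\in A'}$ and $\prob{x\in\pi'}$ take a common value over all $x\in A$; hence I only need to track two representative items, an arbitrary $x\in A$ and the partial item $x^*$.

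With these reductions, the proof proceeds by the three branches of the algorithm: (i) $\floor{C'}=0$; (ii) $0<\floor{C'}=\floor{C}$ (no items deleted); and (iii) $0<\floor{C'}<\floor{C}$ (items deleted); these are exhaustive since $C'<C$ forces $\floor{C'}\le\floor{C}$. In each branch I would condition on $U$ relative to the branch's threshold, read the resulting $(A',\pi')$ off the pseudocode, and compute $\prob{x\in A'}$ and $\prob{x\in\pi'}$ for the two representatives by elementary counting over the uniform sub-samples. For example, in branch~(iii) on the event $\{U\le\theta\frc(C)\}$ we draw $\floor{C'}$ items of $A$ and then \textsc{Swap1} promotes $x^*$ into $A'$ while demoting a uniformly random drawn item to $\pi'$; on the complementary event we draw $\floor{C'}+1$ items, \textsc{Move1} discards $x^*$, and a uniformly random drawn item becomes $\pi'$. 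Averaging the two events and simplifying gives $\prob{x^*\in S'}=\theta\frc(C)$ and $\prob{x\in S'}=(C'-\theta\frc(C))/\floor{C}$, and the latter collapses to $\theta$ via $C'-\theta\frc(C)=\theta(C-\frc(C))=\theta\floor{C}$. Branches~(i) and (ii) are analogous: in each the threshold in the pseudocode is exactly the ``not-swap'' probability $\rho$ forced by demanding $\prob{x^*\in S'}=\theta\frc(C)$ (the computation sketched just before the theorem), and one then checks that this same $\rho$ simultaneously yields $\prob{x\in S'}=\theta$ for $x\in A$.

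The algebra throughout rests on $C'=\theta C$, $C-\frc(C)=\floor{C}$, and --- in the no-deletion branch, where $\floor{C'}=\floor{C}$ --- the identity $\theta\frc(C)-\frc(C')=\floor{C}(1-\theta)\ge 0$; this last identity also certifies that the branch-(ii) threshold $(1-\theta\frc(C))/(1-\frc(C'))$ lies in $[0,1]$, and similarly $\frc(C)/C$ and $\theta\frc(C)$ lie in $[0,1]$, so every ``probability'' the algorithm uses is legitimate. I expect the real work to be not any single computation but the disciplined handling of degenerate configurations: $\pi=\emptyset$ (i.e.\ $C$ integral, where the $x^*$ claim is vacuous), $C'$ integral (where the final line of Algorithm~\ref{alg:downsamp} sets $\pi'=\emptyset$, so the $\frc(C')\prob{x\in\pi'}$ term vanishes), and $C<1$. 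The saving grace is that the closed-form expressions obtained in the three main branches remain valid verbatim in these corner cases --- e.g.\ the $\frc(C')=0$ term simply drops out --- so each can be dispatched with a one-line remark rather than a separate argument.
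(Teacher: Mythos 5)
Your proposal is correct and follows essentially the same route as the paper's proof: a branch-by-branch analysis of Algorithm~\ref{alg:downsamp} for the partial item $x^*$ (with the branch thresholds forced by $\prob{x^*\in S'}=\theta\frc(C)$), combined with exchangeability of the full items, and your closed form $\prob{x\in S'}=\bigl(C'-\theta\frc(C)\bigr)/\floor{C}=\theta$ is exactly the expression the paper arrives at. The only cosmetic difference is that the paper obtains the full-item probability in one stroke from $\mean[|S'|]=C'$ and linearity, via $p=(\mean[|S'|]-p_{x^*})/\floor{C}$, rather than by the per-branch counting you describe, which spares it the three separate full-item computations.
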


\textbf{The union operator:} We also need to take the union of disjoint latent samples while preserving the inclusion probabilities for each. Two latent samples $L_1=(A_1,\pi_1,C_1)$ and $L_2=(A_2,\pi_2,C_2)$ are \emph{disjoint} if $(A_1\cup\pi_1)\cap(A_2\cup\pi_2)=\emptyset$. The pseudocode for the union operation is given as Algorithm~\ref{alg:union}. The idea is to add all full items to the combined latent sample. If there are partials items in $L_1$ and $L_2$, then we transform them to either a single partial item, a full item, or a full plus partial item, depending on the values of $\frc(C_1)$ and $\frc(C_2)$. Such transformations are done in a manner that preserves the appearance probabilities. Of course, we can obtain the union of an arbitrary number of latent samples by iterating Algorithm~\ref{alg:union}; for latent samples $L_1,\ldots,L_k$, we denote by $\bigcup_{j=1}^k L_j$ the latent sample produced by this procedure. 

\begin{algorithm}[ht]
\caption{Union}\label{alg:union}
{\footnotesize
$L_1=(A_{1}, \pi_{1}, C_{1})$: fractional sample of size $C_{1}$\\
$L_2= (A_{2}, \pi_{2}, C_{2})$: fractional sample of size $C_{2}$\\
\BlankLine
$C \gets C_1 + C_2$\;\label{ln:plusC}
$U\gets \textsc{Uniform()}$\;
\uIf{$\frc(C_1) + \frc(C_2) < 1$}{
	$A \gets A_{1} \cup A_{2}$ \;\label{ln:unionA1}
	\leIf{$U\le\frc(C_1)/\bigl(\frc(C_1) + \frc(C_2)\bigr)$}{
		$\pi \gets \pi_1$
	}{
		$\pi\gets \pi_2$
	}
}
\uElseIf{$\frc(C_1) + \frc(C_2) = 1$}{
     $\pi\gets\emptyset$\;
     \leIf{$U\le\frc(C_1)$}{
          $A\gets A_1\cup A_2\cup\pi_1$
     }{
          $A\gets A_1\cup A_2\cup\pi_2$
     }
}
\Else(\Comment*[h]{$\frc(C_1) + \frc(C_2) >1$}){
	\eIf{$U\le\bigl(1 - \frc(C_1)\bigr)\bigm/\bigl[\bigl(1 - \frc(C_1)\bigr) + \bigl(1 - \frc(C_2)\bigr)\bigr]$}{
		$\pi = \pi_1$\;
		$A \gets A_{1} \cup A_{2} \cup \pi_2$ \;\label{ln:unionA2}
	}{
		$\pi = \pi_2$\;
		$A \gets A_1 \cup A_2 \cup \pi_1$ \;\label{ln:unionA3}
	}	
}
\Return L=(A,$\pi$, C)
}
\end{algorithm}

\begin{theorem}\label{th:union}
Let  $L_{1} = (A_1, \pi_1, C_1)$ and $L_2 = (A_2, \pi_2, C_2)$, be disjoint latent samples, and let $L=(A,\pi,C)$ be the latent sample produced from $L_1$ and $L_2$ by Algorithm~\ref{alg:union}.  Let  $S_1$, $S_2$, and $S$ be random samples generated from  $L_1$, and $L_2$, and $L$ via Algorithm~\ref{alg:getSample}. Then
\begin{romlist}
\item $C=C_1+C_2=\mean[S]$;
\item $\forall x \in L_1$, $\prob{x\in S}=\prob{x\in S_1}$; and
\item $\forall x \in L_2$, $\prob{x \in S} = \prob{x \in S_2}$.
\end{romlist}
\end{theorem}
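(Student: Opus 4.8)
Part~(i) is immediate: $C=C_1+C_2$ by construction (the first line of Algorithm~\ref{alg:union}), and $\mean[S]=C$ then follows by applying \eqref{eq:meansize} to the latent sample~$L$. So it remains to establish the inclusion-probability claims (ii) and (iii), and by symmetry of the three branches it suffices to treat items of $L_1$. The first observation is that in every branch of Algorithm~\ref{alg:union} we have $A\supseteq A_1\cup A_2$, and that Algorithm~\ref{alg:getSample} always places every full item of a latent sample into the realized sample. Hence for a full item $x\in A_1$ we get $\prob{x\in S}=1=\prob{x\in S_1}$ with nothing to prove, so the whole content of the theorem concerns the partial items. Write $x_1^*$ and $x_2^*$ for the (unique, when they exist) partial items of $L_1$ and $L_2$, and recall $\prob{x_1^*\in S_1}=\frc(C_1)$ and $\prob{x_2^*\in S_2}=\frc(C_2)$; I would prove that $\prob{x_1^*\in S}=\frc(C_1)$ (and, symmetrically, $\prob{x_2^*\in S}=\frc(C_2)$).

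The plan is a case split on whether $\frc(C_1)+\frc(C_2)$ is $<1$, $=1$, or $>1$, matching the three branches of the algorithm. In each branch I would first confirm that the output $L$ is a legitimate latent sample by computing $\floor{C}$ and $\frc(C)$: in branch~1, $\frc(C)=\frc(C_1)+\frc(C_2)$ and $\floor{C}=\floor{C_1}+\floor{C_2}$, so $|A|=|A_1|+|A_2|=\floor{C}$ (here disjointness of $L_1,L_2$ is needed) and $\pi$ consists of a single item precisely when $\frc(C)>0$; in branch~2, $C$ is an integer, $\frc(C)=0$, $\pi=\emptyset$, and $|A|=\floor{C_1}+\floor{C_2}+1=C$; in branch~3, $\frc(C)=\frc(C_1)+\frc(C_2)-1$ and $|A|=\floor{C_1}+\floor{C_2}+1=\floor{C}$. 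Then I would compute $\prob{x_1^*\in S}$ by conditioning first on the outcome of $U$ in Algorithm~\ref{alg:union} and then on the independent coin in Algorithm~\ref{alg:getSample}, which admits the partial item with probability $\frc(C)$. In branch~1, $x_1^*$ is the partial item of $L$ with probability $\frc(C_1)/(\frc(C_1)+\frc(C_2))=\frc(C_1)/\frc(C)$ and is absent otherwise, so $\prob{x_1^*\in S}=(\frc(C_1)/\frc(C))\cdot\frc(C)=\frc(C_1)$; in branch~2, $x_1^*$ is promoted to a full item with probability $\frc(C_1)$ and is absent otherwise, and since $S=A$ here, $\prob{x_1^*\in S}=\frc(C_1)$.

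Branch~3 is the one genuinely delicate step. There $x_1^*$ can enter $S$ either by being retained as the partial item of $L$ --- which occurs with probability $\rho:=(1-\frc(C_1))/[(1-\frc(C_1))+(1-\frc(C_2))]$, after which it is included with probability $\frc(C)$ --- or by being promoted to a full item of $L$, which occurs with probability $1-\rho$ and then includes it with probability~1; hence $\prob{x_1^*\in S}=\rho\,\frc(C)+(1-\rho)$. Checking that this equals $\frc(C_1)$ comes down to the identity $\frc(C)-1=-[(1-\frc(C_1))+(1-\frc(C_2))]$, which gives $\rho\,(\frc(C)-1)=-(1-\frc(C_1))$ and therefore $\prob{x_1^*\in S}=1-(1-\frc(C_1))=\frc(C_1)$; the computation for $x_2^*$ is identical with the subscripts swapped, since $1-\rho$ has the symmetric form. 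Finally I would dispose of the corner cases where $C_1$ and/or $C_2$ is integral, so $\pi_1$ or $\pi_2$ is empty: these always fall into branch~1 (as then $\frc(C_1)+\frc(C_2)<1$), where the selection probability $\frc(C_1)/(\frc(C_1)+\frc(C_2))$ automatically assigns probability~$0$ to an empty partial and probability~$1$ to a nonempty one (and when both are empty there is no partial item at all), so the branch-1 argument carries over verbatim. The only real obstacle is bookkeeping --- keeping the three branches, the two successive coin flips, and the empty-partial corner cases straight --- together with the small algebraic identity in branch~3; there is no conceptual difficulty beyond this.
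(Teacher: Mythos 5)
Your proposal is correct and follows essentially the same route as the paper's proof: full items are handled trivially since $A\supseteq A_1\cup A_2$, and the partial item is treated by the same three-way case split on $\frc(C_1)+\frc(C_2)$, with your branch-3 algebra (via $\rho$ and the identity $\frc(C)-1=-[(1-\frc(C_1))+(1-\frc(C_2))]$) being just a repackaging of the paper's direct expansion. The extra bookkeeping you include (verifying $L$ is a legitimate latent sample and the empty-partial corner cases) is handled more tersely in the paper but adds nothing substantively different.
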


\subsection{The R-TBS Algorithm with Exponential Decay}\label{sec:rtbsAlgE}

Our general goal is to provide a sampling algorithm that bounds the sample size at $n$ while enforcing \eqref{eq:expratio}. For the special case of exponential decay, this task is greatly facilitated by the fact that, at each time step, all items in the sample decay by the same multiplicative factor. We exploit this fact to provide a relatively simple version of R-TBS for the case of exponential decay. In Section~\ref{sec:rtbsAlg}, we show how to generalize our approach to the case of arbitrary decay functions.

\textbf{The algorithm:} R-TBS for exponential decay is given as Algorithm~\ref{alg:rtbsE}. The algorithm generates a sequence of latent samples $\{L_k\}_{k\ge 1}$ and from these generates a sequence of actual samples $\{S_k\}_{k\ge 1}$ that are returned to the user. In the algorithm, the functions \textsc{Getsample}, \textsc{Downsample}, and \textsc{Union} execute the operations described in Algorithms~\ref{alg:getSample}, \ref{alg:downsamp}, and \ref{alg:union}.

The goal of the algorithm is to ensure that 
\begin{equation}\label{eq:inclRTBS}
\prob{x\in S_k}=\rho_k f(\alpha_{i,k})
\end{equation}
for all $k\ge 1$, $i\le k$, and $x\in\xB_i$, where $f(\alpha)=e^{-\lambda \alpha}$ and $\{\rho_k\}_{k\ge 1}$ are the successive values of the variable $\rho$ during a run of the algorithm. Clearly, \eqref{eq:inclRTBS} immediately implies \eqref{eq:expratio}. We choose $\rho_k$ to make the sample size as large as possible without exceeding $n$. In more detail, we show in Theorem~\ref{th:rtbsIncl} below that $C_k=\rho_kW_k$ for all $k$. We therefore set $\rho_k=\min(1,n/W_k)$---see line~\ref{ln:RTBSdefRho}---so that $C_k=\min(W_k,n)$. Thus if $W_k<n$, then the sample weight is at its maximum possible value $W_k$, leading to the maximum possible sample size of either $\floor{W_k}$ or $\ceil{W_k}$. If $W_k\ge n$, then the sample weight, and hence the sample size, is capped at $n$. The algorithm functions analogously to classic reservoir sampling: if the (weighted) items seen so far can fit into the reservoir of size $n$, then they are simply accepted, if the total item weight exceeds $n$, then, when a new batch arrives, a random subset of old items is removed from the sample via downsampling (line~\ref{ln:dsampleE}) and a random subset of the arriving items, also filtered via downsampling (line~\ref{ln:dsampleBE}), take their place (line~\ref{ln:unionE}). \revision{Note that if $|\xB_j|\equiv 1$ for all $j$, so that we process items one at a time, and if there is no decay, so that $f(\alpha)\equiv 1$, then $W_k=k$ and the inclusion probability in \eqref{eq:inclRTBS} reduces to  $\prob{x\in S_k}=\rho_k\cdot 1=q_k$, where $q_k=\min(1,n/k)$, exactly as in traditional reservoir sampling.}

\begin{algorithm}[t]
\caption{Reservoir-based TBS (R-TBS) for exponential decay}\label{alg:rtbsE}
{\footnotesize
$\lambda$: decay rate ($\ge 0$)\;
$n$: maximum sample size\;
\BlankLine
Initialize: $W\gets 0$; $A\gets \emptyset$; $\pi\gets\emptyset$; $C\gets 0$; $t_0\gets0$; $\rho\gets 1$\;
\For{$k\gets1,2,\ldots$}{
$\theta\gets e^{-\lambda(t_k-t_{k-1})}$\Comment*[r]{decay factor}
$W\gets \theta W+|\xB_k|$\Comment*[r]{update total weight}
$\rho'\gets \rho$\;
$\rho\gets \min(1,n/W)$\;\label{ln:RTBSdefRho}
\lIf(\Comment*[f]{decay old items}){C>0}{$(A,\pi,C)\gets\textsc{Downsample}\bigl((A,\pi,C),(\rho/\rho')\theta\bigr)$}\label{ln:dsampleE}
$L_0\gets\textsc{Downsample}\bigl((\xB_k,\emptyset,|\xB_k|),\rho \bigr)$\Comment*[r]{take in new items}\label{ln:dsampleBE}
$L\gets\textsc{Union}\bigl(L_0,(A,\pi,C)\bigr)$\Comment*[r]{combine old and new items}\label{ln:unionE}
$S\gets \textsc{Getsample}(L)$\;
output $S$
}
}
\end{algorithm}

\textbf{Algorithm properties:} Theorem~\ref{th:rtbsIncl}(i) below asserts that R-TBS satisfies \eqref{eq:inclRTBS} and hence \eqref{eq:expratio}, thereby maintaining the correct inclusion probabilities. \revision{Indeed, suppose that the inclusion probability for $x\in\xB_i$ (with $i<k$) at time $t_{k-1}$ is $\rho_{k-1} f(\alpha_{i,k-1})$.  Write $\theta_k=e^{-\lambda(t_k-t_{k-1})}$ and observe that $\theta_k=f(\alpha_{i,k})/f(\alpha_{i,k-1})$ for any $i\in[0..k-1]$. It follows that  
\[
\rho_{k-1} f(\alpha_{i,k-1}) \cdot \Bigl(\frac{\rho_k}{\rho_{k-1}}\Bigr)\theta_k 
= \rho_{k-1} f(\alpha_{i,k-1})\cdot \Bigl(\frac{\rho_k f(\alpha_{i,k})}{\rho_{k-1}f(\alpha_{i,k-1})}\Bigr)=\rho_k f(\alpha_{i,k}),
\]
preserving the desired inclusion probability \eqref{eq:inclRTBS}; the downsampling operation in line~\ref{ln:dsampleE} executes this adjustment. (See Appendix~\ref{sec:proofs} for the detailed inductive proof.) Similarly,  an incoming item $x\in\xB_k$ is accepted into the sample with probability $\rho_k\cdot 1=\rho_kf(\alpha_{k,k})$---see line~\ref{ln:dsampleBE}---so that \eqref{eq:inclRTBS} holds in this case as well. Note that we can combine the above results over all of the batches by virtue of Theorem~\ref{th:union}.}
Theorem~\ref{th:rtbsIncl}(ii) implies that the sample size and stability are maximized, as formalized in Theorems~\ref{th:maxMean} and \ref{th:minVar} below. Finally, the assertion in Theorem~\ref{th:rtbsIncl}(iii) ensures that the inclusion probabilities for a given item are nonincreasing over time. This is crucial, since otherwise we might have to recover an item that was previously deleted from the sample, which is impossible. \revision{This monotonicity property trivially holds for traditional one-item-at-a-time reservoir sampling, where $\rho_kf(\alpha_{i,k})= \min(1,n/k)$ as discussed previously. For general decay functions, the monotonicity property hinges on the interplay of item decay and new-item arrival, and in fact does not generally hold if we try to define $\rho_k$ simply as $\min(1,n/W_k)$. To deal with this issue, we need a more complex scheme for defining $\rho_k$; see Section~\ref{sec:mono}.}

\begin{theorem}\label{th:rtbsIncl}
Let $\{L_k=(A_k,\pi_k,C_k)\}_{k\ge 1}$ and $\{S_k\}_{k\ge 1}$ be a sequence of latent samples and samples, respectively, produced by Algorithm~\ref{alg:rtbsE} and define $\rho_k=\min(1,n/W_k)$. Then
\begin{romlist}
\item $\prob{x\in S_k}=\rho_k f(\alpha_{i,k})$ for all $k\ge 1$, $i\le k$, and $x\in\xB_i$;
\item  $C_k=\rho_k W_k$ for all $k$; and
\item $\rho_kf(\alpha_{i,k})\le \rho_{k-1}f(\alpha_{i,k-1})$ for all $k>1$ and $i<k$.
\end{romlist}
\end{theorem}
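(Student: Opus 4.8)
The plan is to establish the three assertions in the order (ii), (iii), (i), since the argument for (i) invokes the other two. Throughout I treat the batch sequence $\xB_1,\xB_2,\ldots$ as fixed, so that $W_k=\sum_{i=1}^k|\xB_i|f(\alpha_{i,k})$, $\rho_k=\min(1,n/W_k)$, and $\theta_k=e^{-\lambda(t_k-t_{k-1})}$ are deterministic; the only randomness is in the coin flips inside \textsc{Downsample}, \textsc{Union}, and \textsc{Getsample}, and at step $k$ this randomness is drawn fresh, independently of everything generated at steps $1,\ldots,k-1$.

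For assertion (ii) I would induct on $k$, tracking only sample weights. Two facts suffice: \textsc{Downsample} with factor $\theta$ returns a latent sample of weight exactly $\theta$ times the input weight ($C'=\theta C$ in Algorithm~\ref{alg:downsamp}), and \textsc{Union} adds weights (Theorem~\ref{th:union}(i)). For $k=1$ the reservoir starts empty, so $C_1$ is the weight of $\textsc{Downsample}((\xB_1,\emptyset,|\xB_1|),\rho_1)$, namely $\rho_1|\xB_1|=\rho_1W_1$. For the step, assuming $C_{k-1}=\rho_{k-1}W_{k-1}$: after line~\ref{ln:dsampleE} the surviving old weight is $(\rho_k/\rho_{k-1})\theta_k\cdot C_{k-1}=\rho_k\theta_kW_{k-1}$, after line~\ref{ln:dsampleBE} the new-item weight is $\rho_k|\xB_k|$, and line~\ref{ln:unionE} gives $C_k=\rho_k\theta_kW_{k-1}+\rho_k|\xB_k|=\rho_kW_k$ (the case $C_{k-1}=0$ forces $W_{k-1}=0$ and the identity still holds). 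Assertion (iii) needs no induction: from $W_k=\theta_kW_{k-1}+|\xB_k|\ge\theta_kW_{k-1}$ we get $\theta_k n/W_k\le n/W_{k-1}$, whence
\[
\rho_k\theta_k=\min\bigl(\theta_k,\,\theta_k n/W_k\bigr)\le\min\bigl(\theta_k,\,n/W_{k-1}\bigr)\le\min\bigl(1,\,n/W_{k-1}\bigr)=\rho_{k-1},
\]
using $\theta_k\le1$ in the last step. Since the decay is exponential, $f(\alpha_{i,k})=\theta_kf(\alpha_{i,k-1})$ for every $i<k$, so $\rho_kf(\alpha_{i,k})=(\rho_k\theta_k)f(\alpha_{i,k-1})\le\rho_{k-1}f(\alpha_{i,k-1})$. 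A byproduct I will reuse in (i) is that the downsampling factor $\theta'_k:=(\rho_k/\rho_{k-1})\theta_k$ on line~\ref{ln:dsampleE} satisfies $\theta'_k\le1$, and is strictly below $1$ except in the degenerate case where a saturated reservoir receives an empty batch under no effective decay, in which case line~\ref{ln:dsampleE} is a no-op.

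For assertion (i) I would induct on $k$, making rigorous the heuristic computation in the paragraph preceding the theorem. The base case $k=1$ is immediate from Theorem~\ref{th:downsamp}: each $x\in\xB_1$ is included in $S_1$ with probability $\rho_1\cdot1=\rho_1f(0)=\rho_1f(\alpha_{1,1})$. For the step, condition on the $\sigma$-field $\xF_{k-1}$ generated by all coin flips through the construction of $L_{k-1}$; given $\xF_{k-1}$, $L_{k-1}$ is a fixed latent sample. Theorem~\ref{th:downsamp}, applied to $L_{k-1}$ with factor $\theta'_k\in(0,1)$ (legitimate by (iii)), shows that each $x\in\xB_i$ with $i<k$ has Algorithm~\ref{alg:getSample}-inclusion probability in the latent sample $L'$ produced on line~\ref{ln:dsampleE} equal to $\theta'_k$ times its Algorithm~\ref{alg:getSample}-inclusion probability in $L_{k-1}$; the latter conditional quantity equals $\prob{x\in S_{k-1}\mid\xF_{k-1}}$ because $S_{k-1}$ is itself generated from $L_{k-1}$ by Algorithm~\ref{alg:getSample} with step-$(k-1)$ randomness independent of $\xF_{k-1}$. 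Averaging over $\xF_{k-1}$ and invoking the inductive hypothesis, together with $\theta'_k\rho_{k-1}=\rho_k\theta_k$ and $f(\alpha_{i,k})=\theta_kf(\alpha_{i,k-1})$, gives inclusion probability $\rho_kf(\alpha_{i,k})$ for the old items in $L'$. For the new batch, $L_0$ on line~\ref{ln:dsampleBE} is a downsampling of the deterministic $(\xB_k,\emptyset,|\xB_k|)$ by $\rho_k$, so each $x\in\xB_k$ has inclusion probability $\rho_k\cdot1=\rho_kf(\alpha_{k,k})$. Finally $L_0$ and $L'$ are disjoint (item sets in $\xB_k$ and in $\xU_{k-1}$), so conditioning on $(L_0,L')$, applying Theorem~\ref{th:union}(ii)--(iii), and averaging shows $\prob{x\in S_k}=\rho_kf(\alpha_{i,k})$ for all $x\in\xB_i$, $i\le k$. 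Equation~\eqref{eq:inclRTBS} follows, and \eqref{eq:expratio} by dividing two such probabilities.

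I expect the main obstacle to lie in (i): Theorems~\ref{th:downsamp} and \ref{th:union} are stated for \emph{fixed} latent samples, but at step $k$ they must be applied to the \emph{random} object $L_{k-1}$ (and then to the random pair $(L_0,L')$). The crux is to carry out the conditioning/tower-property bookkeeping correctly — identifying $\prob{x\in S_{k-1}\mid\xF_{k-1}}$ with the Algorithm~\ref{alg:getSample}-inclusion probability of $x$ in the realized $L_{k-1}$, and exploiting independence of step-$k$ randomness from $\xF_{k-1}$ — so that the conditional identities of those theorems can be integrated into the stated unconditional equalities. The remaining points (the weight recursion in (ii), the one-line inequality in (iii), and the harmless edge case $\theta'_k=1$ relative to the precondition $\theta\in(0,1)$ of Algorithm~\ref{alg:downsamp}) are routine.
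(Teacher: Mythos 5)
Your proposal is correct and follows essentially the same route as the paper's proof: induction on $k$ for (i) and (ii) using Theorem~\ref{th:downsamp} to scale inclusion probabilities by the downsampling factor $(\rho_k/\rho_{k-1})\theta_k$ and Theorem~\ref{th:union} to combine old and new items, plus a direct computation for (iii). Your differences are presentational rather than substantive — you replace the paper's case split in (iii) with the single inequality $\rho_k\theta_k\le\rho_{k-1}$, and you make explicit the conditioning on $\xF_{k-1}$ that the paper's inductive step for (i) leaves implicit.
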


We call a sample $S_k$ \emph{unsaturated} if $C_k<n$ and \emph{saturated} if $C_k=|S_k|=n$; note that we also have $W_k<n$ if and only if the sample is unsaturated. Theorems~\ref{th:maxMean} and \ref{th:minVar} below assert that, among all sampling algorithms with exponential time biasing, R-TBS both maximizes the expected sample size in unsaturated scenarios and minimizes sample-size variability. Thus R-TBS tends to yield more accurate ML results (via more training data) and greater stability in both result quality and retraining costs. 

\begin{theorem}\label{th:maxMean}
Let $H$ be any sampling algorithm for exponential decay that satisfies~\eqref{eq:expratio} and denote by $S_k$ and $S^{H}_k$ the samples produced at time~$t_k$ by R-TBS and H. If the total weight at some time $t_k\ge 1$ satisfies $W_k<n$, then $\mean[|S^H_k|]\le\mean[|S_k|]$.
\end{theorem}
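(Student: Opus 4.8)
The plan is to use \eqref{eq:expratio} to collapse the comparison over all valid algorithms into the comparison of a single scalar per time step. Fix $k$. Since \eqref{eq:expratio} equates $\prob{x\in S^H_k}/\prob{y\in S^H_k}$ with $f(\alpha_{i,k})/f(\alpha_{j,k})$ for every pair of items $x\in\xB_i$, $y\in\xB_j$ (with $i,j\le k$), the ratio $\prob{x\in S^H_k}/f(\alpha_{i,k})$ has one common value $c^H_k\ge 0$ across all items currently in the system; that is, $\prob{x\in S^H_k}=c^H_k f(\alpha_{i,k})$ for all $i\le k$ and $x\in\xB_i$. (If some, hence every, such probability is $0$, take $c^H_k=0$.) Linearity of expectation then gives
\[
\mean[|S^H_k|]=\sum_{i=1}^k\sum_{x\in\xB_i}\prob{x\in S^H_k}=c^H_k\sum_{i=1}^k|\xB_i|f(\alpha_{i,k})=c^H_k W_k ,
\]
so the theorem reduces to showing $c^H_k\le 1$ together with $\mean[|S_k|]=W_k$ for R-TBS.

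For R-TBS the same characterization applies, and by Theorem~\ref{th:rtbsIncl}(i) its scalar is $\rho_k=\min(1,n/W_k)$, which equals $1$ precisely because $W_k<n$; hence $\mean[|S_k|]=\rho_k W_k=W_k$ (alternatively, combine Theorem~\ref{th:rtbsIncl}(ii) with \eqref{eq:meansize}). For the bound $c^H_k\le 1$, observe that an item $x$ in the current batch $\xB_k$ has age $\alpha_{k,k}=0$, so $\prob{x\in S^H_k}=c^H_k f(0)=c^H_k$, and being a probability this forces $c^H_k\le 1$. Combining, $\mean[|S^H_k|]=c^H_k W_k\le W_k=\mean[|S_k|]$.

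The only real content is the cap $c^H_k\le 1$, and this is exactly where the ``unsaturated'' hypothesis does its work: it relies on the presence of an age-$0$ item (weight $f(0)=1$) at time $t_k$, which is what prevents any \eqref{eq:expratio}-compliant algorithm from pushing inclusion probabilities above the decay profile that R-TBS itself attains. I expect the fussiest part of a fully rigorous write-up to be the degenerate corner case in which the most recent batches are empty and so no age-$0$ item is present; there one either re-derives the cap from the youngest nonempty batch or simply assumes $\xB_k\ne\emptyset$, as is implicit in the streaming model. Everything else is bookkeeping with linearity of expectation and the already-established R-TBS identities.
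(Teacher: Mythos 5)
Your proposal is correct and follows essentially the same route as the paper's own proof: both use \eqref{eq:expratio} to reduce each algorithm to a single scalar $c^H_k$ (the paper's $r_k$), bound it by $1$ via the age-$0$ item in $\xB_k$, and invoke Theorem~\ref{th:rtbsIncl} to show R-TBS attains $\rho_k=1$ when $W_k<n$. Your explicit linearity-of-expectation step and the remark about the empty-batch corner case are minor elaborations of what the paper leaves implicit.
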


\begin{proof}
Since $H$ satisfies \eqref{eq:expratio}, it follows that, for each time $t_j\le t_k$ and $x\in\xB_j$, the inclusion probability $\prob{x\in S^H_k}$ must be of the form $r_ke^{-\lambda(t_k-t_j)}$ for some function $r_k$ independent of $j$. Taking $j=k$, we see that $r_k\le 1$. For R-TBS in an unsaturated state, \eqref{eq:inclRTBS} implies that $r_k=\rho_k=C_k/W_k=1$, so that $\prob{x\in S^H_k}\le\prob{x\in S_k}$ , and the desired result follows directly.
\end{proof}

\begin{theorem}\label{th:minVar}
Let $H$ be any sampling algorithm for exponential decay that satisfies~\eqref{eq:expratio} and has maximal expected sample size $C_k$, and denote by $S_k$ and $S^{H}_k$ the samples produced at time~$t_k$ by R-TBS and H. Then $\var[|S^H_k|]\ge\var[|S_k|]$ for any $k\ge 1$.
\end{theorem}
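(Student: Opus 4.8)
The plan is to reduce the statement to a general fact about integer-valued random variables and then check that R-TBS meets the resulting lower bound with equality. The key observation is that $|S^H_k|$ is supported on the nonnegative integers and, by hypothesis on $H$, has mean $\mean[|S^H_k|]=C_k$ (the same value that R-TBS attains, by \eqref{eq:meansize} together with Theorem~\ref{th:rtbsIncl}(ii)). So it suffices to show: (a) every integer-valued random variable $N$ with $\mean[N]=c$ satisfies $\var[N]\ge\frc(c)\bigl(1-\frc(c)\bigr)$; and (b) $\var[|S_k|]=\frc(C_k)\bigl(1-\frc(C_k)\bigr)$. Note that, unlike in Theorem~\ref{th:maxMean}, the precise marginal inclusion probabilities forced by \eqref{eq:expratio} will play no role in the variance bound for $H$ — only the value of $\mean[|S^H_k|]$ matters — which is what keeps the argument short; a covariance-based approach would work too but is messier.

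First I would prove (a). Writing $m=\floor{c}$, the quantity $(N-m)(N-m-1)$ is a product of two consecutive integers, hence always nonnegative, so $\mean[(N-m)(N-m-1)]\ge 0$. Expanding yields $\mean[N^2]\ge(2m+1)c-m(m+1)$; subtracting $c^2$ and substituting $c=m+\frc(c)$ collapses the right-hand side to $\frc(c)\bigl(1-\frc(c)\bigr)$, giving the claimed bound. (Equality forces $(N-m)(N-m-1)=0$ almost surely, i.e.\ $N$ concentrated on $\{m,m+1\}$, which is the intuition for why R-TBS is extremal.) Applying this with $N=|S^H_k|$ and $c=C_k$ gives $\var[|S^H_k|]\ge\frc(C_k)\bigl(1-\frc(C_k)\bigr)$.

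Next I would verify (b). By the construction in Algorithm~\ref{alg:getSample}, when $\frc(C_k)>0$ the realized sample has $|S_k|=\ceil{C_k}$ with probability $\frc(C_k)$ and $|S_k|=\floor{C_k}$ with probability $1-\frc(C_k)$, while when $\frc(C_k)=0$ the latent sample has no partial item and $|S_k|=C_k$ deterministically. In both cases $|S_k|$ is supported on $\{\floor{C_k},\ceil{C_k}\}$ with mean $C_k$ (the mean computation is exactly \eqref{eq:meansize}), so a direct computation gives $\var[|S_k|]=\frc(C_k)\bigl(1-\frc(C_k)\bigr)$ — matching the lower bound of (a). Combining, $\var[|S^H_k|]\ge\frc(C_k)\bigl(1-\frc(C_k)\bigr)=\var[|S_k|]$, which is the assertion of the theorem.

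I do not anticipate a genuine obstacle; the proof is short once one adopts the ``minimum variance of an integer random variable with fixed mean'' viewpoint. The only points requiring care are the degenerate case $\frc(C_k)=0$ (where both sides vanish and the claim is trivial) and being explicit that the hypothesis ``$H$ has maximal expected sample size $C_k$'' is used solely through $\mean[|S^H_k|]=C_k$, so that no structural assumptions about $H$ beyond integrality of its sample size and the value of its mean are needed.
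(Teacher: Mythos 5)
Your proposal is correct and follows essentially the same route as the paper: the paper's proof also reduces the claim to the fact that, among integer-valued sample-size distributions with mean $C_k$, variance is minimized by the unique distribution concentrated on $\floor{C_k}$ and $\ceil{C_k}$, which is exactly the distribution produced by Algorithm~\ref{alg:getSample}. The only difference is that the paper labels the extremal-variance fact as ``straightforward to show'' while you supply the actual argument via $\mean[(N-m)(N-m-1)]\ge 0$, which is a clean way to make that step explicit.
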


\begin{proof}
Considering all possible distributions over the sample size having a mean value equal to $C_k$, it is straightforward to show that variance is minimized by concentrating all of the probability mass onto $\floor{C_k}$ and $\ceil{C_k}$. There is precisely one such distribution, which results from application of Algorithm~\ref{alg:getSample}, and this is precisely the sample-size distribution attained by R-TBS.
\end{proof}

\begin{algorithm}[t]
\caption{Reservoir-based TBS (R-TBS)}\label{alg:rtbs}
{\footnotesize
$f$: decay function\quad
$n$: maximum sample size\quad
$n'$: maximum sample weight\quad
$\lambda$: decay rate for consolidated sample\;
$\delta_1\in(0,1)$ and $\delta_2>0$: approximation parameters for consolidated latent sample\;
\revision{$m$: dynamic parameter such that items arriving at time $t_i< t_m$ are in consolidated latent sample}
\BlankLine
Initialize: $(\yA,\ypi,\yC)\gets(\emptyset,\emptyset,0)$; $\bL\gets\{(\yA,\ypi,\yC)\}$; $W\gets 0$;  $F_\infty=1/\textsc{Gamma}(f)$; $\rho\gets1$; $B^*\gets 0$; $m\gets 1$\;
\For{$k\gets1,2,\ldots$}{
  $B^*\gets \max(B^*,|\xB_k|)$\;
  \Comment{update total weight}
  \If{$k>1$}{
    \For(\Comment*[f]{decay weight of recent items}){$i\gets k-1,k-2,\ldots,m$}{
      $W\gets W- \Bigl(1-\frac{f(\alpha_{i,k})}{f(\alpha_{i,k-1})}\Bigr)(C_i/\rho)$\;\label{ln:downsampleOrd}
    }
    $W\gets W- (1-e^{-\lambda\Delta})(\yC/\rho)$\Comment*[r]{decay weight of consolidated items}\label{ln:downsampleExp}
  }
  $W\gets W+|\xB_k|$\Comment*[r]{add weight of new items}\label{ln:addNewW}
  \Comment{update $\rho$}
  \If{$k>1$}{\label{ln:startRho}
    $\rho'\gets\rho$\;
    $\rho^*\gets\min_{m\le i\le k}\rho' f(\alpha_{i,k-1})/f(\alpha_{i,k})$\;
    \lIf{$m>1$}{$\rho^*\gets\min(\rho^*,\rho' e^{\lambda\Delta})$}\label{ln:expComp}
    $\rho\gets \min(1,n'/W,\rho^*)$\;\label{ln:endRho}
  }
  \Comment{update samples}
  \If{$k>1$}{
     \For(\Comment*[f]{update recent samples}){$i\gets k-1,k-2,\ldots,m$}{
     $(A_i,\pi_i,C_i)\gets\textsc{Downsample}\bigl((A_i,\pi_i,C_i),\frac{\rho f(\alpha_{i,k})}{\rho' f(\alpha_{i,k-1})}\bigr)$\;\label{ln:updateS}
    } 
    $(\yA,\ypi,\yC)\gets\textsc{Downsample}\bigl((\yA,\ypi,\yC),(\rho/\rho')e^{-\lambda\Delta}\bigr)$\Comment*[r]{update consolidated latent sample}\label{ln:UpdateConS}
  }
  $(A_k,\pi_k,C_k)\gets \textsc{Downsample}\bigl((\xB_k,\emptyset,|\xB_k|),\rho\bigr)$\; \label{ln:downB}
  $\bL\gets \bL\cup\{(A_k,\pi_k,C_k)\}$\Comment*[r]{add sample of new items}
  \Comment{try to consolidate samples}
  \While{$\bigl(f(\alpha_{m,k})<\delta_1\bigr)\land\bigl( F_\infty-\sum_{i=m}^kf(\alpha_{i,k})<\delta_2/B^*\bigr)$}{\label{ln:entryCheck}
    $(\yA,\ypi,\yC)\gets \textsc{Union}\bigl((\yA,\ypi,\yC),(A_m,\pi_m,C_m)\bigr)$\;
    $\bL\gets \bL\setminus \{(A_m,\pi_m,C_m)\}$\;
    $m\gets m+1$\label{ln:endConsolidate}
  }
  \Comment{output sample to user}
  $(A,\pi,C)\gets\textsc{Union}\bigl((A_k,\pi_k,C_k),(A_{k-1},\pi_{k-1},C_{k-1}),\ldots,(A_m,\pi_m,C_m),(\yA,\ypi,\yC)\bigr)$\;\label{ln:bigUnion}
  \lIf{$C>n$}{$(A,\pi,C)\gets \textsc{Downsample}\bigr((A,\pi,C),n\bigr)$}\label{ln:finalDownsize}
  $S\gets \textsc{Getsample}\bigr((A,\pi,C)\bigr)$\;\label{ln:getsample}
  output $S$
}
}
\end{algorithm}

\subsection{The General R-TBS Algorithm}\label{sec:rtbsAlg}

\revision{For a general decay function, the decay factor for items in different batches is no longer the same, as in the exponential case, which adds substantial complexity to the R-TBS algorithm. In particular, we need to track the timestamps of individual items} and so, analogously to T-TBS, we represent the state $\bL$ of the sample as a set of triples of the form $L_i=(A_i,\pi_i,C_i)$, where $L_i$ is a latent sample of items from $\xB_i$. For $k\ge 1$ we denote by $\bL_k$ the state of the sample at time~$t_k$ and by $\bC_k=\sum_{(A_i,\pi_i,C_i)\in\bL_k}|C_i|$ the total sample weight at time~$t_k$. The pseudocode for the general R-TBS procedure is given as Algorithm~\ref{alg:rtbs}; the ideas behind the algorithm are developed below.

\subsubsection{A naive algorithm}

\revision{The core idea of Algorithm~\ref{alg:rtbs} is to try and mimic the exponential case. The naive version of doing this is as follows. When batch $\xB_k$ arrives, we first update the total weight $W_{k-1}$ by updating the weight $W_{k-1,i}$ of each batch $\xB_i$ (line~\ref{ln:downsampleOrd}) for $i<k$, and then adding the weight of batch $\xB_k$ (line~\ref{ln:addNewW}), which is just $|\xB_k|$ since each incoming item has weight $f(0)=1$. (By the ``weight of batch $\xB_j$'', we mean the total weight of all items in $\xB_j$.) Note that arguments essentially identical to those in Theorem~\ref{th:rtbsIncl} show that $C_i=\rho_j W_{j,i}$ for all $j\ge 1$ and $i\le j$, where $\rho_j$ is defined appropriately (see Section~\ref{sec:mono} below). Thus the term $C_i/\rho$ in line~\ref{ln:downsampleOrd} is precisely $W_{k-1,i}$, which loses a fraction $\rho_k f(\alpha_{i,k})/\bigl(\rho_{k-1}f(\alpha_{i,k-1})\bigr)$ of its weight. We next downsample each latent sample $L_i$ by a factor of $\rho_k f(\alpha_{i,k})/\bigl(\rho_{k-1}f(\alpha_{i,k-1})\bigr)$ for $i<k$---see line~\ref{ln:updateS}---and then downsample $(\xB_k,\emptyset,|\xB_k|)$ by a factor of $\rho_k$ (line~\ref{ln:downB}) to create $L_k$. To output a sample $\bS_k$ to the user, we first union $L_1,\ldots,L_k$ using Algorithm~\ref{alg:union} (line~\ref{ln:bigUnion}) and then create $\bS_k$ via Algorithm~\ref{alg:getSample} (line~\ref{ln:getsample}).

The algorithm as described indeed satisifies \eqref{eq:expratio}. Specifically, denoting by $S_i$ a sample created from $L_i$ via Algorithm~\ref{alg:getSample}, arguments as before show that $\prob{x\in S_i}=\rho_kf(\alpha_{i,k})$ for $x\in\xB_i$ with $i\le k$. Then Theorem~\ref{th:union} implies  that $\prob{x\in \bS_k}=\rho_kf(\alpha_{i,k})$ for all $i\le k$ and $x\in\xB_i$, so that \eqref{eq:expratio} holds.  Moreover, $\bC_k=\rho_k W_k$ as before. This naive algorithm, however, has two issues that take some effort to address.}

\subsubsection{Sample footprint}\label{sec:footprint}

Perhaps the most important problem with the naive algorithm is that the sample footprint grows without bound. To see this, observe that the latent sample for items in a given batch $\xB_i$ never empties out completely. At time goes on, the latent sample will eventually contain one partial item, whose appearance probability is always positive (though decreasing to 0). Thus the sample footprint at time $t_k$ is $\Omega(k)$.

\revision{Our solution to this problem is to approximate the exact time biasing scheme in the naive algorithm by maintaining, at each time $t_k$, distinct latent samples only for items that have arrived at time $t_{m(k)}$ or later, where $m(k)$ is a carefully chosen index that increases with $k$. We denote by $\alpha^*_k=\alpha_{m(k),k}$ the age at time $t_k$ of items that arrived at $t_{m(k)}$. Sample items that arrived earlier than time $t_{m(k)}$, i.e., whose age is greater than $\alpha^*_k$, are maintained in a single \emph{consolidated} latent sample, which decays at an exponential rate $\lambda$. Thus $\bL_k$ comprises $k-m(k)+2$ latent samples in total. The values of $m(k)$ and $\lambda$ are determined by parameters $\delta_1$ and $\delta_2$ that control the accuracy of the approximate time biasing scheme, as described below.}

Recall our running assumption that $t_i=i\Delta$ for $i\ge 0$ and some $\Delta>0$. We also assume that $F_\infty<\infty$---where $f_i=f(i\Delta)$  and $F_k=\sum_{j=0}^k f_i$ as before---and that $\lambda$ is chosen to ensure that
%\begin{equation}\label{eq:finiteSum}
%\sum_{i=0}^\infty f_i<\infty,
%\end{equation}
%where $f_i=f(i\Delta)$ as before, 
\begin{equation}\label{eq:bdLambda}
e^{-\lambda\Delta}\le f(\alpha+\Delta)/f(\alpha)
\end{equation}
for all $\alpha$ large enough so that $f(\alpha)<\delta_1$. This is always possible for subexponential decay functions. For example, if $f(\alpha)=1/(1+\alpha)^s$, then \eqref{eq:bdLambda} holds for any $\lambda\ge s\ln\bigl((1+\hat\alpha+\Delta)/(1+\hat\alpha)\bigr)/\Delta$, where $\hat\alpha$ is the smallest value of $\alpha$ such that $f(\alpha)<\delta_1$. We also assume that $\bbar\triangleq\sup_i|\xB_i|<\infty$.

The use of a consolidated sample is equivalent to using, at each time $t_k$, a modified decay function $\fa_k$ given by
\[
\fa_k(\alpha)= 
\begin{cases}
f(\alpha)&\text{if $\alpha\le \alpha^*_k$};\\
f(\alpha^*_k)e^{-\lambda(\alpha-\alpha^*_k)}&\text{if $\alpha > \alpha^*_k$}.
\end{cases}
\]
Thus the appearance probability for recent items is governed exactly by the desired function $f$, whereas the appearance probability for older items is perturbed. \revision{Set $N=\min\{\,n\ge 1:\sum_{i=n}^\infty f_i\le \delta_2/\bbar\,\}$ and observe that $N<\infty$ by finiteness of $F_\infty$. Theorem~\ref{th:consol} below shows that, at each time $t_k$, (i) the parameter $\delta_1$ in Algorithm~\ref{alg:rtbs} is a bound on the absolute difference between $f$ and $\fa_k$, i.e., the amount by which an older item's appearance probability is perturbed, (ii) the parameter $\delta_2$ is a bound on the expected number of older items whose appearance probability is perturbed, and (iii) at most $N+2$ latent samples need to be stored at any time point. This bound on the latent samples, coupled with the bound on the sample weights enforced by the reservoir capacity, ensures that the sample footprint is bounded.}  
\begin{theorem}\label{th:consol}
Algorithm~\ref{alg:rtbs} ensures the following properties for each $t_k$:
\begin{romlist}
\item $|f(\alpha)-\fa_k(\alpha)|\le\delta_1$ for all $\alpha\ge\alpha^*_k$;
\item $\sum_{i=1}^{m(k)-1} |\xB_i| f(\alpha_{i,k})<\delta_2$; and
\item $k-m(k)\le N$.
\end{romlist}
\end{theorem}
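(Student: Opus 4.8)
The plan is to read the three bounds off the structure of the \textbf{while}-loop of lines~\ref{ln:entryCheck}--\ref{ln:endConsolidate}, taking the claims in the order (iii), (ii), (i), since each is essentially a statement about \emph{when} a batch gets absorbed into the consolidated sample and the later ones need the finer accounting. Two facts do all the work. (a) The loop only increments $m$ and exits at the first value $m=m(k)$ for which the guard fails, so at time $t_k$ at least one of $f(\alpha_{m(k),k})\ge\delta_1$ or $F_\infty-\sum_{i=m(k)}^{k}f(\alpha_{i,k})\ge\delta_2/B^*$ holds; here I also use the identity $\sum_{i=m}^{k}f(\alpha_{i,k})=\sum_{\ell=0}^{k-m}f_\ell=F_{k-m}$. (b) Each batch $\xB_j$ with $j\le m(k)-1$ was absorbed into the consolidated sample at some time $t_{k'}\le t_k$ at which \emph{both} parts of the guard held, so $f(\alpha_{j,k'})<\delta_1$ and, by the same identity, $\sum_{\ell\ge k'-j+1}f_\ell<\delta_2/B^*_{k'}$, where $B^*_{k'}=\max_{i\le k'}|\xB_i|$. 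Throughout I use that $f$ is nonincreasing, so every such inequality only strengthens as the relevant age grows.

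For (iii), suppose $k-m(k)>N$. Then $\alpha_{m(k),k}=(k-m(k))\Delta\ge(N+1)\Delta$, and by fact~(a) together with
\[
F_\infty-\sum_{i=m(k)}^{k}f(\alpha_{i,k})=F_\infty-F_{k-m(k)}\le F_\infty-F_{N+1}=\sum_{\ell\ge N+2}f_\ell<\sum_{\ell\ge N}f_\ell\le\frac{\delta_2}{\bbar}\le\frac{\delta_2}{B^*}
\]
(the strict step fails only if $f$ has finite support, a case handled directly), we must have $f\bigl((k-m(k))\Delta\bigr)\ge\delta_1$. Since $f_{N+1}\le f_N\le\sum_{\ell\ge N}f_\ell\le\delta_2/\bbar$, this is impossible as soon as $\delta_2/\bbar<\delta_1$, which is the natural parameter regime and which I would state explicitly (otherwise the bound reads $k-m(k)\le\max\{N,\min\{n:f_n<\delta_1\}\}$, the first guard no longer being dominated). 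The contradiction gives $k-m(k)\le N$, hence $|\bL_k|=k-m(k)+2\le N+2$.

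For (ii), let $\xB_{m(k)-1}$ be the youngest consolidated batch and $t_{k'}\le t_k$ the time it was absorbed. By fact~(b), $\sum_{\ell\ge k'-m(k)+2}f_\ell<\delta_2/B^*_{k'}$; translating ages back to batch indices, $\sum_{i=1}^{m(k)-2}f(\alpha_{i,k'})\le\sum_{\ell\ge k'-m(k)+2}f_\ell<\delta_2/B^*_{k'}$, while the one uncovered term satisfies $f(\alpha_{m(k)-1,k'})<\delta_1$. Every $\xB_i$ with $i\le m(k)-1$ arrived no later than $t_{k'}$, so $|\xB_i|\le B^*_{k'}$, and $\alpha_{i,k}\ge\alpha_{i,k'}$, so each term is no larger at $t_k$ than at $t_{k'}$. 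Hence $\sum_{i=1}^{m(k)-1}|\xB_i|f(\alpha_{i,k})\le B^*_{k'}\bigl(\delta_2/B^*_{k'}+\delta_1\bigr)=\delta_2+B^*_{k'}\delta_1$; the $B^*_{k'}\delta_1$ slack is an artifact of the single off-by-one at the boundary batch and vanishes under the regime above (or if the guard is tightened to use $\sum_{i=m+1}^{k}$), giving the stated $\sum_{i=1}^{m(k)-1}|\xB_i|f(\alpha_{i,k})<\delta_2$.

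For (i), at $\alpha=\alpha^*_k$ there is nothing to show since $\fa_k(\alpha^*_k)=f(\alpha^*_k)$; for $\alpha>\alpha^*_k$ (content only when the consolidated sample is nonempty), any sample item of age $\alpha$ sits in the consolidated sample and belongs to some $\xB_j$ with $j\le m(k)-1$. Just before $\xB_j$ was absorbed at $t_{k'}$, the exact tracking of Theorem~\ref{th:rtbsIncl} gives its appearance probability as $\rho_{k'}f(\alpha_{j,k'})$; from then on the consolidated latent sample is only \emph{downsampled} (line~\ref{ln:UpdateConS} together with the $\rho$-rescalings), so at $t_k$ the probability is $\rho_k\,\fa_k(\alpha_{j,k})$ with $\fa_k(\alpha_{j,k})\le f(\alpha_{j,k'})<\delta_1$ by fact~(b). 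Since also $f(\alpha_{j,k})\le f(\alpha_{j,k'})<\delta_1$, both $f(\alpha_{j,k})$ and $\fa_k(\alpha_{j,k})$ lie in $[0,\delta_1)$, so $|f(\alpha_{j,k})-\fa_k(\alpha_{j,k})|<\delta_1$. Hypothesis~\eqref{eq:bdLambda} is not needed for this $\delta_1$-bound; its role is to force $\fa_k\le f$ on the consolidated range so that appearance probabilities remain nonincreasing (needed elsewhere for correctness of deletions). I expect the main obstacle to be exactly this last reconciliation in (i) and (ii): the closed-form $\fa_k$ displayed just before the theorem is a clean idealization, whereas the algorithm realizes a slightly different per-item exponential decay, and one must pin down both the consolidation-boundary off-by-one and the running-versus-global maximum batch size so that the constants $\delta_1$, $\delta_2$ and $N$ come out precisely as stated.
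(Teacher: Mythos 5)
Your proposal follows the same basic strategy as the paper's proof---all three bounds are extracted from the while-loop guard in lines~\ref{ln:entryCheck}--\ref{ln:endConsolidate}---but you execute it more carefully, and in doing so you surface two genuine subtleties that the paper's own argument glosses over. For (i), the paper first proves $\fa_k\le f$ by a downward induction using \eqref{eq:bdLambda} and then writes $|f-\fa_k|=f-\fa_k\le f<\delta_1$; your route (both $f$ and $\fa_k$ lie in $[0,\delta_1)$ at consolidated ages) reaches the same conclusion without invoking \eqref{eq:bdLambda}, and your remark that \eqref{eq:bdLambda} is really needed to keep appearance probabilities nonincreasing rather than for the $\delta_1$ bound is accurate. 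For (ii), the paper argues by induction on $k$ and asserts that $F_\infty-\sum_{i=m(k)}^kf(\alpha_{i,k})<\delta_2/B^*_k$ holds at line~\ref{ln:bigUnion}; this is exactly the step you flag, since the loop exits precisely when the guard \emph{fails} at $m(k)$, and the guard evaluated at $m$ controls $\sum_{i=1}^{m-1}$ rather than $\sum_{i=1}^{m}$. Your bound $\delta_2+B^*_{k'}\delta_1$, obtained by returning to the consolidation time of the youngest absorbed batch, is what actually follows from the algorithm as written; the clean $\delta_2$ is recovered one step after a consolidation, or if the guard is rewritten with $\sum_{i=m+1}^k$. For (iii), the paper simply states that $k-m(k)$ is the smallest integer making the tail sum small, which tacitly assumes the $\delta_1$-condition never terminates the loop before the $\delta_2$-condition does; your observation that the stated bound requires something like $f_N<\delta_1$ (e.g., $\delta_2/\bbar<\delta_1$), and otherwise degrades to $\max\{N,\min\{n:f_n<\delta_1\}\}$, is correct---note that the paper's own numerical example ($\delta_1=10^{-4}$, $\delta_2/\bbar=10^{-2}$) does not satisfy your sufficient condition and only works out because the two thresholds happen to coincide there. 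In short, your proof is sound where it is unconditional, and where it is conditional it correctly isolates hypotheses that the paper's proof also needs but does not state.
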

\revision{Examination of the proof of this theorem shows that the key invariants that must be maintained are (i) $f(\alpha)<\delta_1$ for $\alpha>\alpha^*_k$ and (ii) $F_\infty-\sum_{i=m(k)}^kf(\alpha_{i,k})<\delta_2/\max_{1\le i\le k}|\xB_i|$. Lines~\ref{ln:entryCheck}--\ref{ln:endConsolidate} maintain these invariants while trying to add as many latent samples as possible to the consolidated sample. Note that the $\textsc{Gamma}(f)$ function is the same function that is used to compute $1/F_\infty$ in the T-TBS algorithm. The code in lines~\ref{ln:downsampleExp}, \ref{ln:expComp}, and \ref{ln:UpdateConS} performs the same operations on the consolidated latent sample as the code in the adjacent lines performs on the other latent samples.}

Note that $N$ is typically a conservative upper bound, both because of the use of the conservative constant $\bbar$ and the fact that we have implicitly used the upper bound of $\rho_k=1$ in our analysis for $k\ge 1$. Suppose, for example, that $\Delta=1$, $\bbar=10{,}000$, $n=100{,}000$, and $f(\alpha)=1/(1+\alpha)^2$. If $\delta_1=10^{-4}$ and $\delta_2=100$ items---so that only 0.1\% of sample items have perturbed appearance probabilities---then the number of latent samples stored is bounded above by $N+2\approx100$. This corresponds to storage of up to 100 partial items, and represents about a 0.1\% storage overhead relative to exponential decay (which stores at most one partial item). When $f$ decays slowly, however, the overhead can become substantial. Thus, as in T-TBS, very slowly decaying functions can be expensive and cumbersome from a practical standpoint.

\subsubsection{Monotonicity of inclusion probabilities} \label{sec:mono}

\revision{The remaining issue is that, as hinted above, we cannot simply choose $\rho_k=\min(1,n/W_k)$, even though this would produce the largest possible sample sizes. The problem is that, with this choice, the resulting appearance probability $p_{i,k}=\rho_k f(\alpha_{i,k})=\min(1,n/W_k)f(\alpha_{i,k})$ is no longer guaranteed to be nonincreasing in $k$. This would cause the algorithm to break. In particular, the downsizing operation for latent sample $L_i$ would actually try to \emph{upsize} the sample, which is impossible since any $\xB_i$ items not in the sample have been discarded.} For example, suppose that $n=1000$ and $W_{k-1}=2000$, and that $|\xB_i|=100$ and $f(\alpha_{i,k-1})=0.1$ for some batch $\xB_i$ with $i<k$. Then $p_{i,k-1}=\rho_{k-1}f(\alpha_{i,k-1})=(n/W_{k-1})f(\alpha_{i,k-1})=0.5\cdot 0.1=0.05$, so that the expected number of sample items from $\xB_i$ at time $t_{k-1}$ is $|\xB_i|\cdot 0.05=5$. Now suppose that $f(\alpha_{i,k})/f(\alpha_{i,k-1})=0.9$ so that $f(\alpha_{i,k})=0.09$, but that $W_k$ drops to 1000 through rapid decay in other batches. Then $\rho_k=(n/W_k)=1$ and hence $p_{i,k}=1\cdot 0.09=0.09>p_{i,k-1}$. 
%Thus we would want to store $|\xB_i|\cdot 0.09=9$ items, but this is impossible since we only have 5 items available.
The largest inclusion probability we can support is $\rho^*f(\alpha_{i,k})=0.05$, where $\rho^*=\rho_{k-1}f(\alpha_{i,k-1})/f(\alpha_{i,k})=5/9$. Our problem arises because the large 50\% decrease in the total weight  causes a factor of 2 increase in $\rho$, which overwhelms the factor of 0.9 decrease in $f$ and causes a net increase in the appearance probability. We note that this type of situation does not arise when the decay function is exponential because changes in weight are linked to changes in $f$ uniformly across all batches.

A general solution to this problem is to choose $\rho_k$ as large as possible to maximize the sample size, while ensuring monotonicity in the appearance probabilities. Specifically, for a given $i$ we want to choose $\rho_k$ such that $\rho_kf(\alpha_{i,k})\le \rho_{k-1}f(\alpha_{i,k-1})$. The maximum feasible value is $\rho^*_{k,i}=\rho_{k-1}f(\alpha_{i,k-1})/f(\alpha_{i,k})$ as in our example. Because the inequality must hold for every $i\le k$, the overall maximum feasible value is $\rho^*_1=1$ and $\rho^*_k=\min_{i \le k}\rho^*_{i,k}$ for $k>1$. \revision{(The definition of $\rho^*_k$ must actually be adjusted slightly to deal wth the consolidated sample, as in line~\ref{ln:expComp} of Algorithm~~\ref{alg:rtbs}.)}
As indicated by the previous example, $\rho^*_k$ imposes an upper bound on the sample weight, so that direct use of $\rho^*_k$ will produce samples whose expected size is generally less than $\min(W_k,n)$ at each $t_k$. In particular, the algorithm may return a sample of size less than $n$ at a time $t_k$ even when $W_k>n$. We deal with this problem by increasing the maximum sample weight, thereby trading off storage and sample size stability. In detail, we set $n'>n$ as the maximum sample weight, so that maximum sample weight and maximum sample size no longer coincide as in the exponential case. We then set
\begin{equation}\label{eq:setRho}
\rho_k=\min(1,n'/W_k,\rho^*_k)
\end{equation}
for $k\ge 1$. If the total sample weight $\bC_k$ exceeds $n$ at  time~$t_k$ then we downsample to reduce the weight to $n$ before outputting $S_k$ \revision{(line~\ref{ln:finalDownsize})}. Lines~\ref{ln:startRho}--\ref{ln:endRho} in Algorithm~\ref{alg:rtbs} implement the computation of $\rho_k$.  Arguments essentially identical to those in Theorem~\ref{th:rtbsIncl} show that $\bC_k=\rho_k W_k$ for all $k\ge 1$, where $\rho_k$ is now defined as above. Since $\rho_k=\min(1,n'/W_k,\rho^*_k)$, we have that the total sample weight is bounded by $\min(W_k,n')$. As discussed in Section~\ref{sec:footprint}, we can bound the number of latent samples in $\bL_k$, so that the overall sample footprint is bounded.
 
The following proposition helps in understanding the sample size dynamics and the precise way in which the choice of $n'$ trades off storage for sample-size stability. In the proposition, we set $W^*_k=\max_{1\le i\le k}W_i$.
\begin{proposition}\label{prop:ssBDx}
Algorithm~\ref{alg:rtbs} has the following properties:
\begin{romlist}
\item If $W^*_k\le n'$, then $\rho_i=1$ for $1\le i\le k$;
\item If $W^*_k>n'$, then $\rho_k\ge n'/W^*_k$; and
\item If the decay function $f$ is strictly decreasing and $\rho_k=\rho^*_k<\min(1,n'/W_k)$, then $\rho_k>\rho_{k-1}$.
\end{romlist}
\end{proposition}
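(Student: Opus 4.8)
The plan is to distill everything to a single monotonicity fact about the auxiliary quantities $\rho^*_k$ computed in lines~\ref{ln:startRho}--\ref{ln:endRho} of Algorithm~\ref{alg:rtbs}, then prove one inequality that yields (i) and (ii) at once, and finally read (iii) off the same fact. Since all three assertions are statements purely about the sequence $\{\rho_k\}$, it suffices to reason about the recursion $\rho_k=\min(1,n'/W_k,\rho^*_k)$ of~\eqref{eq:setRho}.

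\emph{Step 1: the core inequality $\rho^*_k\ge\rho_{k-1}$ for $k>1$.} By construction, $\rho^*_k$ is the minimum, over the batches currently retained in the state, of the quantities $\rho_{k-1}f(\alpha_{i,k-1})/f(\alpha_{i,k})$, together with $\rho_{k-1}e^{\lambda\Delta}$ when a consolidated sample exists (line~\ref{ln:expComp}). Since $\alpha_{i,k}=\alpha_{i,k-1}+\Delta$ and $f$ is nonincreasing, every ratio $f(\alpha_{i,k-1})/f(\alpha_{i,k})$ is $\ge1$, and $e^{\lambda\Delta}\ge1$ because $\lambda\ge0$; hence each term in the defining minimum is $\ge\rho_{k-1}$, so $\rho^*_k\ge\rho_{k-1}$. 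I would also record the strict version: if $f$ is \emph{strictly} decreasing then each ratio exceeds $1$, and the feasibility condition~\eqref{eq:bdLambda} forces $e^{-\lambda\Delta}<1$, i.e. $\lambda>0$, so every term strictly exceeds $\rho_{k-1}$ and $\rho^*_k>\rho_{k-1}$.

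\emph{Step 2: the combined bound $\rho_k\ge\min(1,n'/W^*_k)$ for all $k\ge1$, by induction on $k$.} The base case is $\rho_1=1\ge\min(1,n'/W^*_1)$. For $k>1$ I bound each of the three arguments of the minimum defining $\rho_k$ below by $\min(1,n'/W^*_k)$: the constant $1$ trivially; $n'/W_k$ because $W_k\le W^*_k$; and $\rho^*_k$ because $\rho^*_k\ge\rho_{k-1}\ge\min(1,n'/W^*_{k-1})\ge\min(1,n'/W^*_k)$, where the middle inequality is the induction hypothesis and the last uses $W^*_{k-1}\le W^*_k$. Taking the minimum gives the claim. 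Parts (i) and (ii) then follow cheaply: if $W^*_k\le n'$ then for every $i\le k$ also $W^*_i\le n'$, so Step~2 gives $\rho_i\ge1$, hence $\rho_i=1$ since $\rho_i\le1$ always; if $W^*_k>n'$ then $\min(1,n'/W^*_k)=n'/W^*_k$, so Step~2 gives $\rho_k\ge n'/W^*_k$. For part (iii), the hypothesis $\rho_k=\rho^*_k$ combined with the strict form of Step~1 (available because $f$ is strictly decreasing) gives $\rho_k=\rho^*_k>\rho_{k-1}$.

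The calculations are all routine; the only delicate point is Step~1, in particular checking that the consolidated-sample branch (line~\ref{ln:expComp}) obeys the same inequalities. For the strict version this is exactly where~\eqref{eq:bdLambda} must be invoked to conclude $\lambda>0$, which is what makes part (iii) continue to hold even after items have been absorbed into the consolidated sample; one should also confirm that the $i=k$ term of the minimum (if it is read as present in line~\ref{ln:startRho}) poses no problem, since $f(\alpha_{k,k})=f(0)=1$ and the corresponding ratio is again $\ge1$.
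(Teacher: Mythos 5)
Your proof is correct, and it shares the paper's key lemma---the monotonicity fact $\rho^*_k\ge\rho_{k-1}$ (strict when $f$ is strictly decreasing)---from which parts (i) and (iii) follow exactly as in the paper. Where you genuinely diverge is part (ii): the paper locates the time $t_{i'}$ at which $\rho$ attains its minimum over $[t_1,t_k]$, argues that at that time the minimum in \eqref{eq:setRho} must be achieved by the $n'/W_{i'}$ term (since $\rho^*_{i'}\ge\rho_{i'-1}\ge\rho_{i'}$ and some $\rho_j<1$), and then deduces $W_{i'}=W^*_k$; you instead prove the single inductive invariant $\rho_k\ge\min(1,n'/W^*_k)$, which delivers (i) and (ii) simultaneously and avoids the argmin bookkeeping. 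The two arguments rest on the same ingredients, but your induction is more direct and easier to verify. You are also more careful than the paper on one point: the paper justifies $\rho^*_k\ge\rho_{k-1}$ (and its strict version used in (iii)) only by ``$f$ is nonincreasing,'' silently passing over the consolidated-sample term $\rho'e^{\lambda\Delta}$ introduced in line~\ref{ln:expComp}; your observation that $\lambda\ge 0$ handles the weak inequality and that \eqref{eq:bdLambda} forces $\lambda>0$ when $f$ is strictly decreasing (so strictness survives once items have been absorbed into the consolidated sample) closes a small gap that the paper leaves implicit. Your side remark about the $i=k$ term in line~\ref{ln:startRho} is a harmless notational quibble about the algorithm rather than an issue with the proof.
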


Observe that, by \eqref{eq:setRho}, $\rho_k\le 1$ for all $k$; when $\rho_k=1$ (so that $W_k\le n'$), R-TBS achieves the largest possible sample weight of $\bC_k=W_k$. Assertion~(i) of the proposition implies that, as in the the case of exponential decay, R-TBS returns the maximum possible sample size until the total weight first exceeds $n'$. As indicated by the previous example, one problematic scenario occurs when $W_k$ exceeds $n'$ and then drops rapidly but stays above $n$. After such an ``adversarial'' drop, the necessity of ejecting existing items and filtering out newly arrived items to enforce \eqref{eq:expratio} means that R-TBS returns a sample of size smaller than $n$. In the case where $W^*_k>n'$, it follows from assertion~(ii) that the \textsc{Union}ed latent sample has a sample weight of $\bC_k=\rho_kW_k\ge n'W_k/W^*_k$. This weight will exceed $n$---so that we can return a final sample of size $n$ to the user after downsampling---as long as $W_k\ge (n/n')W^*_k$. Thus the larger the value of $n'$, the larger the adversarial drop from $W^*_k$ that can be tolerated without reducing the returned sample size below its optimal value of $n$. As indicated by assertion~(iii), we see that, when $\rho_k<\min(1,n'/W_k)$, so that $\rho_k=\rho^*_k$ and suboptimal sample sizes are being returned, the $\rho_k$ values will increase toward the optimal values of $\min(1,n'/W_k)$ as time progresses, so that the algorithm will steadily ``recover'' from the adversarial drop until the next adversarial drop occurs. An analogous situation occurs when the total weight decreases to a value  less than $n$ at an adversarial drop. In this case the sample weight $\bC_k$ falls below the desired value of $W_k$ but then recovers as time goes on.

\begin{figure*}[tbh]
 \centering
 	\subfigure[$\min_k W_k>n$; $n'=n$]{
	   \label{fig:WgtNb}\includegraphics[width=0.3\linewidth]{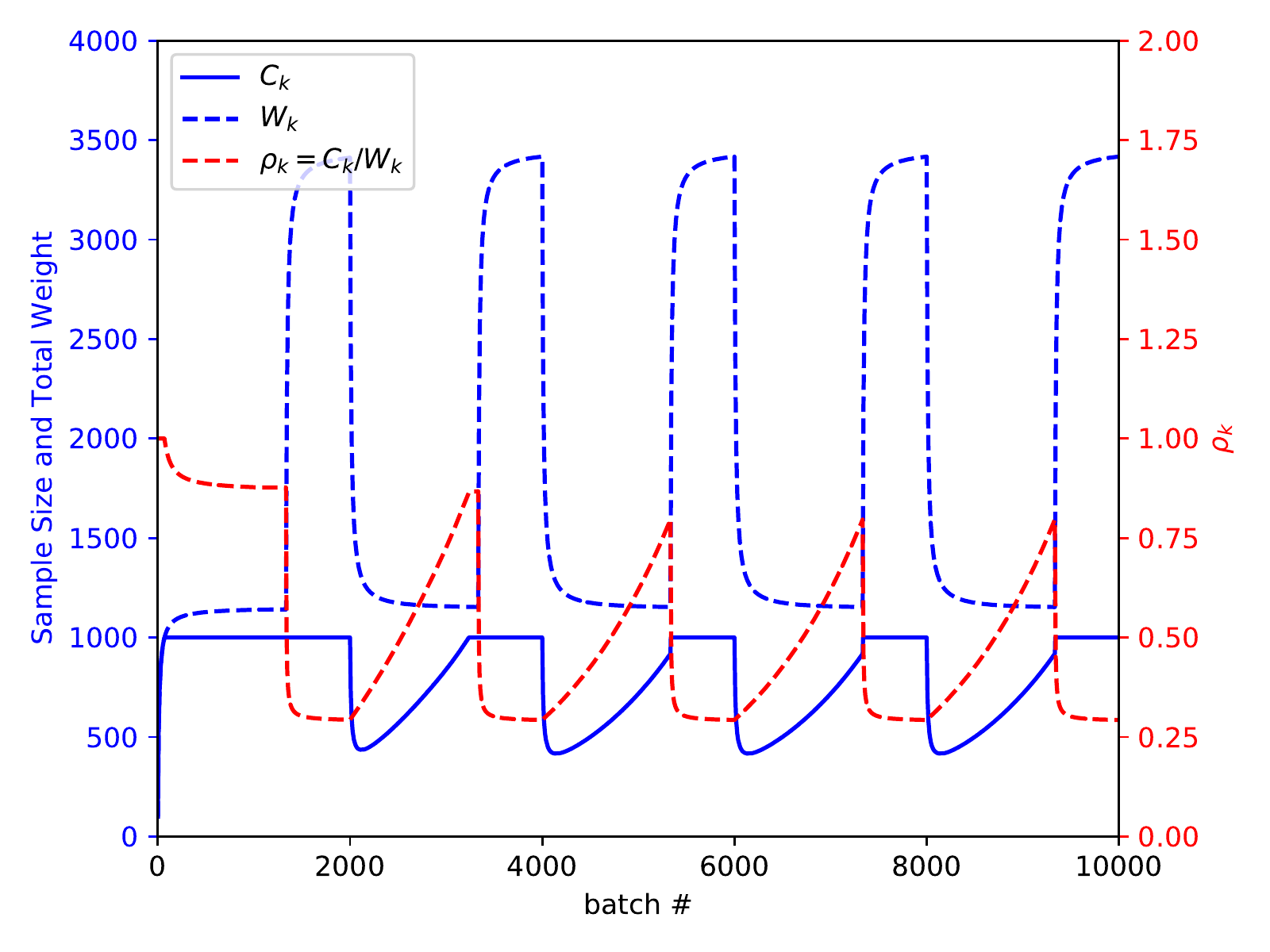}} 
	\subfigure[$\min_k W_k>n$; $n'=2n$]{
	   \label{fig:WgtN}\includegraphics[width=0.3\linewidth]{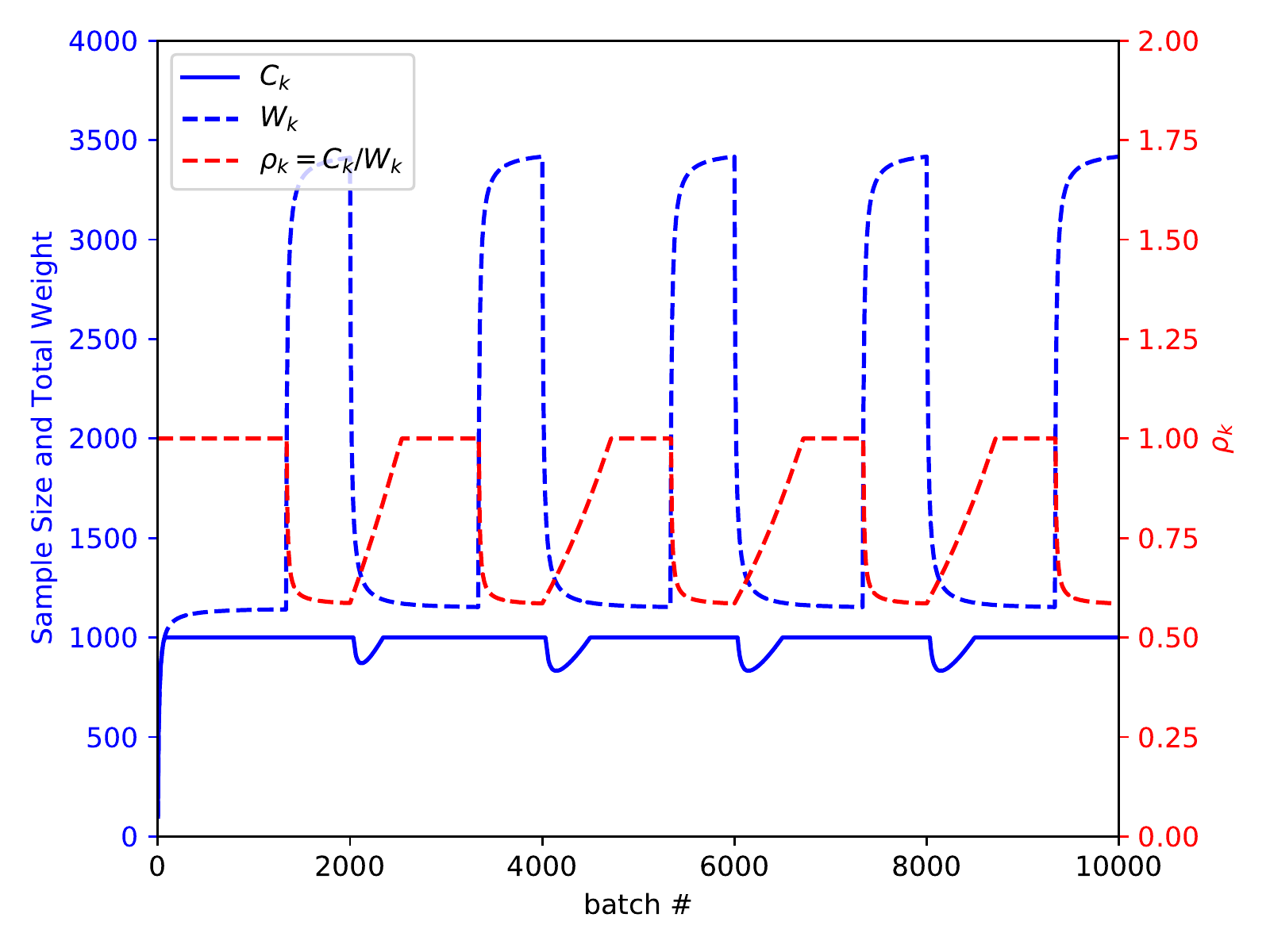}} 
	\subfigure[$\min_k W_k < n$; $n'=2n$]{
	   \label{fig:WlN}\includegraphics[width=0.3\linewidth]{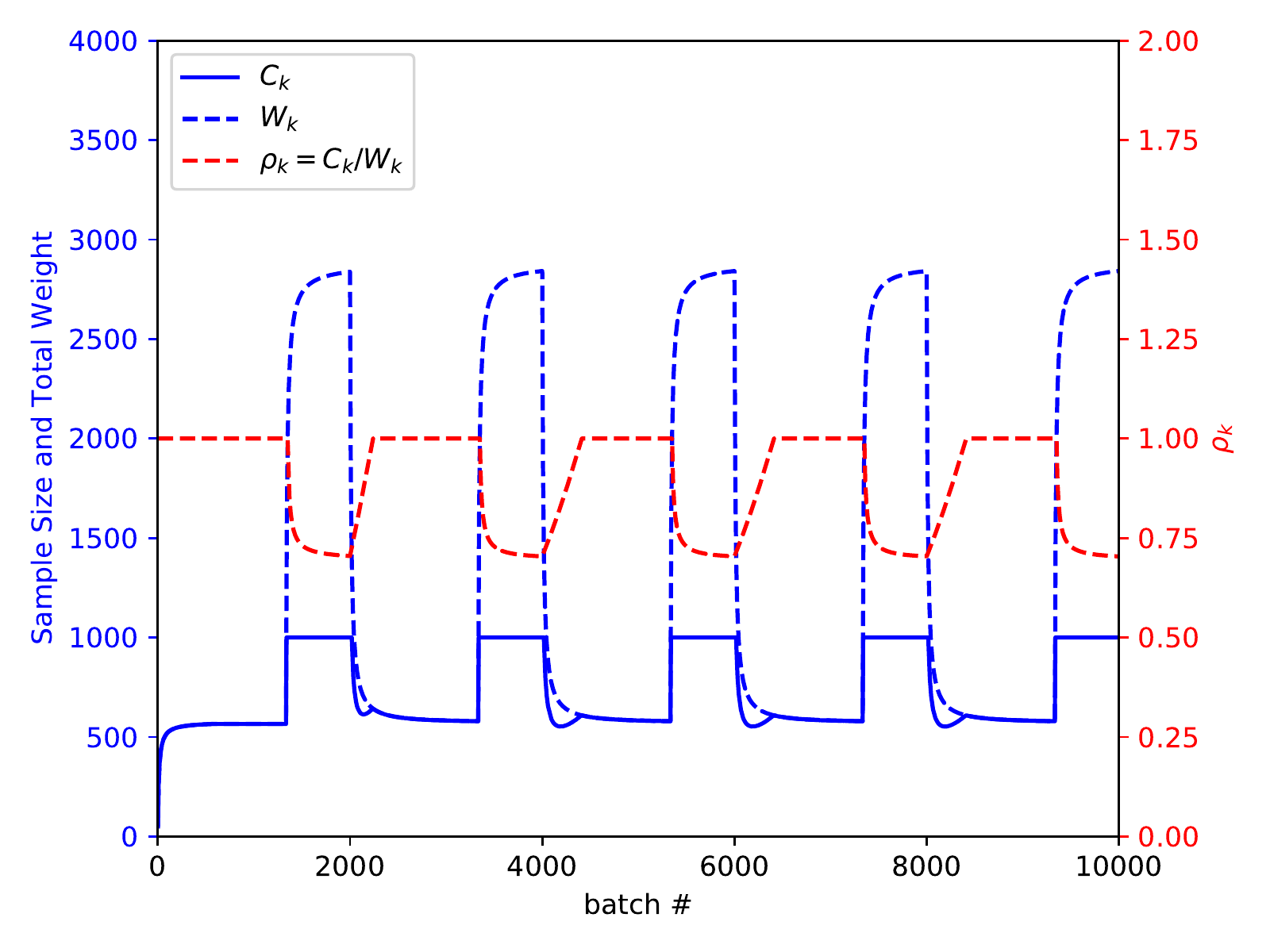}} 
\caption{\label{fig:rhoEffect}R-TBS sample size fluctuations over time with quadratic decay; $n=1000$}
\end{figure*}

This behavior is illustrated in Figure~\ref{fig:rhoEffect}. In the figure, arriving batch sizes rise and fall periodically, over periods of length $m=2000$ batches. In each period, the first $(2/3)m$ batches each contain $b_1$ items and the last $(1/3)m$ batches each contain $b_2$ items, where $b_1<b_2$, leading to a sequence of adversarial drops. In Figures~\ref{fig:WgtNb} and \ref{fig:WgtN}, we have $b_1=100$, $b_2=300$ and the total weight $W_k$ exceeds $n=1000$ at all times. Each sharp spike upwards in $W_k$ (due to the sudden jump in batch size) causes $\rho_k$ to fall, in order to maintain the proper inclusion probabilities. At each subsequent sharp decline in $W_k$, the sample size $\bC_k$ dips below $n$, even though the total weight exceeds $n$. The $\rho_k$ values then slowly recover, approaching 1, until the next spike causes the cycle to repeat. Comparing Figures~\ref{fig:WgtNb} and \ref{fig:WgtN}, we see that increasing the maximum sample weight from $n'=n$ to $n'=2n$ largely ameliorates the dips in the sample size. The sample-size dynamics in Figure~\ref{fig:WlN} are similar, but here $b_1=50$ and $b_2=250$, so that the total weight $W_k$ occasionally falls below $n$. Just after each sharp decline in $W_k$, there is a small dip where $\bC_k<W_k<n$ for a small number of sequential values of $k$; in the case of exponential decay we would have $\bC_k=W_k<n$ at these time points. Because we take $n'=2n$, the magnitude of the dips is small.

% !TEX root = tbs-tods.tex
% above command is for TeXShop

\section{Distributed TBS Algorithms}\label{sec:imp}

In this section, we describe how to implement distributed versions of T-TBS and R-TBS to handle large volumes of data. 

%Spark is a general-purpose distributed processing framework based on a functional programming paradigm. It has built-in support for batch processing, streaming, machine learning, SQL queries, and graph processing.

%A Resilient Distributed Dataset (RDD) is the core distributed memory abstraction in Spark to support fault-tolerant computation across a cluster of machines. An RDD is divided into partitions that are then distributed across the cluster for parallel processing. RDDs can either reside in the aggregate main memory of the cluster, or in efficiently serialized disk blocks. An RDD is immutable and cannot be modified, but a new RDD can be constructed by transforming an existing RDD. Spark utilizes both lineage tracking and checkpointing of RDDs for fault tolerance.

%A Spark program consists of a single driver and many executors. The driver of a Spark program orchestrates the control flow of an application, tracks the application state, and can request distributed computation over RDDs from the Spark framework. When the driver requests a distributed computation, executors across the cluster perform operations on the RDDs, creating new RDDs. 
%Because distributed computation is limited to transformations over RDDs, the Spark Driver is the only part of a Spark application that can contain mutable variables. %For fault tolerance, Spark uses a combination of lineage and checkpointing. 

\subsection{Overview of Distributed Algorithms}

The distributed T-TBS and R-TBS algorithms, denoted as D-T-TBS and D-R-TBS respectively, 
%follow similar steps as in their corresponding centralized versions shown in Algorithm~\ref{alg:targsamp} and Algorithm~\ref{alg:rtbs} but 
need to distribute large data sets across the cluster and parallelize the computation on them.

%Both distributed algorithms behave very similarly: at each time~$t$, they incrementally maintain the sample by deleting items in the existing sample and inserting items from incoming batch $\xB_t$ into the sample. Both algorithms also periodically checkpoint the sample as well as other system state variables to ensure the fault tolerance. The major difference between the two algorithms lies in how they choose items to delete and insert. D-R-TBS has to synchronize the workers to ensure that the overall sample remains bounded in the face of distributed inserts and deletes.  

For exponential decay, at any point in time, all items have exactly the same decay rate for their appearance probabilities, regardless of age, so there is no need to keep track of an item's age in the sample. This nice ``memoryless'' property of the exponential function makes the distributed implementation of both algorithms easier. We first describe D-T-TBS and D-R-TBS for exponential decay, and then discuss the extensions for general decay functions in Section~\ref{sec:general-implementation}.

\textbf{Overview of D-T-TBS:} The implementation of the D-T-TBS algorithm is very similar to the simple distributed Bernoulli time-biased sampling algorithm in~\cite{XieTSBH15}. It is embarrassingly parallel, requiring no coordination. At each time point~$t_k$, each worker in the cluster downsamples its partition of the sample with probability $p$, downsamples its partition of $\xB_k$ with probability $q$, and then takes a union of the resulting data sets. 

\textbf{Overview of D-R-TBS:} This algorithm, unlike D-T-TBS, maintains a bounded sample, and hence is not embarrassingly parallel. D-R-TBS first needs to aggregate the local partition sizes for the incoming batch $\xB_k$ to compute the total batch size $|\xB_k|$ and calculate the new total weight $W_k$. Then, based on $|\xB_k|$, $W_k$, and the current sample weight $C_k$, D-R-TBS computes the downsample rate for the items in the reservoir, as well as the downsample rate for the items in $\xB_k$. After that, D-R-TBS chooses the items in the reservoir to delete through a \textsc{Downsample} operation, selects items in $\xB_k$ (also via \textsc{Downsample}), inserts the selected items into the reservoir (via \textsc{Union}), and finally generates the sample (via \textsc{Getsample}). Each of the expensive operations \textsc{Downsample}, \textsc{Union}, and \textsc{Getsample} is performed in a distributed manner. They each require the master to coordinate among the workers. \textsc{Getsample} and \textsc{Union} operations are relatively straightforward. The most challenging part of D-R-TBS lies in choosing items to delete from the reservoir and selecting new items to insert into the reservoir. In Section~\ref{sec:updates}, we introduce two alternative approaches to determine the deleted and inserted items. 

Both D-T-TBS and D-R-TBS periodically checkpoint the reservoir as well as other system state variables to ensure fault tolerance---\revision{the fault tolerance of the distributed algorithms in Spark is discussed in Appendix~\ref{sec:spark-impl}}. The implementation details for D-T-TBS are mostly subsumed by those for D-R-TBS, so we focus on the latter.
%Since the D-T-TBS algorithm is much simpler, and its implementation details are mostly subsumed by that of the D-R-TBS algorithm, we focus henceforth on providing the implementation details of D-R-TBS.

\subsection{Distributed Data Structures}\label{sec:reservoir}

There are two important data structures in the D-R-TBS algorithm: the incoming batch and the reservoir. Conceptually, we view an incoming batch $\xB_k$ as an array of slots numbered from 1 through $|\xB_k|$, and the reservoir as an array of slots numbered from 1 through $\floor{C_k}$ containing full items plus a special slot for the partial item. For both data structures, data items need to be distributed into partitions due to the large data volumes. Therefore, the slot number of an item maps to a specific partition ID and a position inside the partition.

The incoming batch usually comes from a distributed streaming system, such as Spark Streaming; the actual data structure is specific to the streaming system , e.g., an incoming batch is stored as an RDD in Spark Streaming. As a result, the partitioning strategy of the incoming batch is opaque to the D-R-TBS algorithm. Unlike the incoming batch, which is read-only and discarded at the end of each time period, the reservoir data structure must be continually updated. An effective strategy for storing and operating on the reservoir is thus crucial for good performance. We now explore alternative approaches to implementing the reservoir.

%A naive approach that stores the reservoir at master is not viable because (i) it imposes an unacceptable additional burden on the master, (ii) it prevents parallelization of the operations on the reservoir since inserts and deletes must be synchronized at the master, and (iii) it can cause violations of the master memory limit  when the reservoir size is large. We now explore alternative approaches to implementing the reservoir in distributed ways.

\begin{figure}[t]
	\centering
	\includegraphics[width=3.5 in]{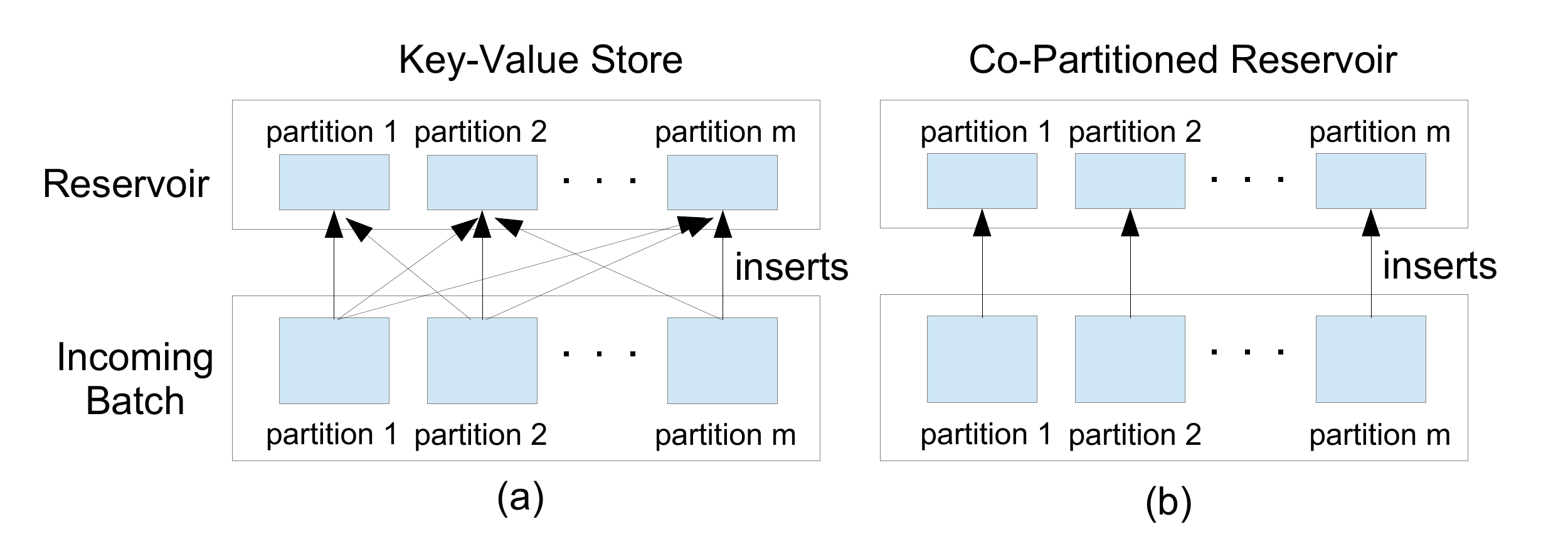}
	\caption{Design choices for implementing the reservoir}
	\label{fig:reservoir}
\end{figure}
\begin{figure}[t]
	\centering
	\includegraphics[width=3.5 in]{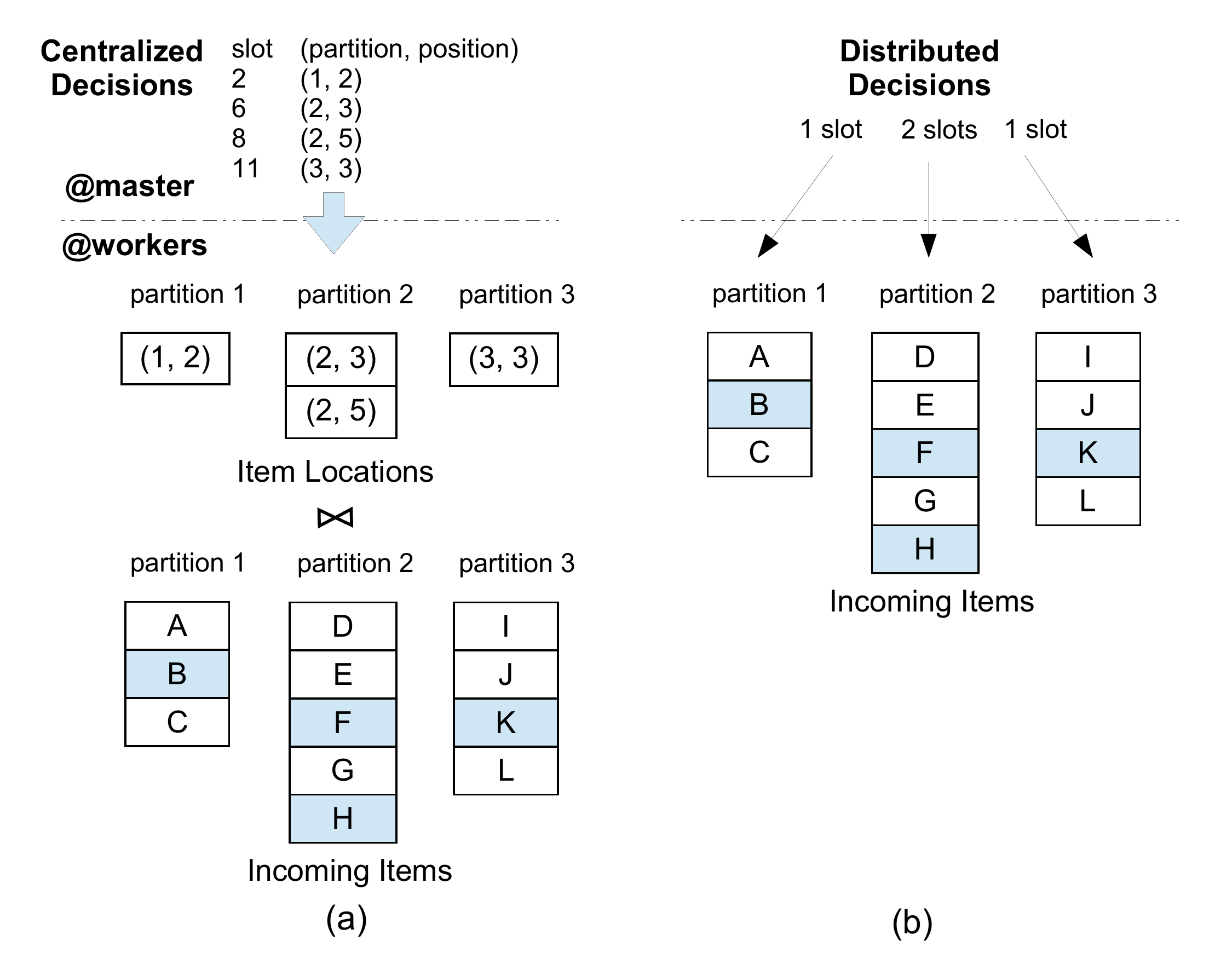}
	\caption{Retrieving insert items}
	\label{fig:insert}
\end{figure}

\textbf{Distributed in-memory key-value store:}
One quite natural approach implements the reservoir using an off-the-shelf distributed in-memory key-value store, such as Redis~\cite{redis} or Memcached~\cite{memcached}. In this scheme, each item in the reservoir is stored as a key-value pair, with the slot number as the key and the item as the value. The partial item has a special slot number such as -1. Inserts and deletes to the reservoir naturally translate into put and delete operations to the key-value store.

There are three major limitations to this approach. First, the hash-based or range-based data-partitioning scheme used by a distributed key-value store yields reservoir partitions that do not correlate with the partitions of incoming batch. As illustrated in Figure~\ref{fig:reservoir}(a), when items from a given partition of an incoming batch are inserted into the reservoir, the inserts touch many (if not all) partitions of the reservoir, incurring heavy network I/O. Second, key-value stores incur unnecessary concurrency-control overhead. For each batch, D-R-TBS already carefully coordinates the deletes and inserts so that no two delete or insert operations access the same slots in the reservoir and there is no danger of write-write or read-write conflicts. Finally, the key-value store approach requires an explicit slot number for each item. As a result, D-R-TBS needs to take extra care to make sure that after deletes and inserts of reservoir items, the slot numbers are still unique and contiguous, e.g. by recycling the slot numbers of deleted items for new inserts. The burden of keeping track of delete and insert slot numbers falls on the master node.

%its update performance is inferior to real key-value stores like Redis or MemcacheD and hence is dominated by them.

\textbf{Co-partitioned reservoir:}
An alternative approach implements a distributed in-memory data structure for the reservoir so as to ensure that the reservoir partitions coincide with the partitions from incoming batches, as shown in Figure~\ref{fig:reservoir}(b). This can be achieved in spite of the unknown partitioning scheme of the streaming system. Specifically, the reservoir is initially empty, and all items in the reservoir are from the incoming batches. Therefore, if an item from a given partition of an incoming batch is always inserted into the corresponding ``local'' reservoir partition and deletes are also handled locally, then the co-partitioning and co-location of the reservoir and incoming batch partitions is automatic. In addition, if the incoming batch is evenly distributed across the different partitions (which is often the case in practice), then the co-partitioned reservoir is also evenly distributed. For our experiments, we implemented the co-partitioned reservoir in Spark using the in-place updating technique for RDDs in~\cite{XieTSBH15}; see Appendix~\ref{sec:spark-impl}.

Note that, at any point in time, a given (conceptual) slot number in the reservoir maps to a specific partition ID and a position inside the partition. The mapping between a specific full item and its current slot number may change over time due to reservoir insertions and deletions. This does not cause any statistical issues, because the set-based R-TBS algorithm is oblivious to specific slot numbers. Thus the master only needs to keep track of the size of each partition and the position of the partial item. \revision{In Section~\ref{ssec:runtime}, we experimentally compare the key-value store and the co-partitioned reservoir approaches.}

%\textbf{IndexedRDD:}
%Another possibility is to represent the reservoir as an IndexedRDD~\cite{indexedrdd}, which was designed to allow updates to an RDD in a manner similar to those in a key-value store. We ruled out IndexedRDD for the several reasons. First, like Redis or Memcached, the partitions for the incoming batch and reservoir do not correlate. In addition, it is a third-party package for Spark that hasn't been actively developed on since Sep 2015, and no performance study has shown that it is better than highly optimized in-memory key-value stores like Redis or MemcacheD.

%\SmallCrunch
\subsection{Choosing Items to Delete and Insert}\label{sec:updates}

In order to bound the reservoir size, D-R-TBS requires careful coordination among workers when choosing the set of items to delete from, and insert into, the reservoir. At the same time, D-R-TBS must ensure the statistical correctness of random number generation and random permutation operations in the distributed environment. We consider two possible approaches.

\textbf{Centralized decisions:}
In the most straightforward approach, the master makes centralized decisions about which items to delete and insert. For deletes, the master generates slot numbers of the items in the reservoir to be deleted, which are then mapped to the actual data locations in a manner that depends on the representation of the reservoir (key-value store or co-partitioned reservoir). For inserts, the master generates the slot numbers of the incoming items $\xB_k$ at time~$t_k$ that need to be inserted into the reservoir. Suppose that $\xB_k$ comprises $m\ge 1$ partitions. Each generated slot number $i\in\{1,2,\ldots,|\xB_k|\}$ is mapped to a partition $p_i$ of $\xB_k$ (where $1\le p_i\le m$) and a position $r_i$ inside partition~$p_i$. Denote by $\mathcal{Q}$ the set of ``item locations'', i.e., the set of $(p_i,r_i)$ pairs. In order to perform the inserts, D-R-TBS needs to first retrieve the actual items based on the item locations. This can be achieved with a join-like operation between $\mathcal{Q}$ and $\xB_k$, with the $(p_i,r_i)$ pair matching the actual location of an item inside $\xB_k$. To optimize this operation, we make $\mathcal{Q}$ a distributed data structure and use a customized partitioner to ensure that all pairs $(p_i,r_i)$ with $p_i=j$ are co-located with partition~$j$ of $\xB_k$ for $j=1,2,\ldots,m$. Then a co-partitioned and co-located join can be carried out between $\mathcal{Q}$ and $\xB_k$, as illustrated in Figure~\ref{fig:insert}(a) for $m=3$. The resulting set of retrieved insert items, denoted as $\mathcal{S}$, is also co-partitioned with $\xB_k$ as a by-product. After that, the actual deletes and inserts are then carried out depending on how reservoir is stored, as discussed below.

When the reservoir is implemented as a key-value store, the deletes can be directly applied based on the slot numbers. For inserts, the master takes each generated slot number of an item in $\xB_k$ and chooses a companion destination slot number in the reservoir into which the $\xB_k$ item will be inserted. This destination reservoir slot might currently be empty due to an earlier deletion, or might contain an item that will now be replaced by the newly inserted batch item. After the actual items to insert are retrieved as described previously, the destination slot numbers are used to put the items into the correct locations in the key-value store. 

When the co-partitioned reservoir is used, the delete slot numbers in the reservoir are mapped to $(p_i,r_i)$ pairs of partitions of the reservoir and positions inside the partitions. As with inserts, we again use a customized partitioner for the set of pairs $\mathcal{R}$ such that deletes are co-located with the corresponding reservoir partitions. Then a join-like operation on $\mathcal{R}$ and the reservoir performs the actual delete operations on the reservoir. For inserts, we simply use another join-like operation on the set of retrieved insert items $\mathcal{S}$ and the reservoir to add the corresponding insert items to the co-located partition of the reservoir. In this approach, we don't need the master to generate destination reservoir slot numbers for these insert items, because we view the reservoir as a set when using a co-partitioned reservoir data structure.%; the slot number of an item can change from time to time, but the statistical correctness of the sampling process is not affected.

\textbf{Distributed decisions:}
The above approach requires the master to generate a large number of slot numbers, so we now explore an alternative approach that offloads the slot number generation to the workers while still ensuring the statistical correctness of the computation. This approach has the master choose only the number of deletes and inserts per worker according to an appropriate multivariate hypergeometric distribution. For deletes, each worker chooses random victims from its local partition of the reservoir based on the number of deletes given by the master. For inserts, the worker 
receives the number of inserts $I$ and then randomly and uniformly selects $I$ items from its local partition of the incoming batch $\xB_k$. Figure~\ref{fig:insert}(b) depicts how the insert items are retrieved under this decentralized approach. We use the technique in~\cite{Haramoto} for parallel pseudo-random number generation.

The foregoing distributed decision making approach works only when the co-partitioned reservoir data structure is used. This is because the key-value store representation of the reservoir requires a target reservoir slot number for each insert item from the incoming batch, and the target slot numbers have to be generated in such a way as to ensure that, after the deletes and inserts, all of the slot numbers are still unique and contiguous in the new reservoir. This requires a lot of coordination among the workers, which inhibits truly distributed decision making.

\subsection{Extensions to Generalized TBS Algorithms}\label{sec:general-implementation}

We now discuss the extensions to the above distributed implementations to make them work for general (non-exponential) decay functions.

\textbf{Changes to the reservoir data structure:}
Use of general decay functions requires some significant changes to the reservoir data structure. In the absence of the special memoryless property for exponential decay functions, the reservoir needs to keep track of the age for each item, since the decay rate for an item depends explicitly on its age. In addition, the generalized D-R-TBS algorithm also requires storage of multiple latent samples, and hence multiple partial items, in the reservoir---one per each age, up to the cutoff age, plus one for the consolidated latent sample of older items.

In the key-value store approach, D-R-TBS needs to record the arrival time of each item in the sample by augmenting the value component of the key-value pair that represents the item; the item's age can then be calculated on the fly. D-R-TBS also needs to maintain multiple special key-value pairs for the partial items. As discussed in Section~\ref{sec:updates}, the key-value store approach only works when centralized decisions determine which items to delete from, and insert into, the reservoir. Thus for general decay functions, the master node needs to track the arrival time corresponding to each slot number of the reservoir, so that it can apply different decay rates to differently aged items when deleting items from the reservoir.  

In the co-partitioned reservoir approach, D-R-TBS also needs to record the item arrival times and to support multiple latent samples corresponding to different ages. In the implementation, we can either simply add an arrival-time field to each item in the reservoir or organize the items for each arrival time together in each partition. We choose the latter for the ease of the downsampling process. Assuming that each incoming batch is evenly distributed across partitions, then the items corresponding to each arrival time are also evenly distributed across the reservoir partitions, and hence the overall reservoir structure is well balanced. Finally, the master needs to record the number of items for each arrival time in each partition of the reservoir and the positions of the partial items for each arrival time. This overhead is much smaller than the overhead of tracking the arrival time corresponding to each slot number, as needed in the key-value-store approach. Thus, for general decay functions, the co-partitioned reservoir approach dominates the key-value store approach even more than in the exponential decay setting.

\textbf{Changes to item insertion and deletion:}
The presence of general decay functions adds complexity to the process of choosing items to delete from, and insert into, the reservoir, since different decay rates apply to different latent samples in the reservoir. As mentioned before, the centralized-decision approach requires the master to record the arrival time corresponding to each slot number of the reservoir in order to select items to delete according to the correct probabilities. Moreover, as in the exponential case, special care needs to be taken to make sure that the slot numbers remain unique and contiguous after deletes and inserts. In contrast, for the distributed-decision approach, the master merely needs to decide the number of items to delete for each arrival time in each partition; the workers carry out the actual deletes. The remaining aspects of the distributed implementation stay the same.

% !TEX root = tbs-tods.tex
% above command is for TeXShop

%\begin{figure*}[bth]
%	\centerline{
%		\epsfxsize=2.2in \epsffile{}
%		\hfill
%		\epsfxsize=1.8in \epsffile{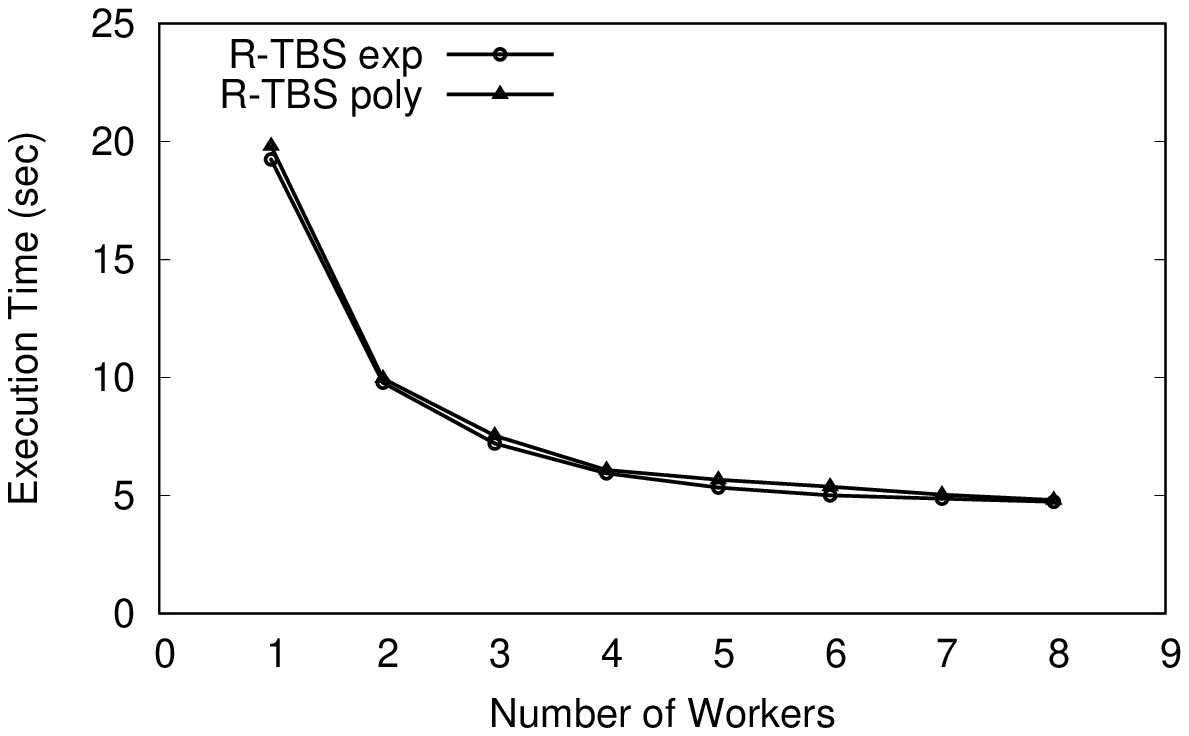}
%		\hfill
%		\epsfxsize=1.8in \epsffile{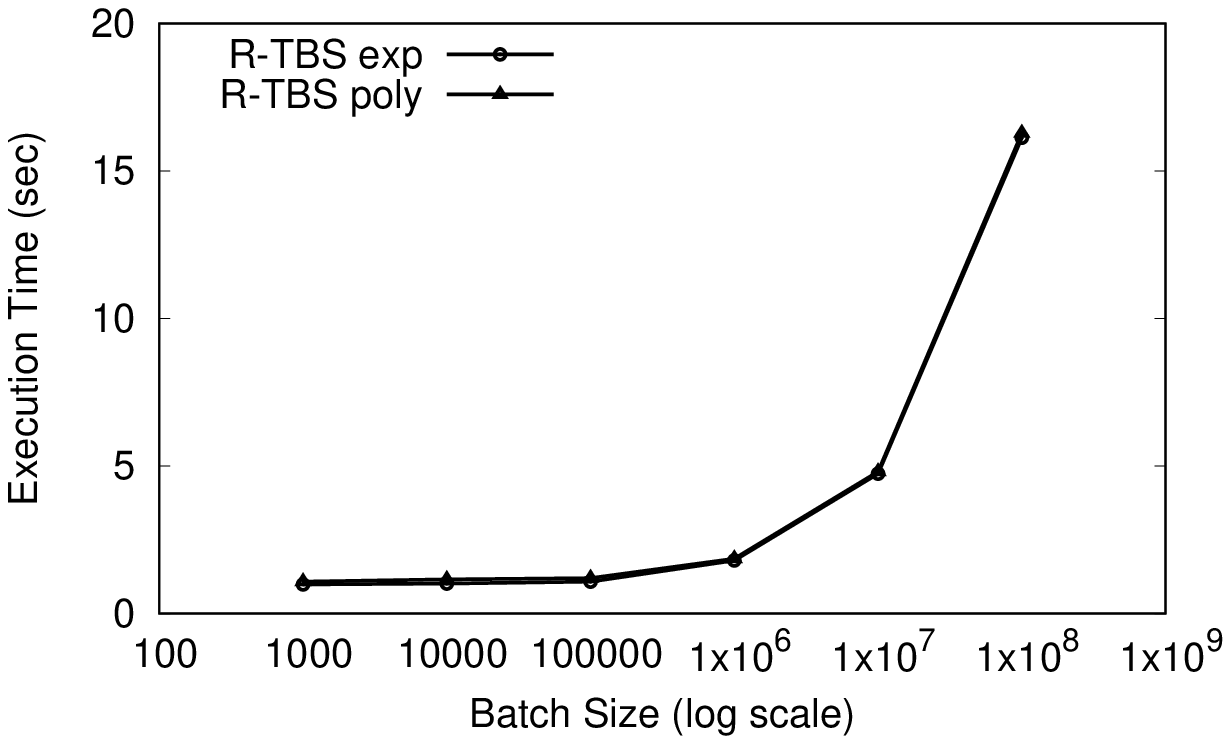}
%	}
%	\SmallCrunch
%	%    \SingleSpace
%	\centerline{
%		\parbox{2.4in}{\caption{Distributed Implementations}
%			\label{fig:avg_times}}
%		\hfill
%		\parbox{1.8in}{\caption{Scale Out of D-R-TBS}
%			\label{fig:scale_out}}
%		\hfill
%		\parbox{1.8in}{\caption{Scale Up of D-R-TBS}
%			\label{fig:scale_up}}
%	}
%	\BigCrunch
%	%    \DoubleSpace
%	%\Skip
%\end{figure*}

%\begin{figure}[tbh]
%	\centering
%	\subfigure[Scale out]{
%		\label{fig:scale_out}\includegraphics[width=0.485\linewidth]{figs/scaleout.eps}} 
%	\subfigure[Scale up]{
%		\label{fig:scale_up}\includegraphics[width=0.485\linewidth]{figs/scaleup.eps}} 
%	\caption{Scalability of D-R-TBS}
%	\BigCrunch
%\end{figure}

\section{Experiments}\label{sec:exp}

In this section, we study the empirical performance of distributed implementations of the R-TBS and T-TBS algorithms, and demonstrate the potential benefit of using them for model retraining in online model management. 

%\revision{As discussed in the introduction, machine learning models need to be updated over time and need to perform data sampling in order to make retraining times feasible. In this Section, we demonstrate the potential benefits of using the R-TBS sampling scheme to perform data sampling in comparison to using two other common data sampling approaches: a simple sliding window (SW) and uniform reservoir sampling (Unif). We compare the three approaches across both classification and regression tasks and use a variety of representative ML models which include both parametric and non-parametric approaches. We focus on the case of evolving data streams as this is when model retraining is most necessary. In the interests of space, we report results for R-TBS only; we found that whenever T-TBS applies---i.e. when the mean batch size is known and constant---the quality of ML algorithm predictions is very similar to those under R-TBS, since both algorithms use the same decay functions.} 

\textbf{Experimental Setup: }
%All performance experiments were conducted on a cluster of 13 IBM System x iDataPlex dx340 servers. Each has two quad-core Intel Xeon E5540 2.8GHz processors and 32GB of RAM. Servers are interconnected using a 1 Gbit Ethernet and each server runs Ubuntu Linux, Java 1.7 and Spark 1.6. One server is dedicated to run the Spark coordinator and, each of the remaining 12 servers runs Spark workers. There is one worker per processor on each machine, and each worker is given all 4 cores to use, along with 8~GB of dedicated memory. All other Spark parameters are set to their default values. We used Memcached 1.4.33 as the key-value store in our experiments.
We implemented R-TBS and T-TBS on Spark (Appendix~\ref{sec:spark-impl}
contains Spark-specific implementation details). All performance experiments were conducted on a cluster of 9 ProLiant DL160 G6 servers. Each has two twelve-core Intel Xeon X5650 CPUs at 2.66GHz, with 15GB of RAM and a single 7200 RPM 500GB hard drive. Servers are interconnected using a 1 Gbit Ethernet and each server runs CentOS release 6.5, Java 1.8 and Spark 2.3. One server is dedicated to run the Spark coordinator and, each of the remaining servers runs a single Spark worker with parallelism 10, along with 10~GB of dedicated memory. All other Spark parameters are set to their default values. We used Memcached 1.4.4 as the key-value store in our experiments.

We note that the experimental setup in this paper differs from that in our previous work~\cite{HentschelHT18}. 
%, as we switched to a different cluster of machines (due to loss of access to the old cluster). 
Besides using a different hardware configuration, we also upgraded Spark from version 1.6 to 2.3 and Memcached from version 1.4.33 to 1.4.4. All reported experimental results correspond to the current hardware and software configuration. The performance of Spark 2.3 has dramatically improved over 1.6, and all algorithms benefited from these improvements.  Although the actual runtimes of each algorithm changed, the qualitative results in the exponential case are generally consistent with those in \cite{HentschelHT18}. Interestingly, we observed a reduction in the performance gap between R-TBS and T-TBS; see below. 

For all experiments, data was streamed in from HDFS using Spark Streaming's microbatches. Decay occurs according to a time scale such that the batch-arrival interval is $\Delta=1$ in the decay formulas. We report run time per batch as the average over 100 batches, discarding the first round from this average because of Spark startup costs. We experiment with two versions of R-TBS. The first, \textit{R-TBS exp}, refers to the specialized version of R-TBS for exponential decay with $f(\alpha) = e^{-\lambda \alpha}$. The second, \textit{R-TBS poly}, refers to the generalized R-TBS with a shifted polynomial decay function $f^{(d)}(\alpha) = (1+d)^s/(1+d + \alpha)^s$; as discussed in Section~\ref{sec:ttbsAlg}, the shift is necessary so that item's weights do not become too small too quickly. We similarly experiment with T-TBS exp and T-TBS poly. Unless otherwise stated, default values of $\lambda = 0.07$ and $n = 2 \times 10^{7}$ are used for R-TBS exp.  For R-TBS poly, we use $n = 2 \times 10^{7}$ unless otherwise stated and the other parameters take on values $(s, d, n', n, \lambda, \delta_{1}, \delta_{2}) = (2, 10, 2n, n, 0.1, 0.01, 0.001n)$. Finally, for the runtime experiments, each batch contains 10 million items unless otherwise stated.  

\revision{\subsection{Summary of Results}}
\revision{Before we dive into the detailed experiments, we first highlight some of the major takeaways from our empirical studies. }

\revision{\textbf{Sample Size Behavior: } Section~\ref{sec:samplesize} first compares the sample size behavior of T-TBS and R-TBS. The results empirically validate our previous assertions about T-TBS: although it is a much simpler algorithm than R-TBS, it suffers from sample overflow and/or underflow, especially when the mean batch size changes over time or the batch size fluctuates strongly. R-TBS maintains a bounded, relatively stable sample size throughout.}

\revision{\textbf{Performance of Distributed Implementation:} Section \ref{ssec:runtime} evaluates the various distributed implementation strategies described in Section \ref{sec:imp}. The most highly optimized implementation exhibits an almost 10x performance benefit relative to the most naive implementation.
	
\textbf{ML Applications:} Sections \ref{ssec:knn} through \ref{ssec:naivebayes} compare the the accuracy and robustness of R-TBS, simple sliding windows (SW), and uniform sampling (Unif) in three representative ML applications: a kNN classifier, a naive Bayes classifier, and a linear regression predictor. These applications span both parametric and nonparametric approaches, as well as both classification and prediction tasks. We find that R-TBS tends to have better accuracy and robustness than SW and Unif in the presence of reoccurring patterns in both single-change and periodic-change regimes. In addition, for linear regression, we compared retraining of models using R-TBS with online learning approaches adapted to batched streaming inputs. These online approaches, which work only for parametric models, can adapt to drastic changes more quickly than R-TBS, yielding better accuracy, but when either the changes are not as drastic or the ML model is complex,  with a large number of parameters, R-TBS performs better.}

%Sections \ref{ssec:knn} through \ref{ssec:naivebayes} compare the benefits of using the R-TBS against simple sliding windows (SW) and uniform sampling (Unif). We find that R-TBS tends to have better accuracy and robustness compared to both SW and Unif in the case of single-drastic-change and periodic-reoccurring-change phenomena. This holds across a variety of models including both parametric and non-parametric approaches, as well as for both classification and regression tasks. Additionally, for the case of regression, we compare with online approaches adapted to batched streaming inputs. For these approaches, which work only in parametric settings, we notice a tradeoff where online approaches are faster to react to changes in the input distribution but have a hard time converging to the optimal parameter values for more complex models. This is because the larger learning rates needed for quick adaptation to new phenomena create larger jumps in reacting to a single incoming batch; for high dimensional models, no value of the learning rate does well enough to beat sampling approaches which have the benefit of more data at any given time step. } \reminder{I am not so sure about what the last statement mean -Yuanyuan}

% and the decay parameter is $\lambda = 0.07$. %Batches are run consecutively and the next batch starts as soon as the current batch completes.

\begin{figure*}[tbh]
 \centering
	\subfigure[Growing Batch Size]{
	   \label{fig:grow}\includegraphics[width=0.23\linewidth]{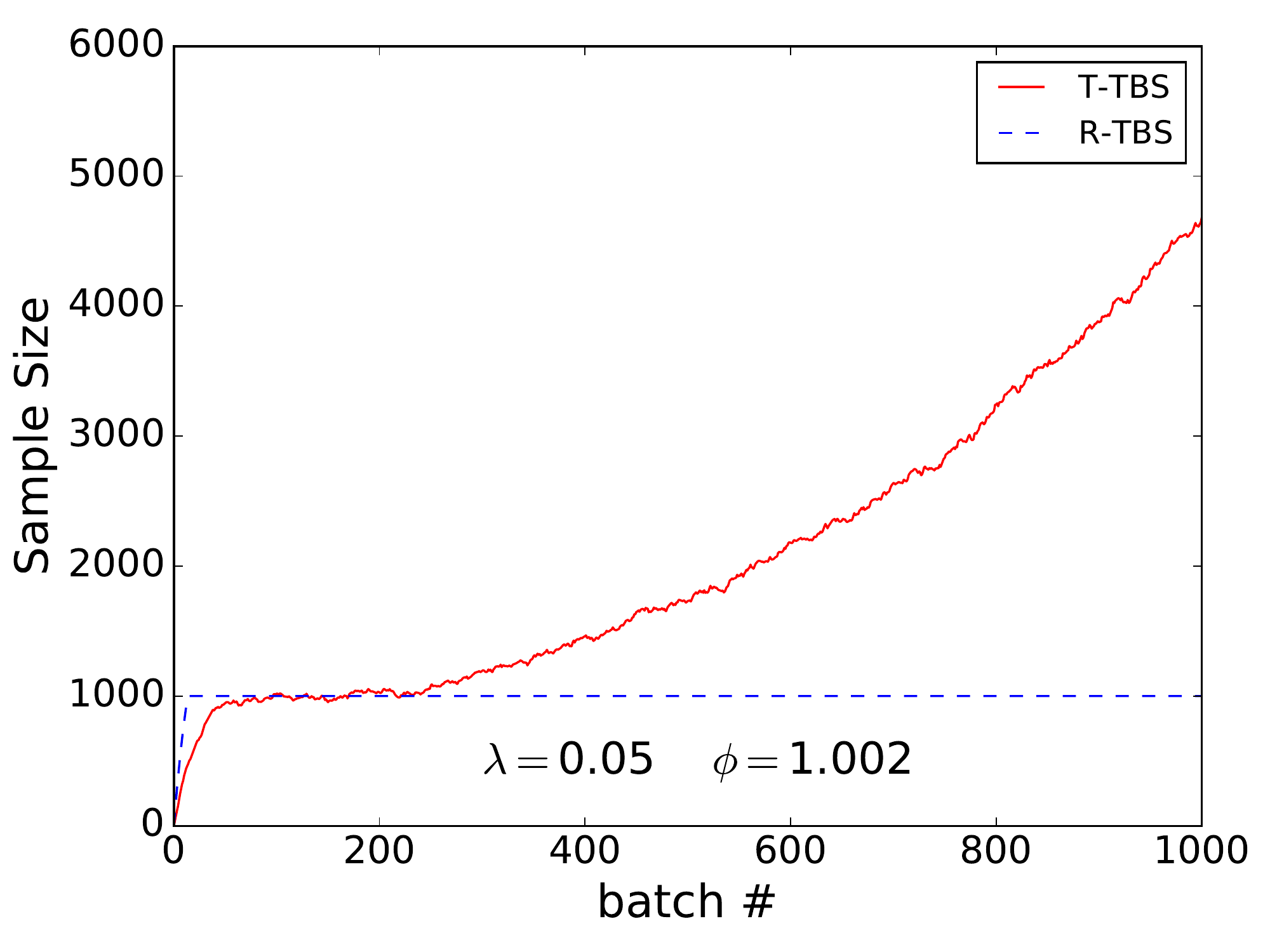}} 
	\subfigure[Stable Batch Size (Det.)]{
	   \label{fig:stableD}\includegraphics[width=0.23\linewidth]{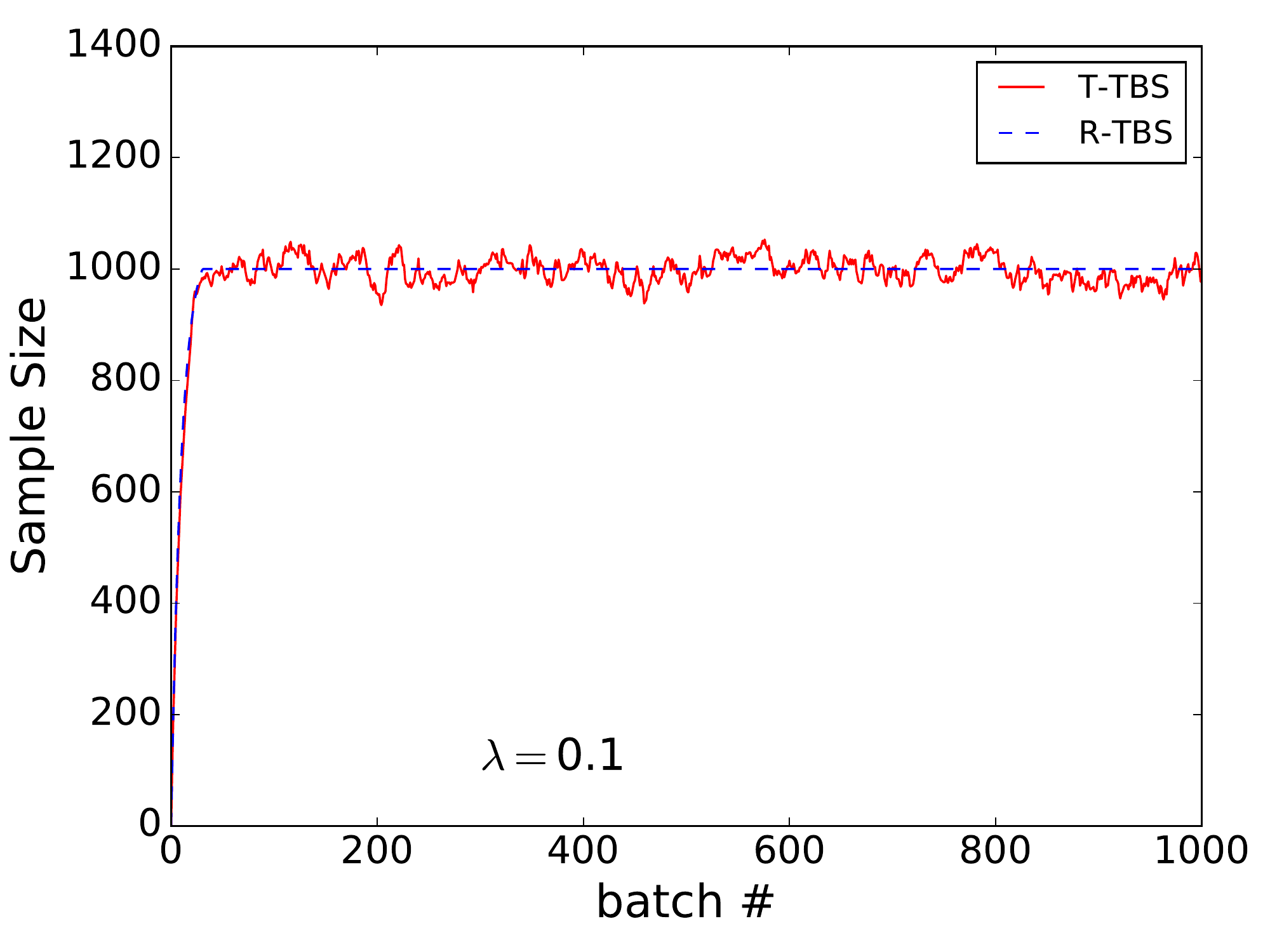}} 
	 \subfigure[Stable Batch Size (Unif.)]{
	   \label{fig:stableU}\includegraphics[width=0.23\linewidth]{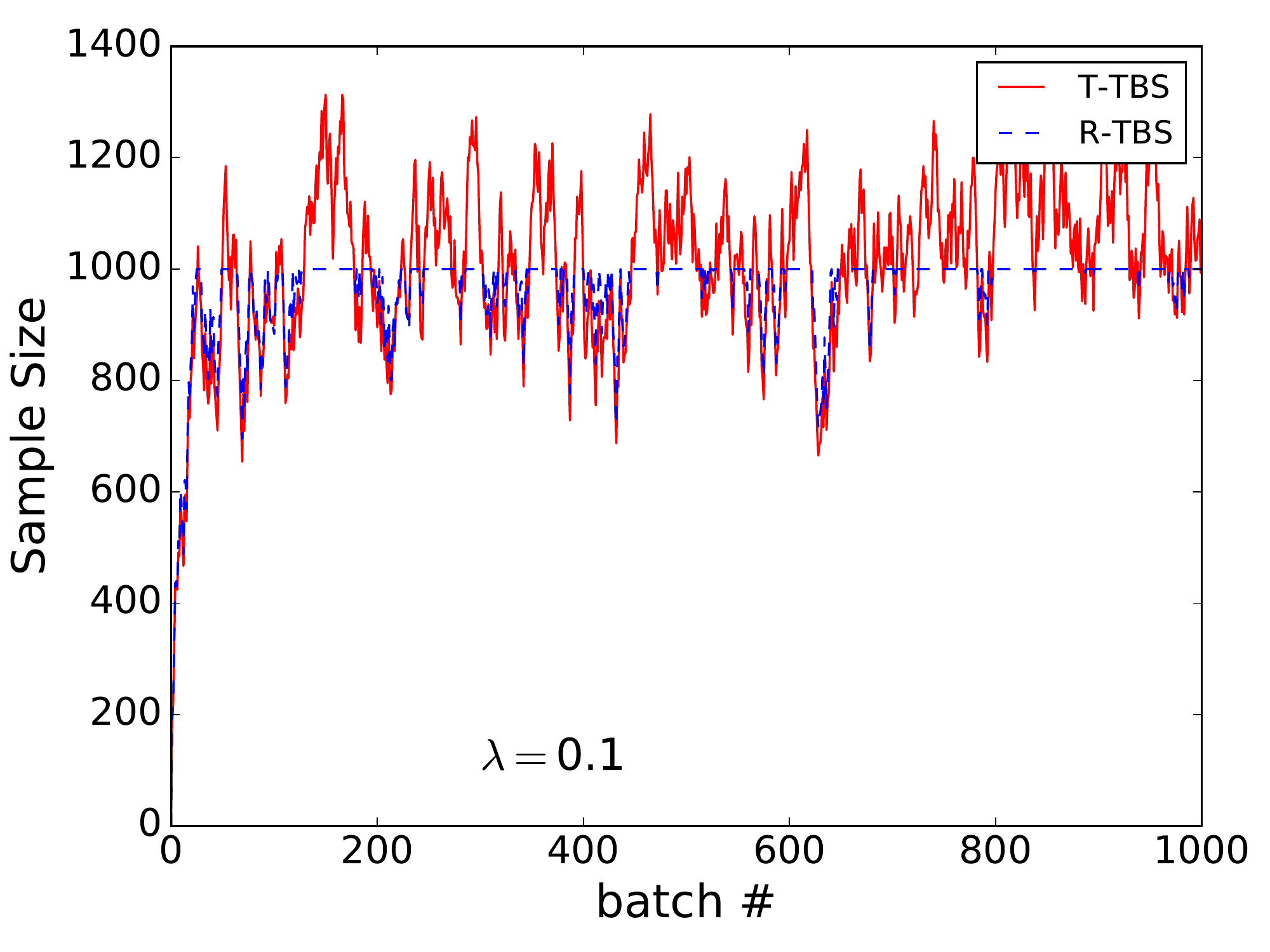}}
	 \subfigure[Decaying Batch Size]{
	   \label{fig:decay2}\includegraphics[width=0.23\linewidth]{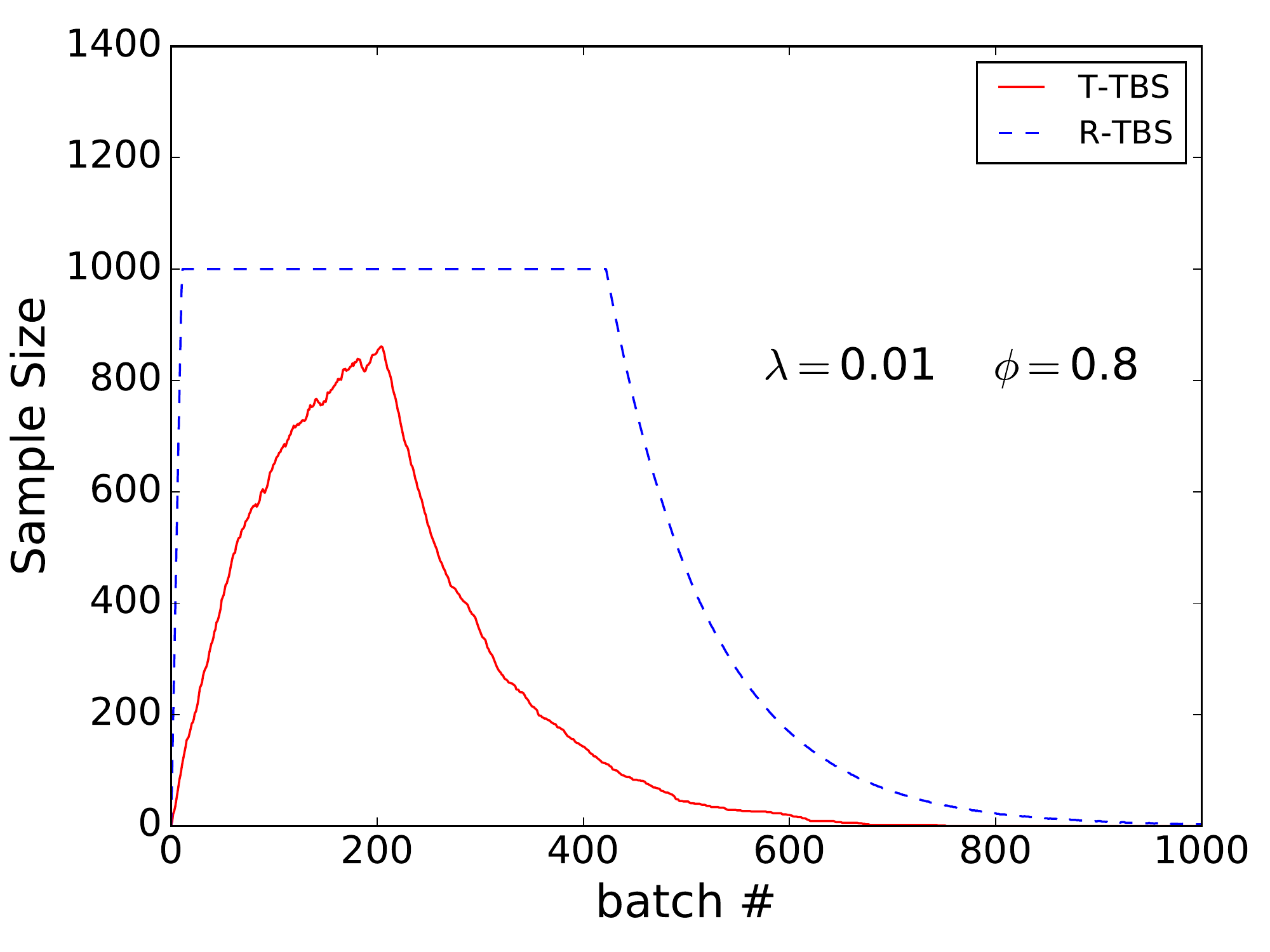}}
\caption{\label{fig:breakdown}Sample size behavior for T-TBS and R-TBS under exponential decay; $\lambda=$ decay rate and $\phi=$ batch size multiplier}
\end{figure*}

\begin{figure*}[tbh]
 \centering
	\subfigure[Growing Batch Size]{
	   \label{fig:growen}\includegraphics[width=0.23\linewidth]{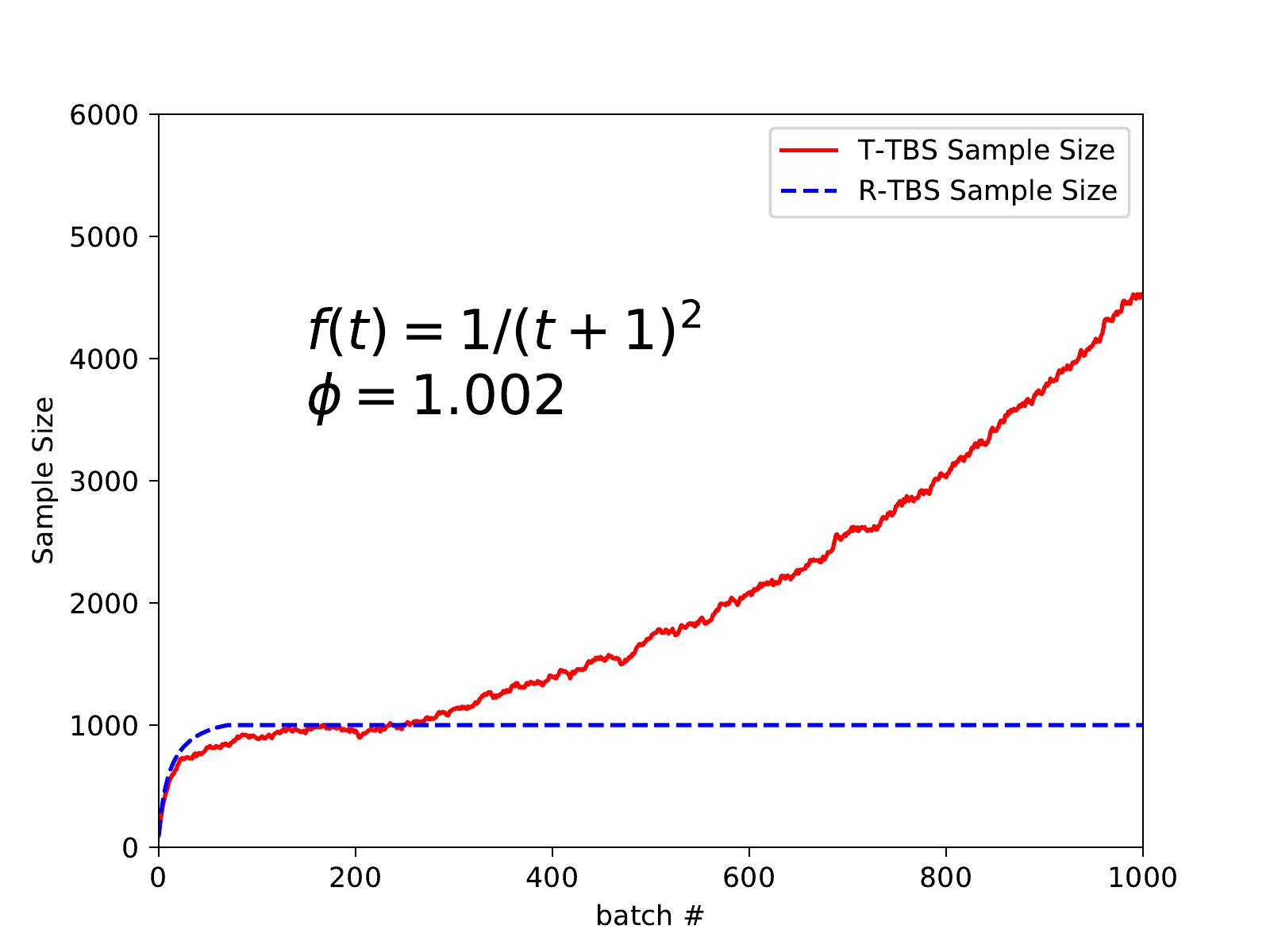}} 
	\subfigure[Stable Batch Size (Det.)]{
	   \label{fig:stableDen}\includegraphics[width=0.23\linewidth]{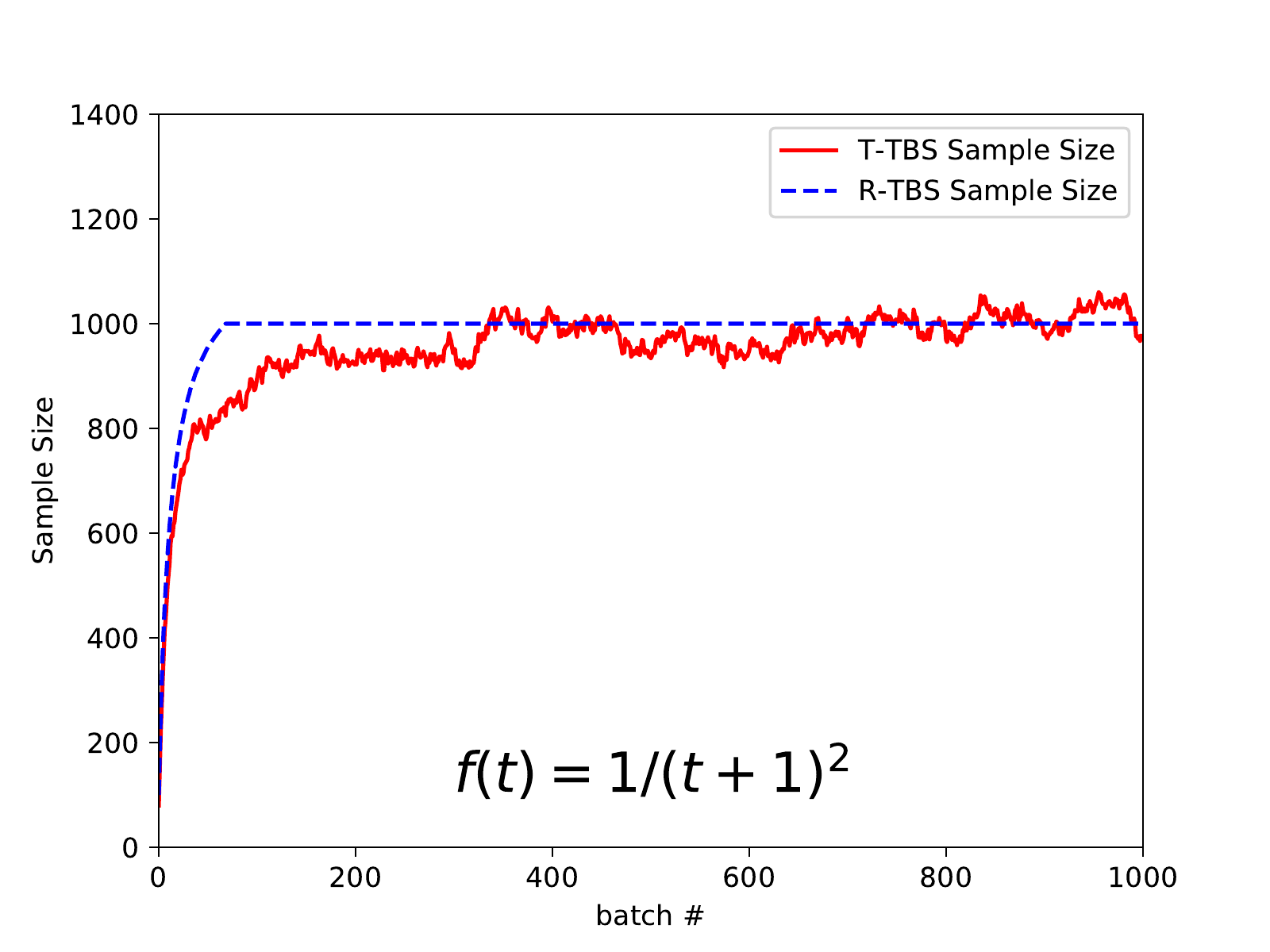}} 
	 \subfigure[Stable Batch Size (Unif.)]{
	   \label{fig:stableUen}\includegraphics[width=0.23\linewidth]{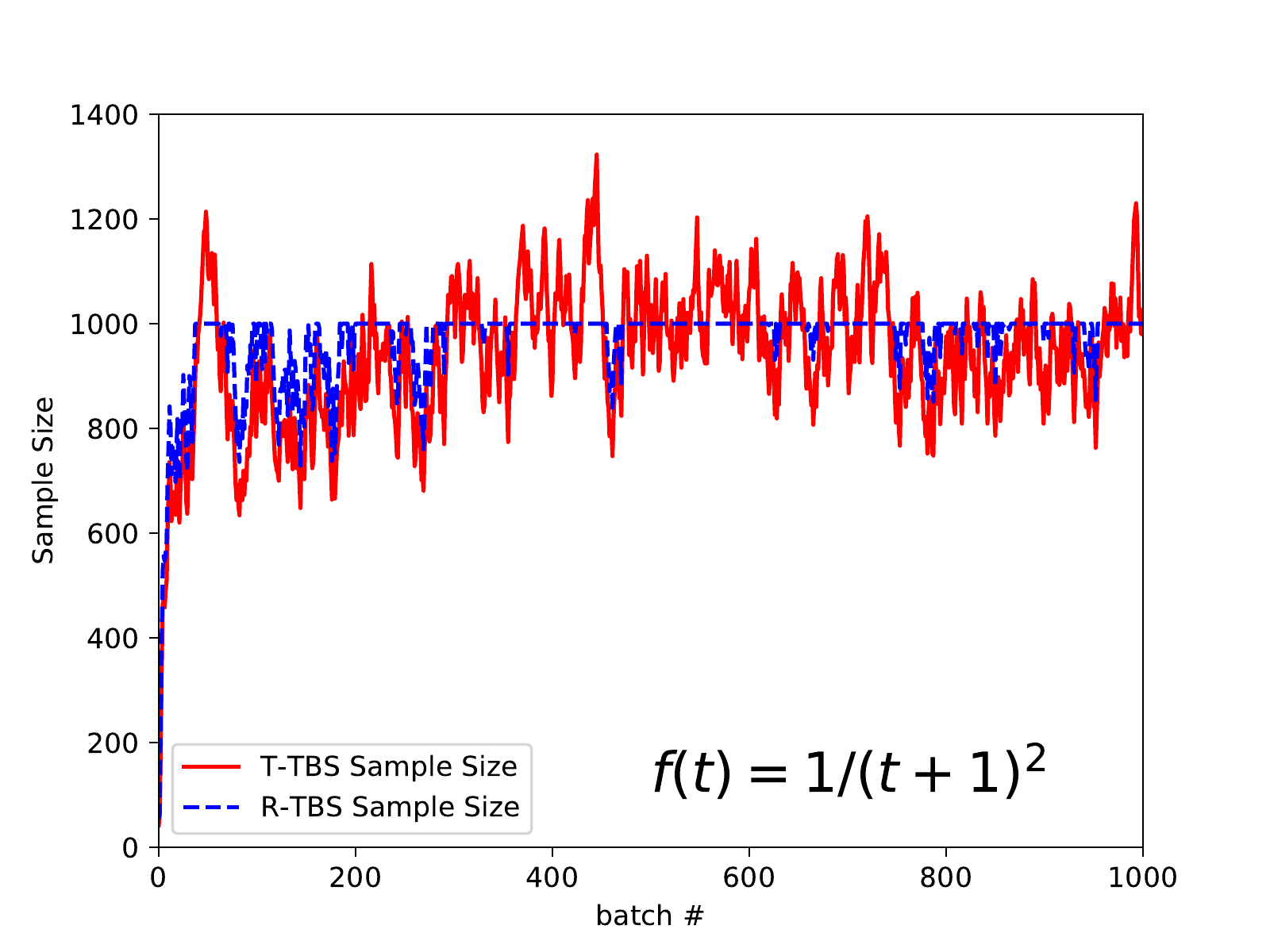}}
	 \subfigure[Decaying Batch Size]{
	   \label{fig:decay2en}\includegraphics[width=0.23\linewidth]{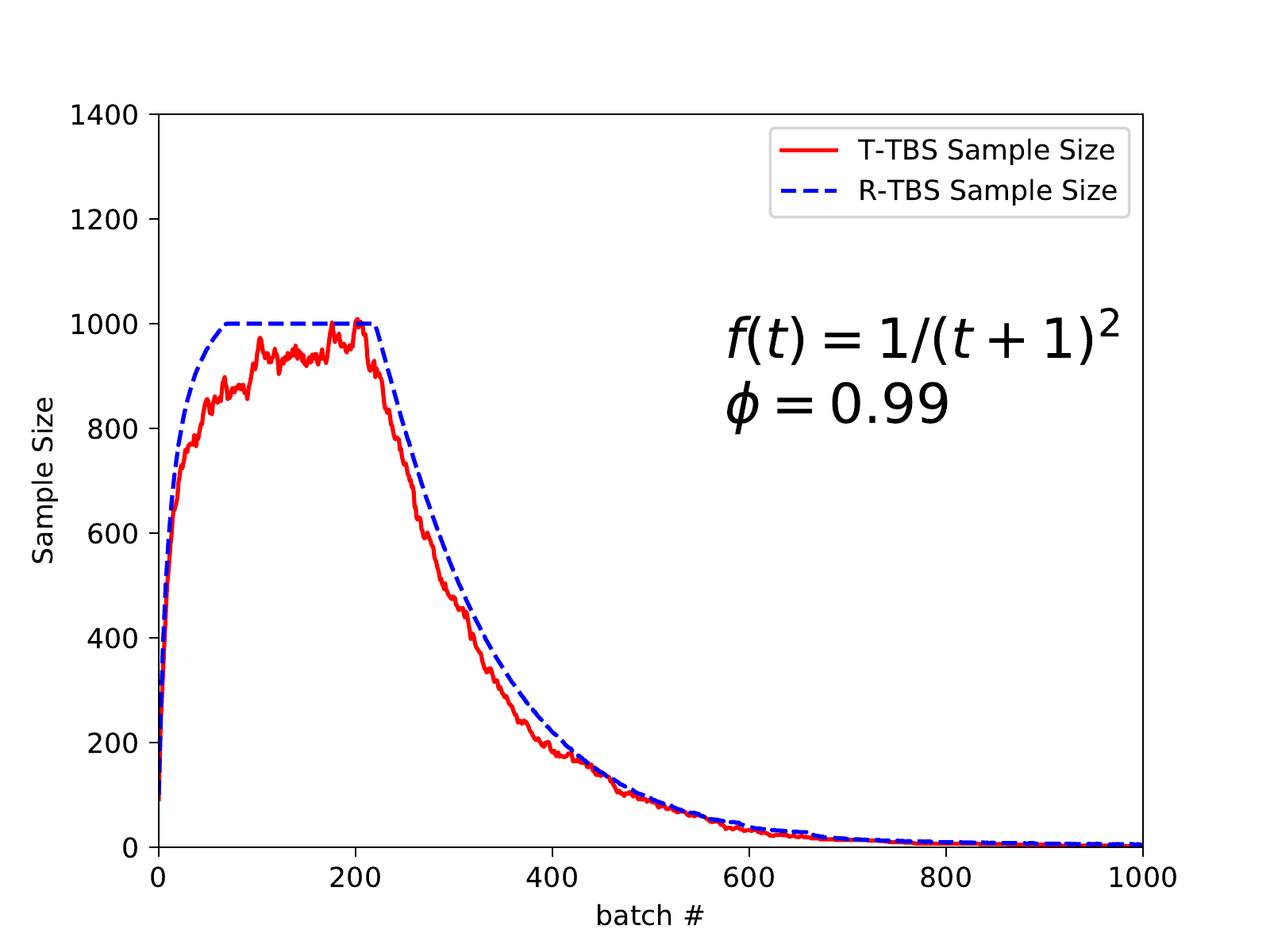}}
\caption{\label{fig:breakdowngen}Sample size behavior for T-TBS and R-TBS under quadratic decay; $\phi=$ batch size multiplier}
\end{figure*}

\revision{\subsection{Sample Size Behavior}\label{sec:samplesize}

We first compare the sample size behavior of T-TBS and R-TBS under a variety of batch size regimes. Throughout, we write $B_k=|\xB_k|$. Figures~\ref{fig:breakdown} and \ref{fig:breakdowngen} show sample size behavior over time for the two algorithms under exponential and quadratic decay. The key challenge to T-TBS is that the value of the mean batch size $b$ must be specified in advance; consequently, the algorithm cannot handle dynamic changes in $b$ without losing control of either the decay rate or the sample size.

In Figure~\ref{fig:grow}, for example, the (deterministic) batch size is initially fixed and the algorithm is tuned to a target sample size of 1000, with a decay rate of $\lambda=0.05$. At $k=200$, the batch size starts to increase (with $B_{k+1}=\phi B_k$ where $\phi=1.002$), leading to an overflowing sample, whereas R-TBS maintains a constant sample size.

Even in a stable batch-size regime with constant batch sizes (or, more generally, small variations in batch size), R-TBS can maintain a constant sample size whereas the sample size under T-TBS fluctuates in accordance with Theorems~\ref{th:recurr} and \ref{th:recurGam}; see Figure~\ref{fig:stableD} for the case of  a constant batch size $B_k\equiv100$ with $\lambda=0.1$.

Large variations in the batch size lead to large fluctuations in the sample size for T-TBS; in this case the sample size for R-TBS is bounded above by design, but large drops in the batch size can cause drops in the sample size for both algorithms; see Figure~\ref{fig:stableU} for the case of $\lambda=0.1$ and i.i.d.\ uniformly distributed batch sizes on $[0,200]$ so that $\mean[B_k]\equiv 100$. Similarly, as shown in Figure~\ref{fig:decay2}, systematically decreasing batch sizes will cause the sample size to shrink for both T-TBS and R-TBS. Here, $\lambda=0.01$ and, as with Figure~\ref{fig:grow}, the batch size is initially fixed and then starts to change at $k=200$, with $\phi=0.8$ in this case. This experiment---and others, not reported here, with varying values of $\lambda$ and $\phi$---indicate that R-TBS is more robust to sample underflows than T-TBS.

The results for polynomial decay in Figure~\ref{fig:breakdowngen} are similar to those for exponential decay. In the shrinking-batch-size scenario, note that R-TBS has a harder time maintaining a full sample under quadratic decay than under exponential decay (but still does better than T-TBS). The lower sample sizes are a direct consequence of the fact that different items decay at different rates, as discussed in detail in Section~\ref{sec:mono}.

So T-TBS has much more trouble maintaining a target sample size than R-TBS, especially when batch sizes fluctuate unpredictably. As pointed out in Section~\ref{sec:properties}, however, when T-TBS is applicable, it is much simpler and faster than R-TBS.}
%\begin{figure}[tbh]
%	\centering
%	\includegraphics[width=0.6\linewidth]{}
%	\caption{Per-batch distributed runtime comparison}\label{fig:avg_times}
%\end{figure}

\begin{figure*}[bth]
	\centerline{
		\epsfxsize=2.6in \epsffile{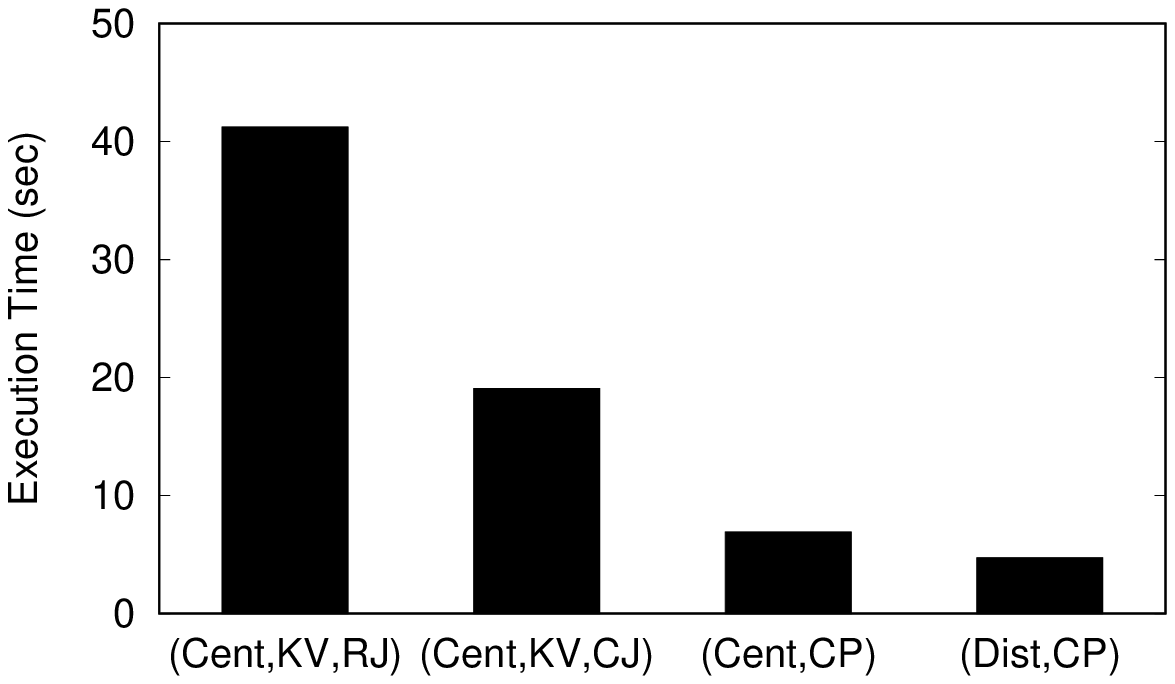}
		\hfill
		\epsfxsize=2.6in \epsffile{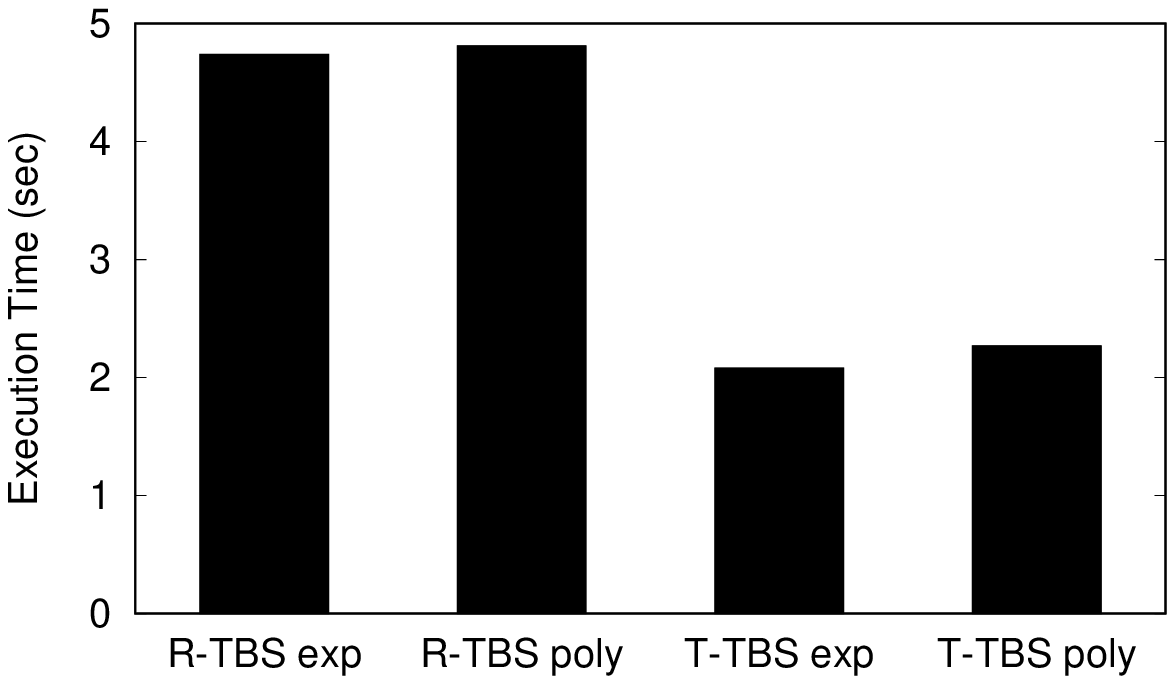}
	}
	\centerline{
		\parbox{2.6in}{\caption{Per-batch runtime comparison of different implementations of R-TBS with exponential decay}
			\label{fig:exp-runtime}}
		\hfill
		\parbox{2.6in}{\caption{Per-batch runtime of R-TBS and T-TBS with exponential decay and polynomial decay}
			\label{fig:exp-poly-runtime}}
	}
\end{figure*}

%\begin{figure*}[bth]
%	\centerline{
%		\epsfxsize=2.6in \epsffile{figs/scaleout.eps}
%		\hfill
%		\epsfxsize=2.6in \epsffile{figs/scaleup.eps}
%	}
%	\centerline{
%		\parbox{2.6in}{\caption{Scale out of D-R-TBS}
%			\label{fig:scale_out}}
%		\hfill
%		\parbox{2.6in}{\caption{Scale up of D-R-TBS}
%			\label{fig:scale_up}}
%	}
%\end{figure*}
%
%\begin{figure*}[bth]
%	\centerline{
%		\epsfxsize=2.6in \epsffile{}
%		\hfill
%		\epsfxsize=2.6in \epsffile{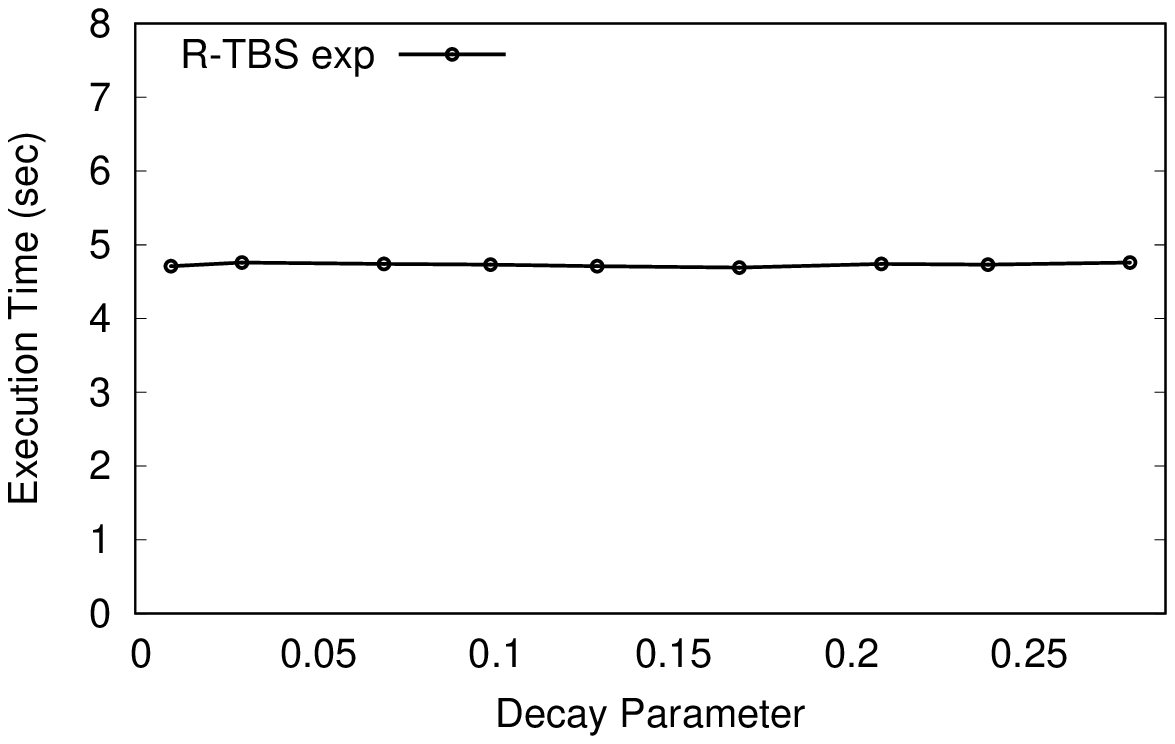}
%	}
%	\centerline{
%		\parbox{2.6in}{\caption{Effect of Skewed Batch Sizes}
%			\label{fig:skewed-runtime}}
%		\hfill
%		\parbox{2.6in}{\caption{Effect of Changing Decay Parameter}
%			\label{fig:decay-runtime}}
%	}
%\end{figure*}

\begin{figure*}[bth]
	\centerline{
		\epsfxsize=1.8in \epsffile{figs/scaleout.eps}
		\hfill
		\epsfxsize=1.8in \epsffile{figs/scaleup.eps}
		\hfill
		\epsfxsize=1.8in \epsffile{figs/decay-runtime.eps}
	}
	\centerline{
		\parbox{1.8in}{\caption{Scale out of D-R-TBS}
			\label{fig:scale_out}}
		\hfill
		\parbox{1.8in}{\caption{Scale up of D-R-TBS}
			\label{fig:scale_up}}
		\hfill
		\parbox{1.8in}{\caption{Effect of Changing Decay Parameter}
			\label{fig:decay-runtime}}
	}
\end{figure*}

\subsection{Runtime Performance}\label{ssec:runtime}

%\begin{figure*}[t]
%\centering
%\begin{minipage}{.23\linewidth}
%  \centering
%  \includegraphics[width=\linewidth]{figs/avg_times}
%  \caption{\small{Avg. Runtime per Round}}
%  \label{fig:test1}
%\end{minipage}%
%\begin{minipage}{.23\linewidth}
%  \centering
%  \includegraphics[width=\linewidth]{figs/dist_vs_cent_rg}
%  \caption{\small{Centralized vs. Distributed Reservoir Slot Generation}}
%  \label{fig:test2}
%\end{minipage}
%\begin{minipage}{.23\linewidth}
%  \centering
%  \includegraphics[width=\linewidth]{}
%  \caption{\small{Throughput as a function of Workers}}
%  \label{fig:test1}
%\end{minipage}%
%\begin{minipage}{.23\linewidth}
%  \centering
%  \includegraphics[width=\linewidth]{}
%  \caption{Another figure}
%  \label{fig:test2}
%\end{minipage}
%\end{figure*}

%\begin{figure}[thb]
%	\centering
%	\includegraphics[width=2.2 in]{figs/runtime.eps}
%	\SmallCrunch
%	\caption{Per-round Time of Distributed Implementations}
%	\label{fig:avg_times}
%	\BigCrunch
%\end{figure}

\textbf{Comparison of TBS Implementations:}
\revision{Figure~\ref{fig:exp-runtime} shows the average runtime per batch for four implementations of distributed R-TBS exp with different design choices. The design decisions discussed are whether to use centralized or distributed decisions in choosing items to insert and delete (abbreviated as "Cent" and "Dist", respectively), whether to implement the reservoir using a key-value store or a co-partitioned reservoir scheme (abbreviated as "KV" and "CP"), and whether to subsample the incoming batch using the standard repartition join or using a copartitioned join (abbreviated as "RJ" and "CJ") under centralized decision scheme. These design decisions are discussed in more detail in Section \ref{sec:imp}}.

The first two implementations in Figure~\ref{fig:exp-runtime} both use the key-value store representation for the reservoir together with the centralized decision strategy for determining inserts and deletes. They only differ in how the inserted items are retrieved when subsampling the incoming batch. The first uses the repartition join, whereas the second uses the customized partitioner and co-located join. \revision{This optimization effectively cuts the runtime in half, but the KV representation of the reservoir still requires the inserted items to be written across the network to their corresponding reservoir location.} The third implementation employs the co-partitioned reservoir instead, resulting in a speedup of 2.75x. \revision{The fourth implementation further employs the distributed decision for choosing items to delete and insert. This yields a further 1.46x speedup.} The combination of co-partitioned reservoir scheme and the distributed decision making for inserting and deleting items always yields the best performance for R-TBS and T-TBS with different decay functions, so we use this combination for the remaining experiments. 

\revision{In Figure~\ref{fig:exp-poly-runtime}, we show the per batch runtimes for R-TBS and T-TBS with both exponential decay and polynomial decay.}
%The fifth through seventh bars in Figure~\ref{fig:avg_times} are the runtimes for R-TBS poly, T-TBS exp, and T-TBS poly. All three use the co-partitioned reservoir and the distributed strategy for choosing delete and insert items.
R-TBS exp and R-TBS poly have very similar runtime performance, with R-TBS exp being slightly faster; a similar observation holds for T-TBS. Since T-TBS is embarrassingly parallelizable, it is faster than R-TBS---though, as mentioned previously, the relative performance advantage of T-TBS is smaller than the result reported in~\cite{HentschelHT18} due to improvements in Spark. In any case, as discussed in Section~\ref{sec:ttbs}, T-TBS is faster, but only works under a very strong restriction on the data arrival rate, and can suffer from occasional memory overflows; see Figures~\ref{fig:breakdown} and~\ref{fig:breakdowngen}. % for experiments on sample size). 
In contrast, R-TBS has more robust sample-size behavior and works in realistic scenarios where it is hard to predict the data arrival rate.

%The fifth implementation in Figure~\ref{fig:avg_times} is D-T-TBS using co-partitioned reservoir and the distributed strategy for choosing delete and insert items. Since, D-T-TBS is embarrassingly parallelizable, it's faster than the best D-R-TBS implementation. But, as we discussed in Section~\ref{sec:ttbs}, T-TBS only works under a very strong restriction on the data arrival rate, and can suffer from occasional memory overflows; see Figure~\ref{fig:breakdown}. % for experiments on sample size). 
%In a distributed setting, the memory overflow problem of D-T-TBS is more severe, since a memory overflow on any single worker
%(due to imbalanced allocation of ingested data in the streaming system)
%can result in the failure of the entire job, even though the aggregate memory size is more than the sample size. 

%\begin{figure}[tbh]
%	\centering
%	%	\subfigure[\small{Avg. Runtime per Round}]{
%	%	   \label{fig:avg_times}\includegraphics[width=0.23\linewidth]{figs/avg_times}} 
%	%	\subfigure[\small{Cent. vs. Dist. Decision}]{
%	%	   \label{fig:rg}\includegraphics[width=0.23\linewidth]{figs/dist_vs_cent_rg}} 
%	\subfigure[\small{Scale-out}]{
%		\label{fig:scale_out}\includegraphics[width=0.48\linewidth]{figs/scaleout.eps}}
%	\subfigure[\small{Scale-up}]{
%		\label{fig:scale_up}\includegraphics[width=0.48\linewidth]{figs/scaleup.eps}}
%	\BigCrunch
%	\caption{\label{fig:accuracy} Scalability of D-R-TBS}
%	\BigCrunch
%\end{figure}

\textbf{Scalability of R-TBS:}
Figure \ref{fig:scale_out} shows how R-TBS with exponential and non-exponential (quadratic in this case) decay functions scale with the number of workers. The two implementations have very similar performance. Initially, both versions of R-TBS scale out very nicely and see linear speedup from an increase in workers. However, beyond 4 workers, the marginal benefit from additional workers is small, because the coordination and communication overheads, as well as the inherent Spark overhead, become prominent. For the same reasons, in the scale-up experiment in Figure~\ref{fig:scale_up}, both runtimes stay roughly constant until the batch size reaches 1 million items and then increase sharply at 10 and 100 million items. This is because processing the streaming input and maintaining the sample start to dominate the coordination and communication overhead. With 8 workers, our implementation of R-TBS can handle a data flow comprising 100 million items arriving approximately every 16 seconds.

\revision{\textbf{R-TBS Runtime Robustness:} Figure~\ref{fig:decay-runtime} shows the impact of changes in the exponential decay parameter on runtime; as can be seen, the impact is negligible. Similar results hold for changes in the exponent for R-TBS poly (quadratic, cubic etc.). These results might seem counterintuitive, since changes in the decay parameter have a substantial impact on the number of items inserted into and deleted from the reservoir. However, in our optimized distributed implementation of R-TBS (with co-partitioned reservoir and distributed decision making), these are not expensive operations. All inserts and deletes happen locally and only affect local memory. In comparison, the cost of reading the incoming batch across the network or from disks, as well as the communication overhead between the Spark master and Spark workers, are much more expensive. Overall, the processing cost is dominated by the cost of reading in incoming batches, which is linear in the batch size, and thus the average runtime depends only on the expected batch size, regardless of the batch size variability. While not shown here, similar results hold for skew in the number of items per incoming batch.}

%\revision{Figures \ref{fig:skewed-runtime} and \ref{fig:decay-runtime} show the robustness of R-TBS performance to skewed batch sizes and to changes in the decay parameter. The runtime of R-TBS is most heavily affected by the size of the incoming batches and by the constant overhead of communication each round. In the case of skewed batch sizes, we generated the batch sizes according to a Zipfian Distribution with higher values being less likely. We multiplied by a necessary constant to keep the mean batch size constant at $10^7$. As seen in the graph, the skew in batch sizes had no effect on the runtime of R-TBS. Similarly, changing the decay parameter $\lambda$ also had no effect on runtime. In this case, the effect is more subtle; the decay parameter changes the amount of items leaving and entering the reservoir. However, changing decay parameters have no effect on the amount of network traffic in each round, which consists of messages of batch size values, the number of items to keep for each worker, and reading in the incoming batch. Thus, it has no effect on the runtime; in the case where computation is the limiting factor, we would expect changing this variable to have an effect on runtime.}

\subsection{Application: Classification using kNN}\label{ssec:knn}

%\begin{figure*}[tbh]
%	\centering
%	\subfigure[Single Event]{
%		\label{fig:single}\includegraphics[width=0.23\linewidth]{}} 
%	\subfigure[Periodic (10, 10)]{
%		\label{fig:per1010}\includegraphics[width=0.23\linewidth]{}} 
%	\subfigure[Periodic (20, 10)]{
%		\label{fig:per2010}\includegraphics[width=0.23\linewidth]{}}
%	\subfigure[Periodic (30, 10)]{
%		\label{fig:per3010}\includegraphics[width=0.23\linewidth]{}}
%	\BigCrunch
%	\caption{\label{fig:accuracy}Accuracy of class prediction for R-TBS, SW, and Unif schemes.}
%	\BigCrunch
%\end{figure*}

%We now demonstrate the potential benefits of the R-TBS sampling scheme for periodically retraining representative ML models in the presence of evolving data. For each model and data set, we compare the quality of models retrained on the samples generated by R-TBS with two other common data sampling approaches}: a simple sliding window (SW) and uniform reservoir sampling (Unif). In the interests of space, we report results for R-TBS only; we found that whenever T-TBS applies---i.e. when the mean batch size is known and constant---the quality of ML algorithm predictions is very similar to those under R-TBS, since both algorithms use the same decay functions. 

Our first ML model is a kNN classifier, where a class is predicted for each item in an incoming batch by taking a majority vote of the classes of the $k$ nearest neighbors in the current sample, based on Euclidean distance; the sample is then updated using the batch. To generate training data, we first generate 100 class centroids uniformly in a $[0,80]\times [0,80]$ rectangle. Each data item is then generated from a Gaussian mixture model and falls into one of the 100 classes. Over time, the data generation process operates in one of two ``modes". In the ``normal" mode, the frequency of items from any of the first 50 classes is five times higher than that of items in any of the second 50 classes. In the ``abnormal" mode, the frequencies are five times lower. Thus the frequent and infrequent classes switch roles at a mode change. We generate each data point by randomly choosing a ground-truth class $c_i$ with centroid $(x_i,y_i)$ according to relative frequencies that depend upon the current mode, and then generating the data point's $(x,y)$ coordinates independently as samples from $N(x_i,1)$ and $N(y_i,1)$. Here $N(\mu,\sigma)$ denotes the normal distribution with mean $\mu$ and standard deviation $\sigma$.

In this experiment, the batch sizes are deterministic with $b=100$ items. The sample size for both R-TBS and uniform reservoir sampling (Unif) is 1000, and the sliding window (SW) contains the last 1000 items; thus all methods use the same amount of data for retraining. \revision{The hyperparameter $k$ is tuned individually for each of R-TBS, SW, and Unif sampling schemes; we choose the value that minimizes the misclassification percentage}. In each run, the sample is warmed up by processing $100$ normal-mode batches before the classification task begins. We test R-TBS with both exponential and shifted-polynomial decay functions as described in Section \ref{ssec:runtime}. We use $\lambda = 0.07$ for R-TBS exp, unless otherwise stated. For R-TBS poly, we use the parameter values of $(s, d, n', n, \lambda, \delta_{1}, \delta_{2}) = (2, 10, 2000, 1000, 0.1, 0.01, 1)$. We focus on two types of temporal patterns in the data, as described below. 

%In this experiment, the batch sizes are deterministic with $b=100$ items, and $k=7$ neighbors for the kNN classifier. The sample size for both R-TBS and Unif is 1000, and SW contains the last 1000 items; thus all methods use the same amount of data for retraining. We choose this value because it achieves near-optimal classification accuracies for all techniques; in general, we choose sampling and ML parameters to achieve good learning performance while ensuring fair comparisons. In each run, the sample is warmed up by processing $100$ normal-mode batches before the classification task begins. We test R-TBS with both exponential and shifted-polynomial decay functions as described in Section \ref{ssec:runtime}. We use $\lambda = 0.07$ for R-TBS exp, unless otherwise stated. For R-TBS poly, we use the parameter values of $(s, d, n', n, \lambda, \delta_{1}, \delta_{2}) = (2, 10, 2000, 1000, 0.1, 0.01, 1)$. We focus on two types of temporal patterns in the data, as described below. 

\textbf{Single change:} Here we model the occurrence of a singular event. The data is generated in normal mode up to $k=10$ (time is measured here in number of units after warm-up), then switches to abnormal mode, and finally at $t=20$ switches back to normal . As can be seen in Figure~\ref{fig:single}, the misclassification rate (percentage of incorrect classifications) for all sampling schemes increases from around \revision{15\%} to roughly \revision{40\%} when the distribution becomes abnormal. Both versions of R-TBS as well as SW adapt to the change, recovering to around \revision{13\%} misclassification rate after $t=16$, with R-TBS poly and SW adapting slightly better than R-TBS exp. In comparison, Unif does not adapt at all. But, when the distribution snaps back to normal, the error rate of SW rises sharply to \revision{40\%} before gradually recovering, whereas error rates of R-TBS exp and R-TBS poly stay low around \revision{13\% and 15\%} throughout. These results show that R-TBS is indeed more robust: although sometimes slightly more sluggish than SW in adapting to changes, R-TBS avoids wild fluctuations in classification error as with SW.

\begin{figure}[bht]
	\centering
	\subfigure[Single Event]{
		\label{fig:single}\includegraphics[width=0.35\linewidth]{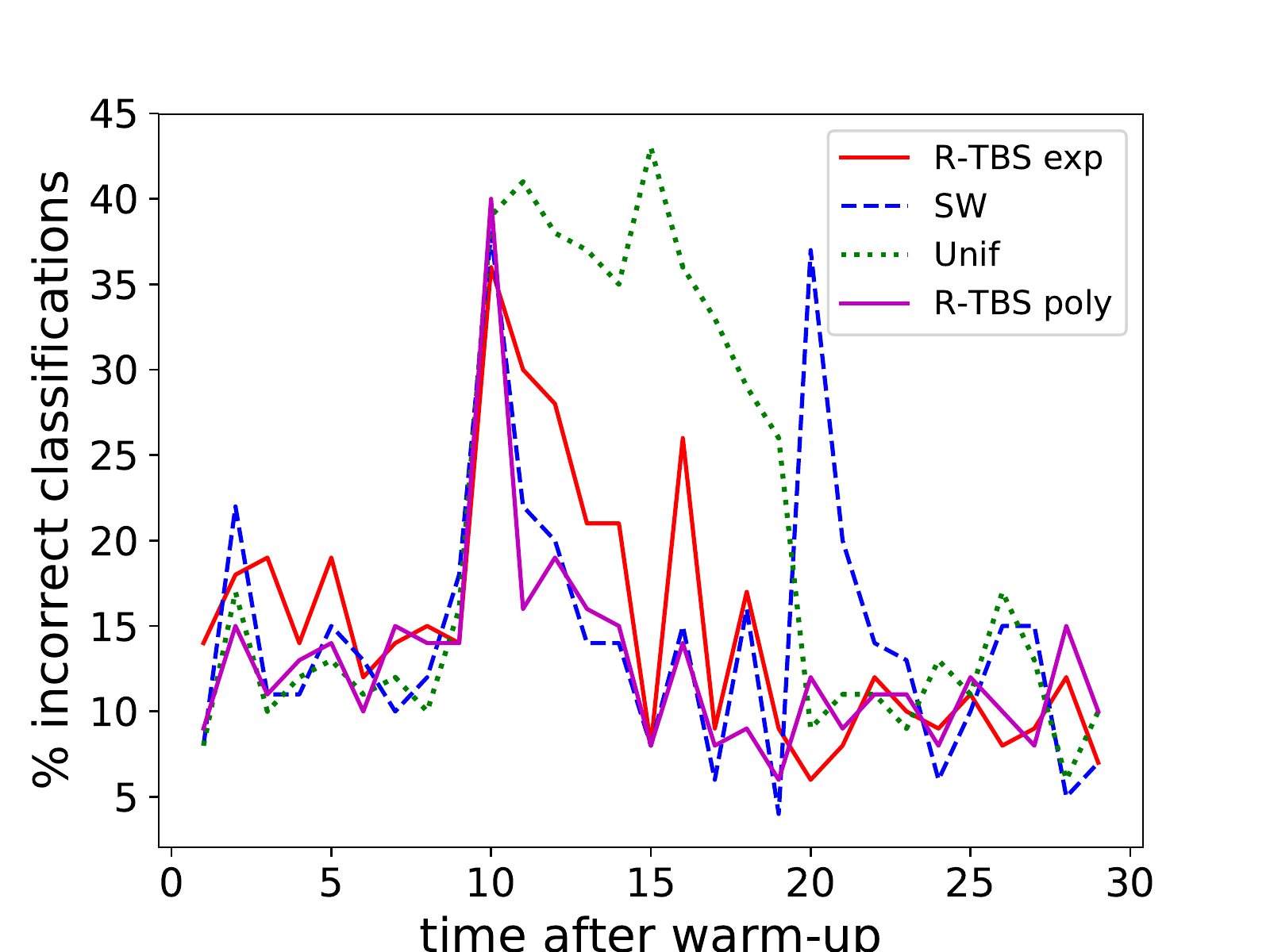}} 
	\subfigure[Periodic (10, 10)]{
		\label{fig:per1010}\includegraphics[width=0.35\linewidth]{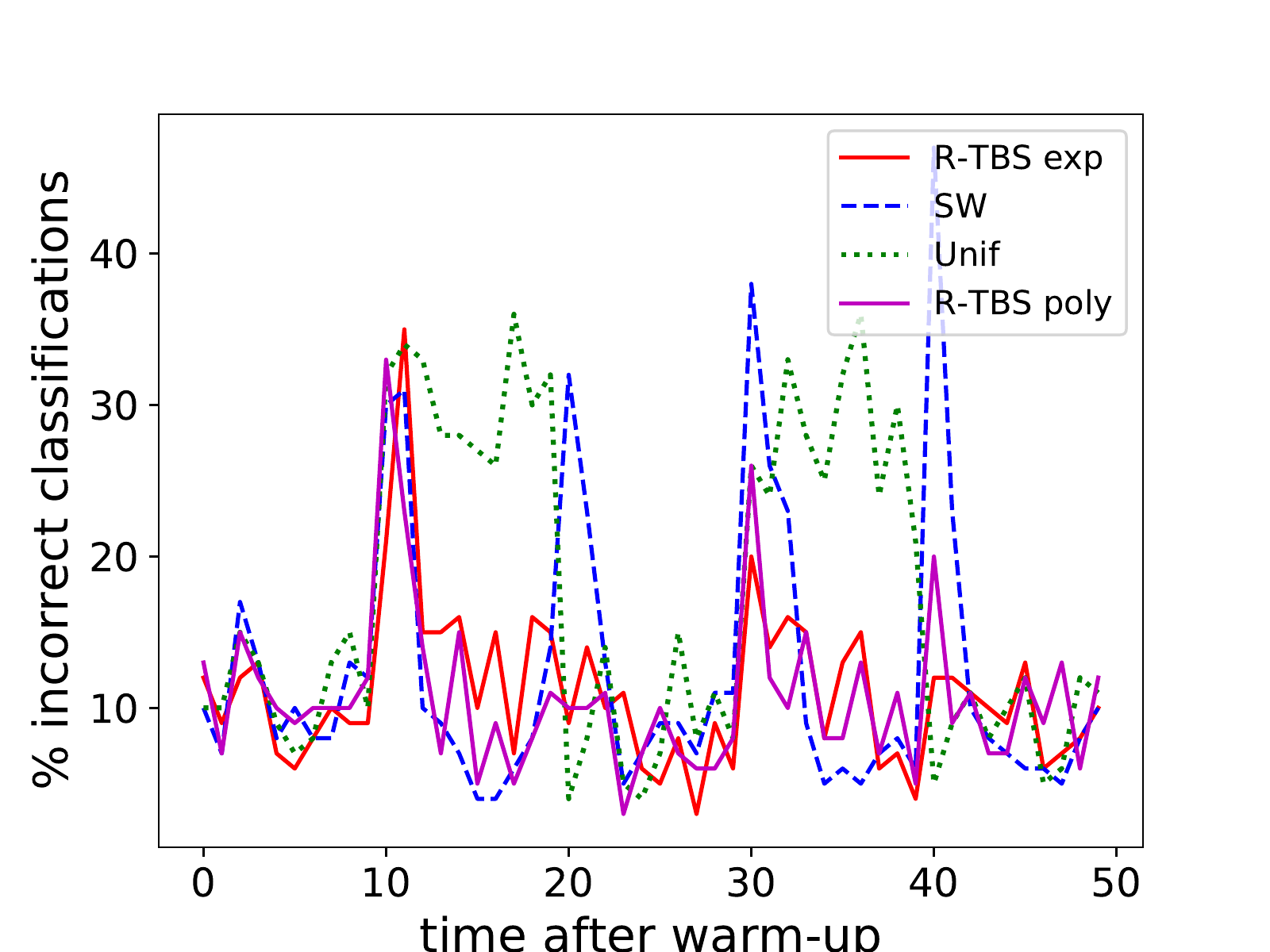}} 
	\caption{\label{fig:accuracy}Misclassification rate (percent) for kNN}
\end{figure}

\begin{figure}[bht]
 \centering
	 \subfigure[Periodic (20, 10)]{
	   \label{fig:per2010}\includegraphics[width=0.35\linewidth]{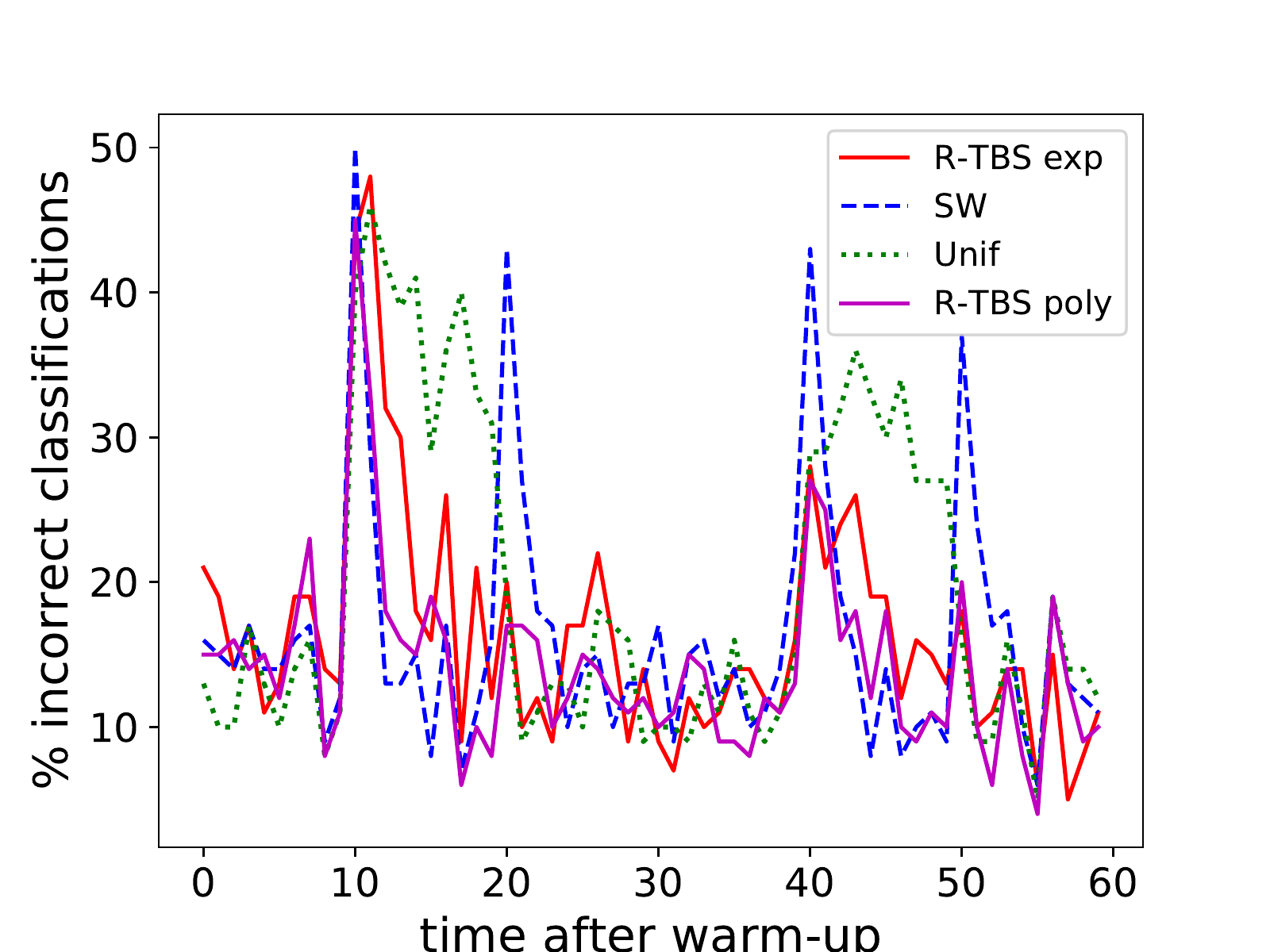}}
	 \subfigure[Periodic (30, 10)]{
	   \label{fig:per3010}\includegraphics[width=0.35\linewidth]{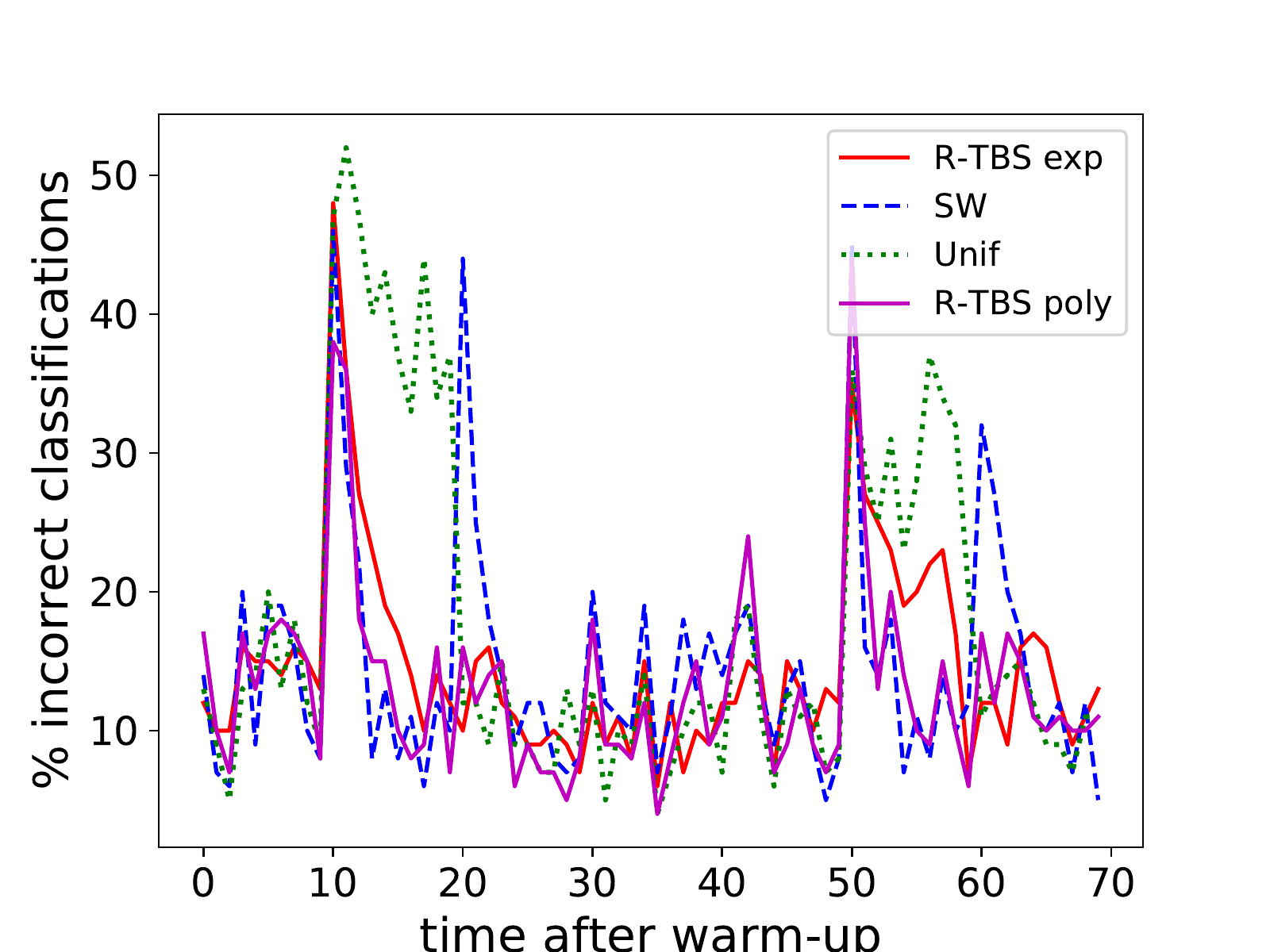}}
  \caption{Misclassification rate (percent) for kNN}
  \end{figure}

\textbf{Periodic change:} For this temporal pattern, the changes from normal to abnormal mode are periodic, with $\delta$ normal batches alternating with $\eta$ abnormal batches, denoted as $\text{Periodic}(\delta, \eta)$, or $P(\delta, \eta)$ for short. Figure~\ref{fig:per1010} shows the misclassification rate for $\text{Periodic}(10, 10)$. Experiments on other periodic patterns demonstrate similar results; see Figures~\ref{fig:per2010} and \ref{fig:per3010}. The robust behavior of R-TBS described above manifests itself even more clearly in the periodic setting. Note, for example, how both R-TBS versions react significantly better to the renewed appearances of the abnormal mode. Observe that the first 30 batches of $\text{Periodic}(10, 10)$ display the same behavior as in the single event experiment in Figure~\ref{fig:single}. We therefore focus primarily on the $\text{Periodic}(10, 10)$ temporal pattern for the remaining experiments.

%\textbf{Periodic change:} In Figures~\ref{fig:per1010}, the changes from normal to abnormal mode are periodic with 10, 20, or 30 normal batches alternating with 10 abnormal batches. The robust behavior described above manifests itself even more clearly in the periodic setting. Note, for example, how R-TBS reacts significantly better to the renewed appearances of the abnormal mode. 

\textbf{Robustness and Effect of Decay Parameter:} In the context of online model management, we need a sampling scheme that delivers high overall prediction accuracy and, perhaps even more importantly, robust prediction performance over time. Large fluctuations in the accuracy can pose significant risks in applications, e.g., in critical IoT applications in the medical domain such as monitoring glucose levels for predicting hyperglycemia events. To assess the robustness of the performance results across different sampling schemes, we use a standard risk measure called \emph{expected shortfall (ES)} \cite[p.~70]{es}. ES measures downside risk, focusing on worst-case scenarios. Specifically, the $z$\% ES is the average value of the worst $z$\% of cases. 

For each of 30 runs and for each sampling scheme, we compute the 10\% ES of the misclassification rate (expressed as a percentage) starting from $t=20$, since all three sampling schemes perform poorly (as would be expected) during the first mode change, which finishes at $t=20$. Table~\ref{tab:lambda} lists both the \emph{accuracy}, measured in terms of the average misclassification rate, and the \emph{robustness}, measured as the average 10\% ES, of the kNN classifier over 30 runs across different temporal patterns. To demonstrate the effect of the exponential decay parameter $\lambda$ and polynomial decay parameter $s$ on model performance, we also include numbers for different $\lambda$ and $s$ values in Table~\ref{tab:lambda}. Additionally, while not reported for reasons of space, we varied the shift parameter $d$ for polynomial decay and found similar results to those below for all values of $d$ between 6 and 14. 

In terms of accuracy, Unif is always the worst by a large margin. R-TBS and SW have similar accuracies, with R-TBS being slightly more accurate. On the other hand, for robustness, SW is almost always the worst, with ES ranging from \revision{1.4x to 2.5x} the ES of R-TBS. Mostly, Unif is also significantly worse than R-TBS, with ES ratios ranging from \revision{1.3x to 1.7x}. The only exception is the single-event pattern: since the data remains in normal mode after the abnormal period, time biasing becomes unimportant and Unif performs well. In general, R-TBS provides both better accuracy and robustness, with the best sampling schemes in terms of either accuracy or ES being R-TBS in all cases. Furthermore, this edge in accuracy and robustness is fairly stable across a wide range of $\lambda$ and $s$ values. Finally, when comparing R-TBS exp against R-TBS poly, we see that polynomial decay is slightly more accurate, but exponential decay has a slight edge with respect to robustness. \revision{Additionally, for both R-TBS exp and R-TBS poly, we find that the optimal decay value for ES is smaller than the optimal value for expected miss \%, as might be expected, since a lower value leads to greater retention of older items}. Overall, R-TBS exp and R-TBS poly yield comparable performance in ML models under dynamic data, but which is better depends on the experimental setup and so the choice of decay function will generally need to be driven by application requirements.

%As discussed in Section~\ref{sec:background}, the choice of the decay parameter $\lambda$ is application specific. Typically, application domains with faster pace of changes need greater values of $\lambda$ for faster adaption. To measure the effect of different $\lambda$ values on model quality, we rerun the previous experiments with various decay parameters. For each experiment, we assess both the overall accuracy and the robustness of the classifier. The accuracy is measured using the percentage of incorrect classifications aggregated across all batches of a experimental run, whereas for robustness we use the standard risk measure called \emph{10\% expected shortfall} (see Page 70 of \cite{es}). {\color{red}Expected Shortfall should receive an explanation here. As well, we need to touch on that this is the average of 30 iterations, and that we are starting the measurement of ES after t=20.} 
 
\textbf{Varying batch size:} We now examine model quality when the batch sizes are no longer constant. Overall, the results look similar to those for constant batch size. For example, Figure~\ref{fig:unif} shows results for a Uniform$(0,200)$ batch-size distribution, and Figure~\ref{fig:batchgrow} shows results for a deterministic batch size that grows at a rate of 2\% after warm-up. In both experiments, $\lambda=0.07$ for R-TBS exp, $s = 2$ for R-TBS poly, and the data pattern is $\text{Periodic}(10, 10)$. These figures demonstrate the robust performance of R-TBS in the presence of varying data arrival rates. Similarly, the average accuracy and robustness over 30 runs resembles the results in Table~\ref{tab:lambda}. For example, pick R-TBS exp with $\lambda=0.07$ and the $\text{Periodic}(10, 10)$ pattern. Then, the misclassification rate under uniform/growing batch sizes is \revision{1.17x/1.14x} that of R-TBS for SW, and \revision{1.39x/1.38x} for Unif. In addition, the ES is \revision{1.68x/1.95x} that of R-TBS for SW, and \revision{1.64x/1.60x} for Unif.

For the experiments in this section, we see that, in terms of model accuracy and robustness, both R-TBS exp and R-TBS poly perform similarly. We found that this observation holds generally, and so we will focus on R-TBS exp from now on.

\begin{table}[tbh]
\caption{Accuracy and robustness of kNN performance}\label{tab:lambda}
	\scriptsize{
		\begin{tabular}[0.3\linewidth]{|c|c|c|c|c|c|c|c|c|c|c|}
			\hline
			& \multicolumn{2}{c|}{
			\textbf{Single Event}} & \multicolumn{2}{c|}{\textbf{P(10,10)}} & \multicolumn{2}{c|}{\textbf{P(20,10)}} & \multicolumn{2}{c|}{\textbf{P(30,10)}} \\ \cline{2-9}
			$Alg. $  & Miss\% & ES  & Miss\% & ES &	Miss\% & ES & Miss\% & ES \\ \hline
			R-TBS exp: $\lambda = 0.05$ & 17.1 & {\color{red}\textbf{16.8}}  & 16.1 & 22.1 &	15.3 & 24.4 & 15.1 & 25.9 \\ \hline
			R-TBS exp: $\lambda = 0.07$ & 16.5 & 17.3 &	15.3 & {\color{red}\textbf{21.3}} &	14.9 & {\color{red}\textbf{24.0}} & 14.4 & 25.2 \\ \hline
			R-TBS exp: $\lambda = 0.10$ & 15.7 & 18.5 &	15.1 & 22.1 &	14.7 & 24.9 &	14.7 & 26.9 \\ \hline
			R-TBS poly: $s=1.8$ & 15.0 & 17.9 &	{\color{red}\textbf{13.9}} & 23.2 &	14.5 & 24.1 &	13.8 & {\color{red}\textbf{24.0}} \\ \hline
			R-TBS poly: $s=2.0$ & {\color{red}\textbf{14.9}} & 18.9 &	14.2 & 23.7 &	{\color{red}\textbf{14.1}} & 24.2 &	{\color{red}\textbf{13.7}} & 24.6 \\ \hline
			R-TBS poly: $s=2.2$ & 17.6 & 21.7 &	14.5 & 24.1 &	14.3 & 24.2 &	14.2 & 26.8 \\ \hline
			SW & 19.2 & 42.1 & 17.1 & 41.7 &	16.1 & 39.8 &	15.9 & 38.3 \\ \hline
			Unif & 21.3 & 18.3 &	25.4 & 34.8 &	19.6 & 35.7 &	19.0 & 35.8 \\ \hline
		\end{tabular}
	}
\end{table}

\begin{figure}[tbh]
	\centering
	\subfigure[Uniform Batch Size]{
		\label{fig:unif}\includegraphics[width=0.35\linewidth]{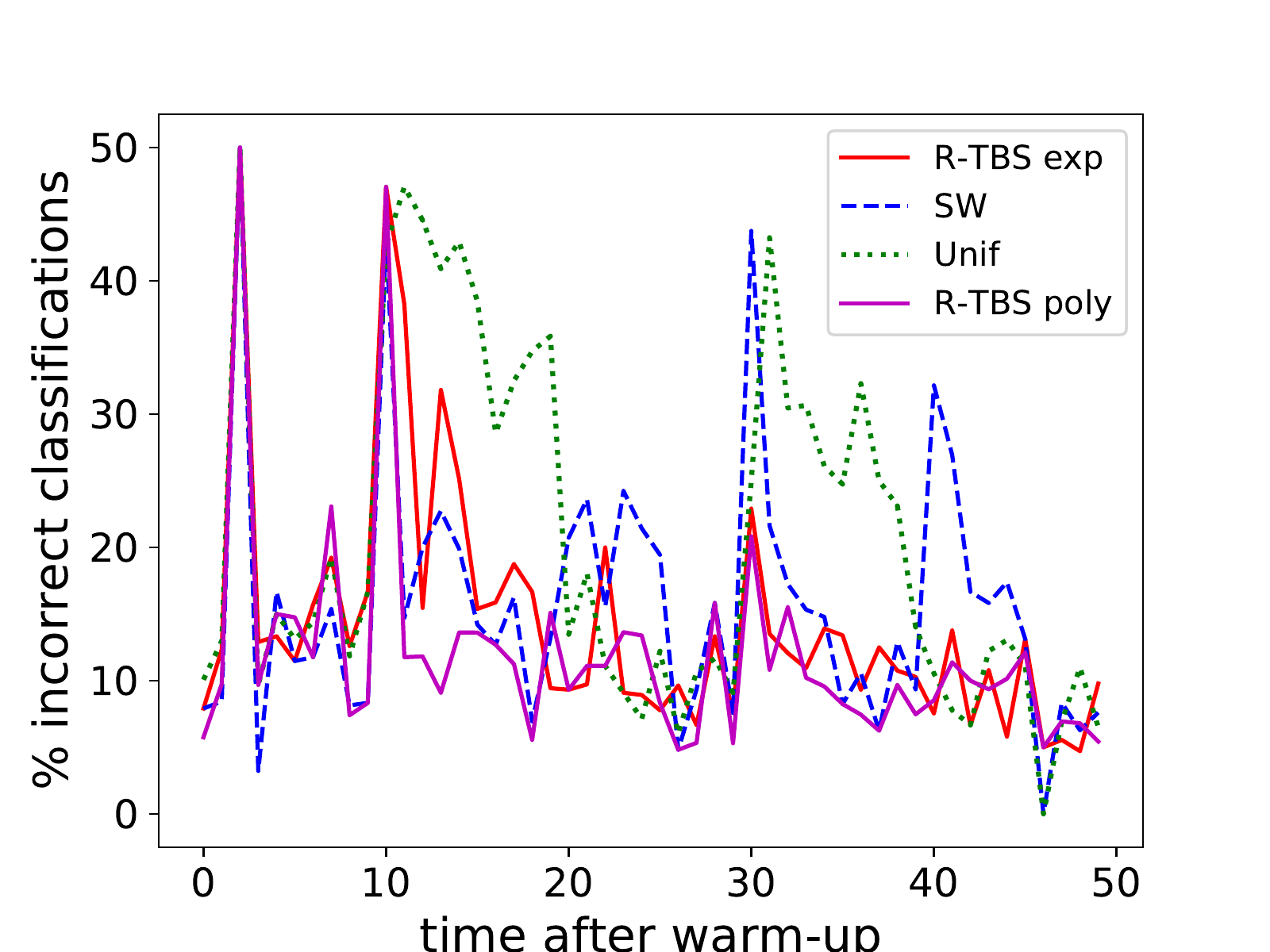}} 
	\subfigure[Growing Batch Size]{
		\label{fig:batchgrow}\includegraphics[width=0.35\linewidth]{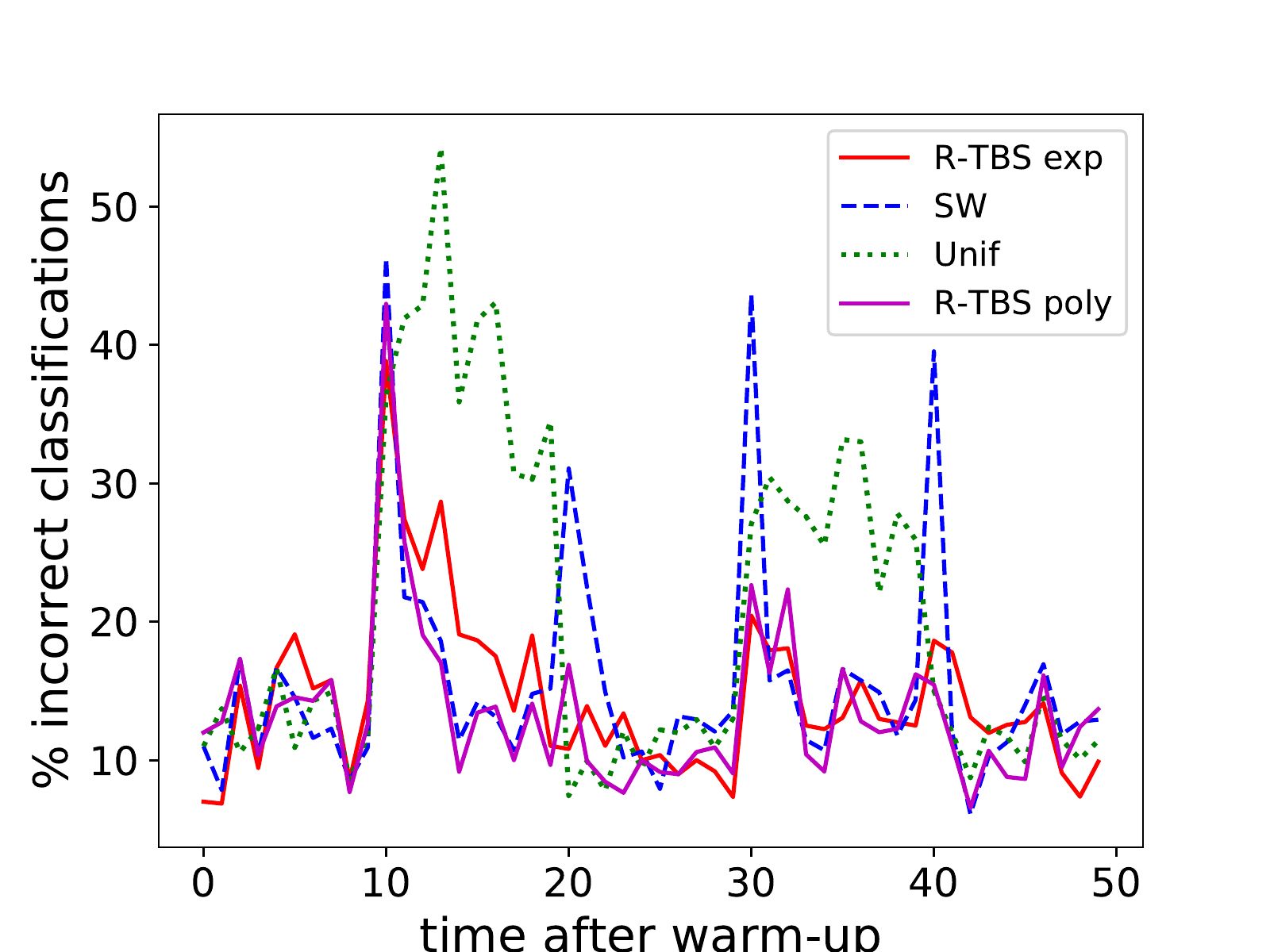}} 
	\caption{\label{fig:rates} Varying batch sizes for kNN classifier}
\end{figure}

%\begin{figure}[tbh]
%	\centering
%	\subfigure[Periodic (10, 10)]{
%		\label{fig:regress}\includegraphics[width=0.47\linewidth]{figs/Regress}}
%	\BigCrunch
%	\caption{\label{fig:regression} Mean square error for linear regression}
%	\BigCrunch
%\end{figure}

\subsection{Application: Linear Regression}\label{ssec:regression}

We now assess the effectiveness of R-TBS for retraining regression models. The experimental setup is similar to kNN, with data generated in ``normal'' and ``abnormal'' modes. In both modes, data items are generated from the standard linear regression model $y = b_1x_1 + b_2x_2 + \eps$, with the noise term $\eps$ distributed according to a $N(0,1)$ distribution. In normal mode, $(b_1,b_2)=(4.2,-0.4)$ and in abnormal mode, $(b_1,b_2)=(-3.6,3.8)$. In both modes, $x_1$ and $x_2$ are generated according to a $\text{Uniform}(0,1)$ distribution. As before, the experiment starts with a warm-up of 100 ``normal'' mode batches and each batch contains 100 items. \revision{For each sampling scheme, the model is trained using a ridge-regression penalty hyperparameter that minimizes mean-squared error. }

\textbf{Saturated samples:} Figure~\ref{fig:regression} shows the performance of R-TBS, SW, and Unif under the $\text{Periodic}(10, 10)$ pattern with a maximum sample size of 1000 for each technique, and $\lambda = 0.10$ for R-TBS. We note that, for this sample size and temporal pattern, the R-TBS sample is always saturated. (This is also true for all of the prior experiments.) The results echo that of the previous section, with R-TBS exhibiting slightly better prediction accuracy on average, and significantly better robustness, than the other methods. The mean square errors (MSEs) across all data points for R-TBS, Unif, and SW are \revision{3.28, 4.41, 3.98} respectively, and their 10\% ES of the MSEs are \revision{8.16, 11.17, 10.65} respectively.
%As a indicator of the overall effectiveness of each sampling technique, we give the mean square error (MSE) across all data points, as well as the 10\% ES of the MSEs. These numbers correspond respectively to 3.68 and 8.05 for R-TBS, 4.64 and 13.21 for Unif, and 4.24 and 14.83 for SW. 

%\begin{figure*}[tbh]
%	\centering
%	\begin{minipage}[b]{.65\linewidth}
%		\subfigure[n=1000, Periodic(10,10)]{
%			\label{fig:regression}\includegraphics[width=1.4in]{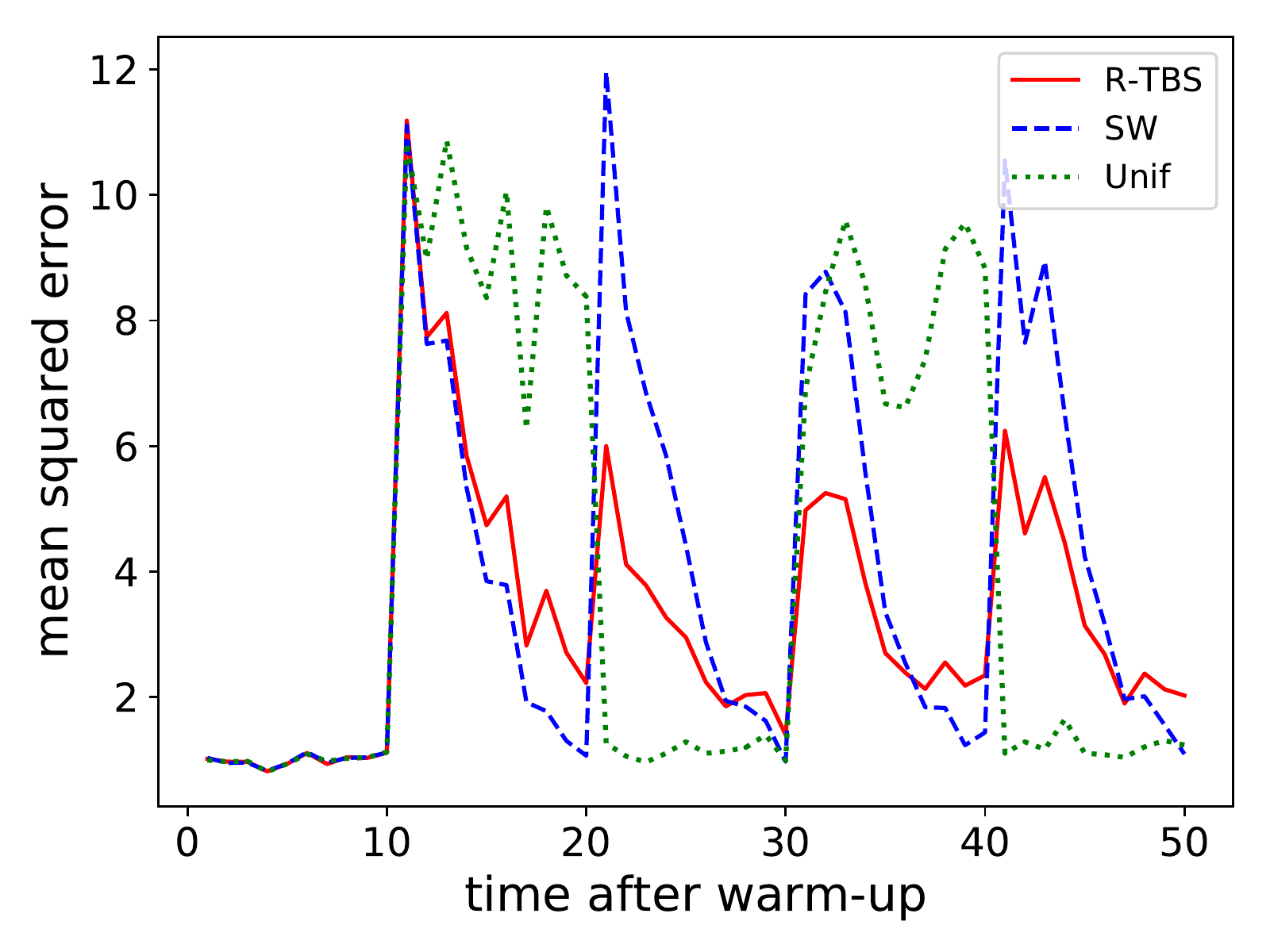}} 
%		\subfigure[n=1600, Periodic(10,10)]{
%			\label{fig:regression1500_10}\includegraphics[width=1.4in]{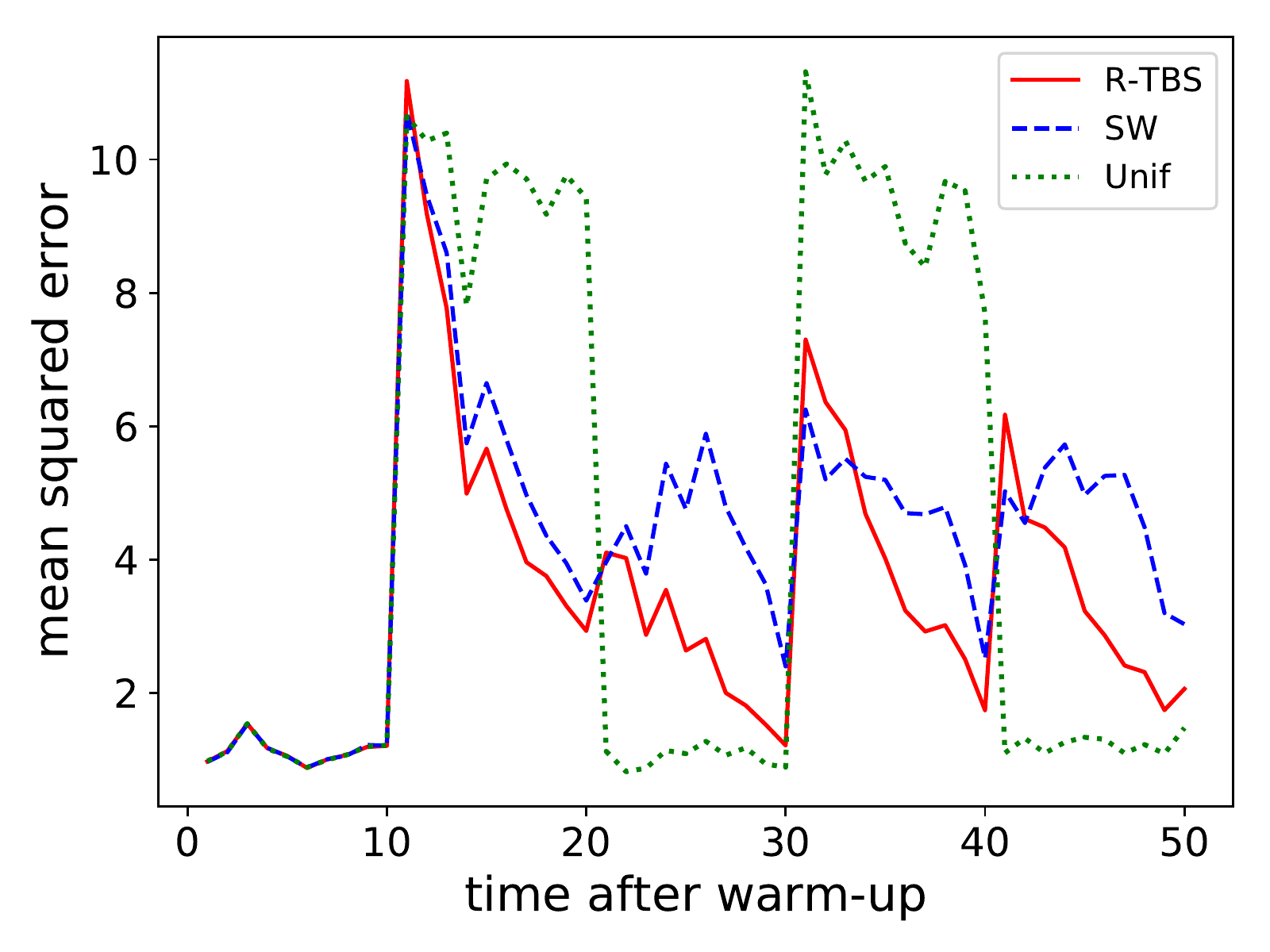}}
%		\subfigure[n=1600, Periodic(16,16)]{
%			\label{fig:regression1500_15}\includegraphics[width=1.4in]{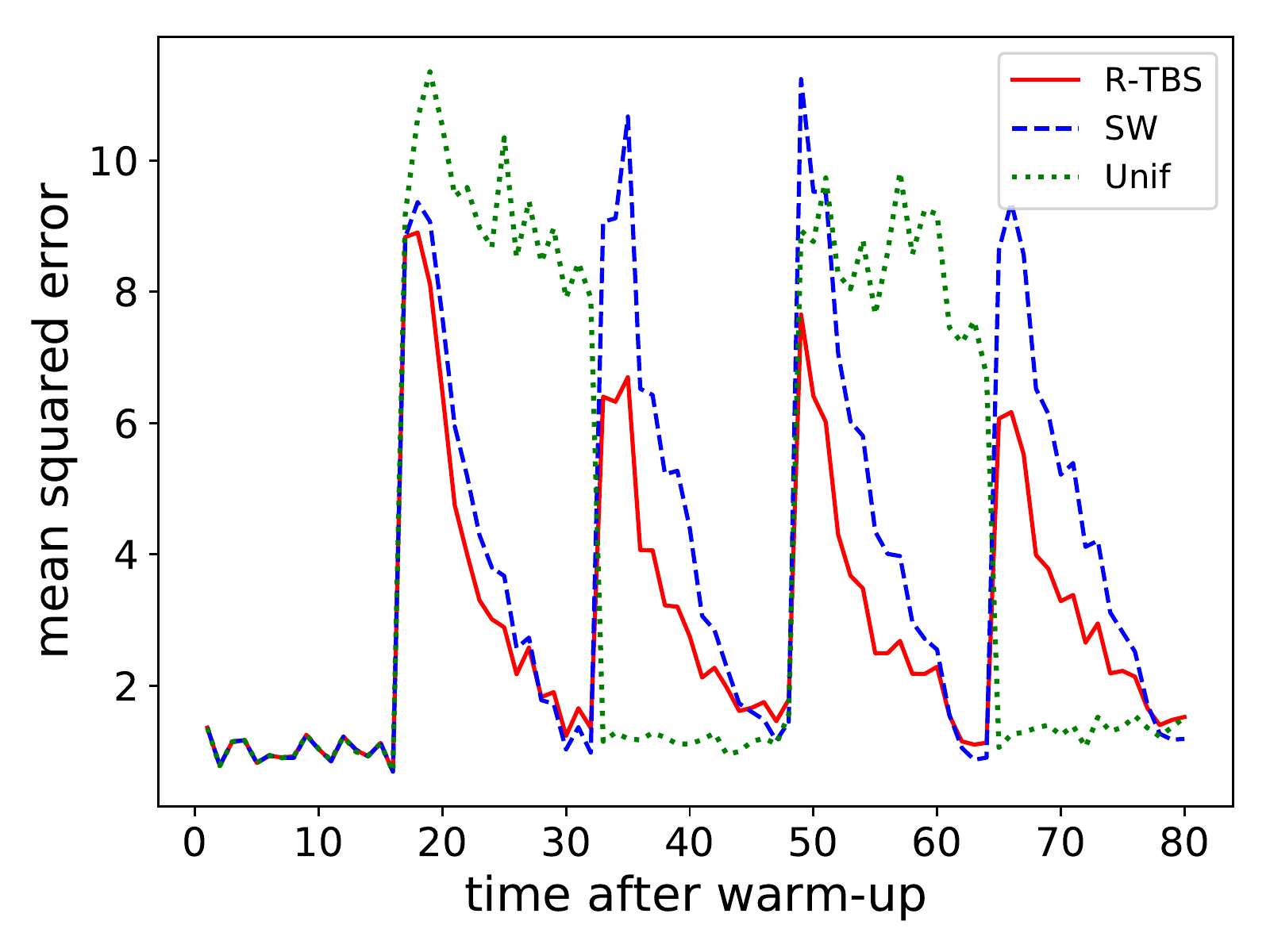}}  
%		\caption{Mean square error for linear regression}
%	\end{minipage}\qquad
%	\begin{minipage}[b]{.28\linewidth}
%		\includegraphics[width=1.4in]{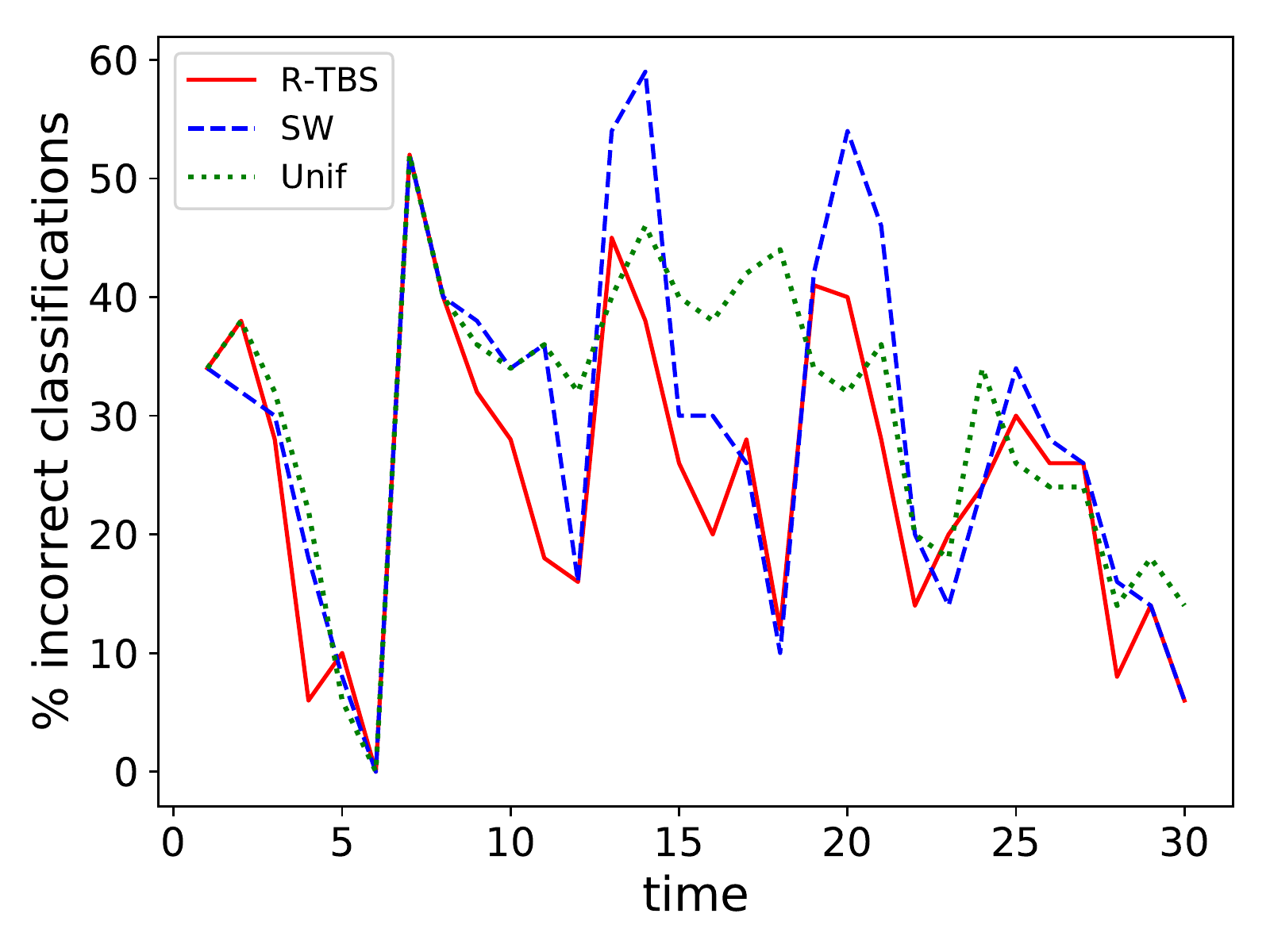} 
%		\caption{Misclassification rate (percent) for Naive Bayes}\label{fig:NBaccuracy}
%	\end{minipage}
%\end{figure*}

\begin{figure*}[tbh]


	\centering
	\subfigure[n=1000, Periodic(10,10)]{
		\label{fig:regression}\includegraphics[width=0.32\linewidth]{figs/Regress1000}} 
	\subfigure[n=1600, Periodic(10,10)]{
		\label{fig:regression1500_10}\includegraphics[width=0.32\linewidth]{figs/Regression1500_10period}}
	\subfigure[n=1600, Periodic(16,16)]{
		\label{fig:regression1500_15}\includegraphics[width=0.32\linewidth]{figs/Regression1500_15period}}  
%	\BigCrunch
	\caption{Mean square error for linear regression}
	%\BigCrunch
%	\SmallCrunch
\end{figure*}

\begin{figure}[bht]
	\centering
	\subfigure[2 parameters, drastic data switch]{
		\label{fig:onlineBetter}\includegraphics[width=0.32\linewidth]{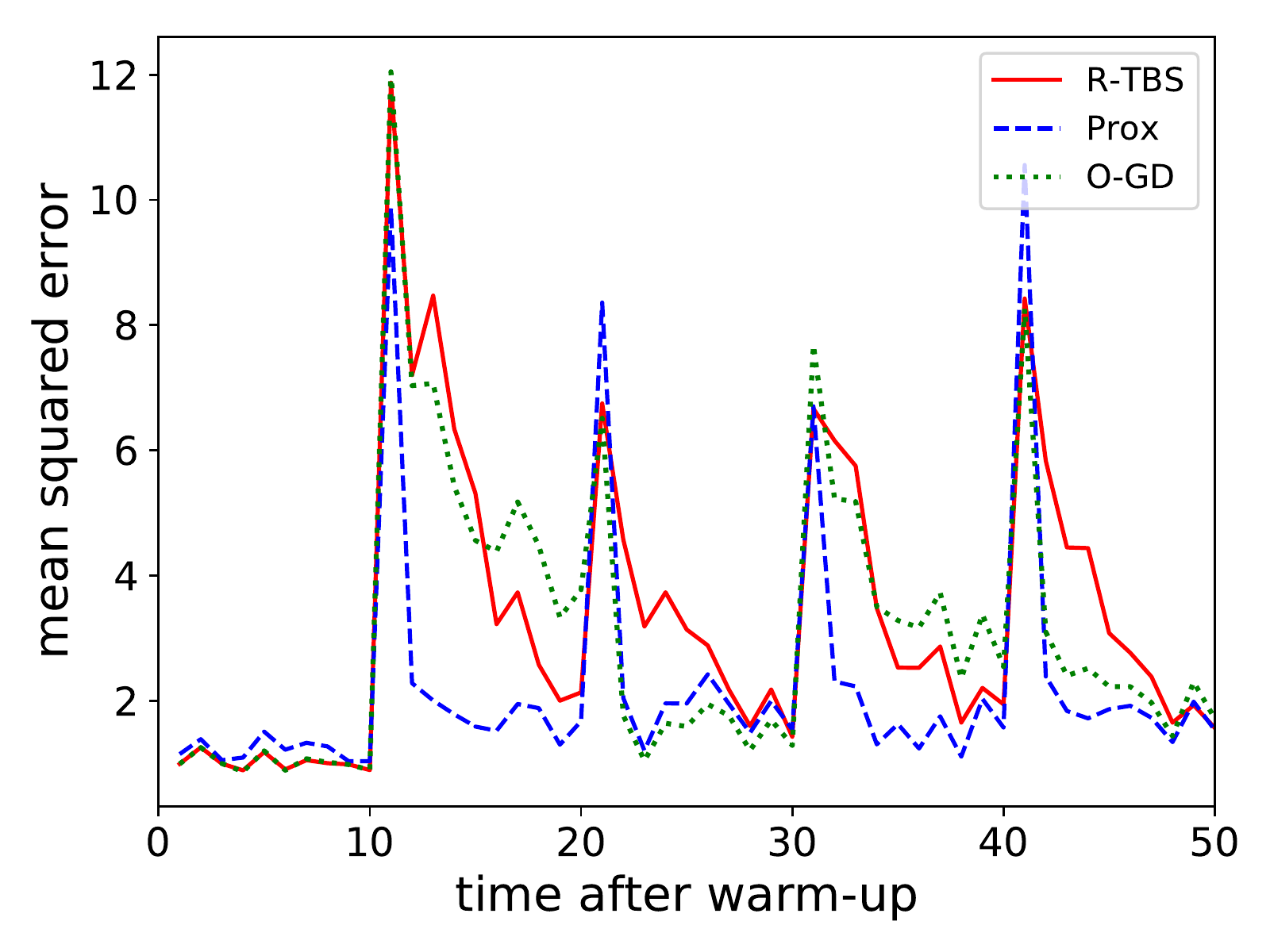}} 
	\subfigure[200 parameters, minor data switch]{
		\label{fig:tbsBetter}\includegraphics[width=0.32\linewidth]{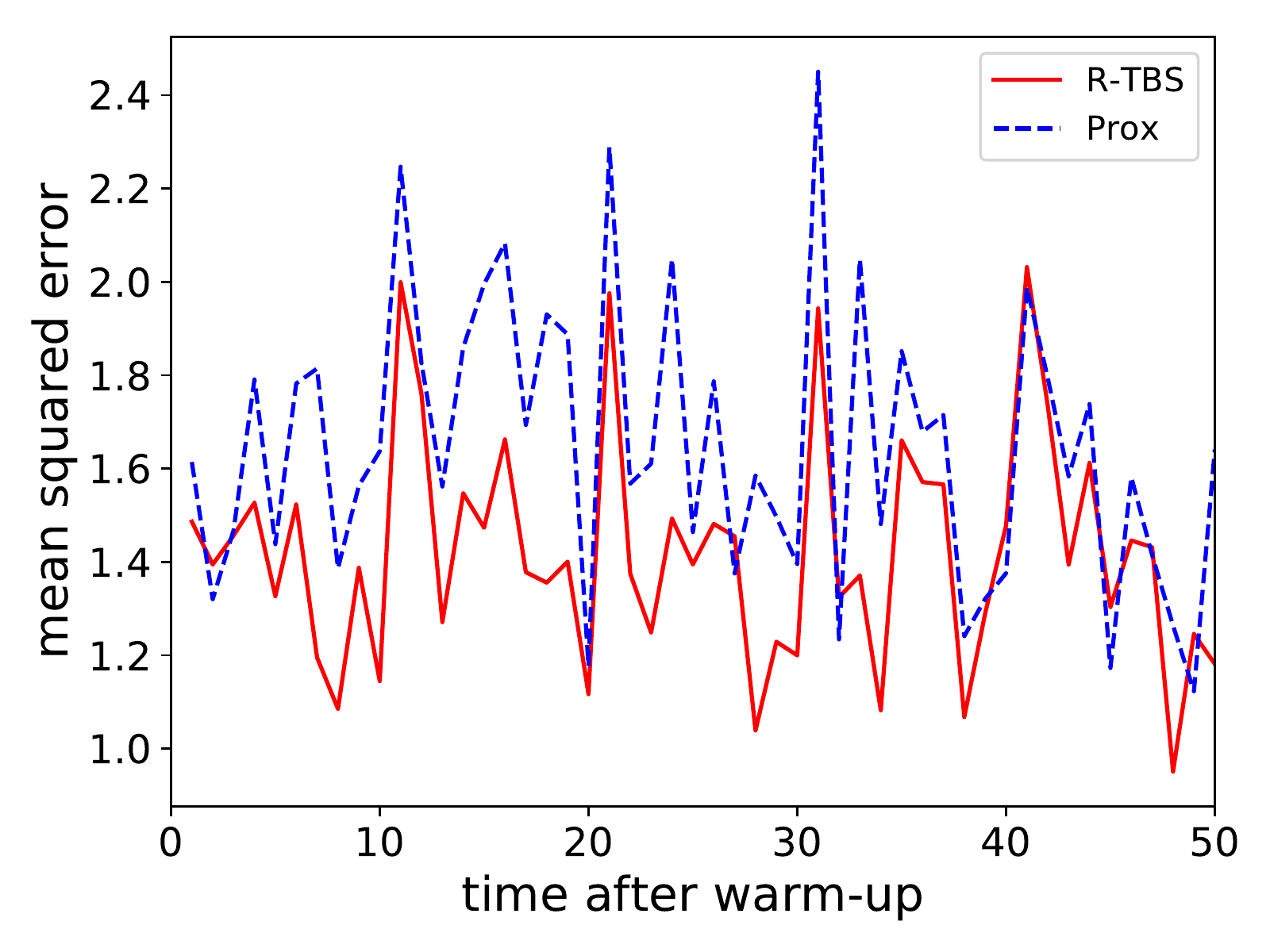}} 
	\subfigure[minor data switch, parameter sweep]{
		\label{fig:paramIncrease}\includegraphics[width=0.32\linewidth]{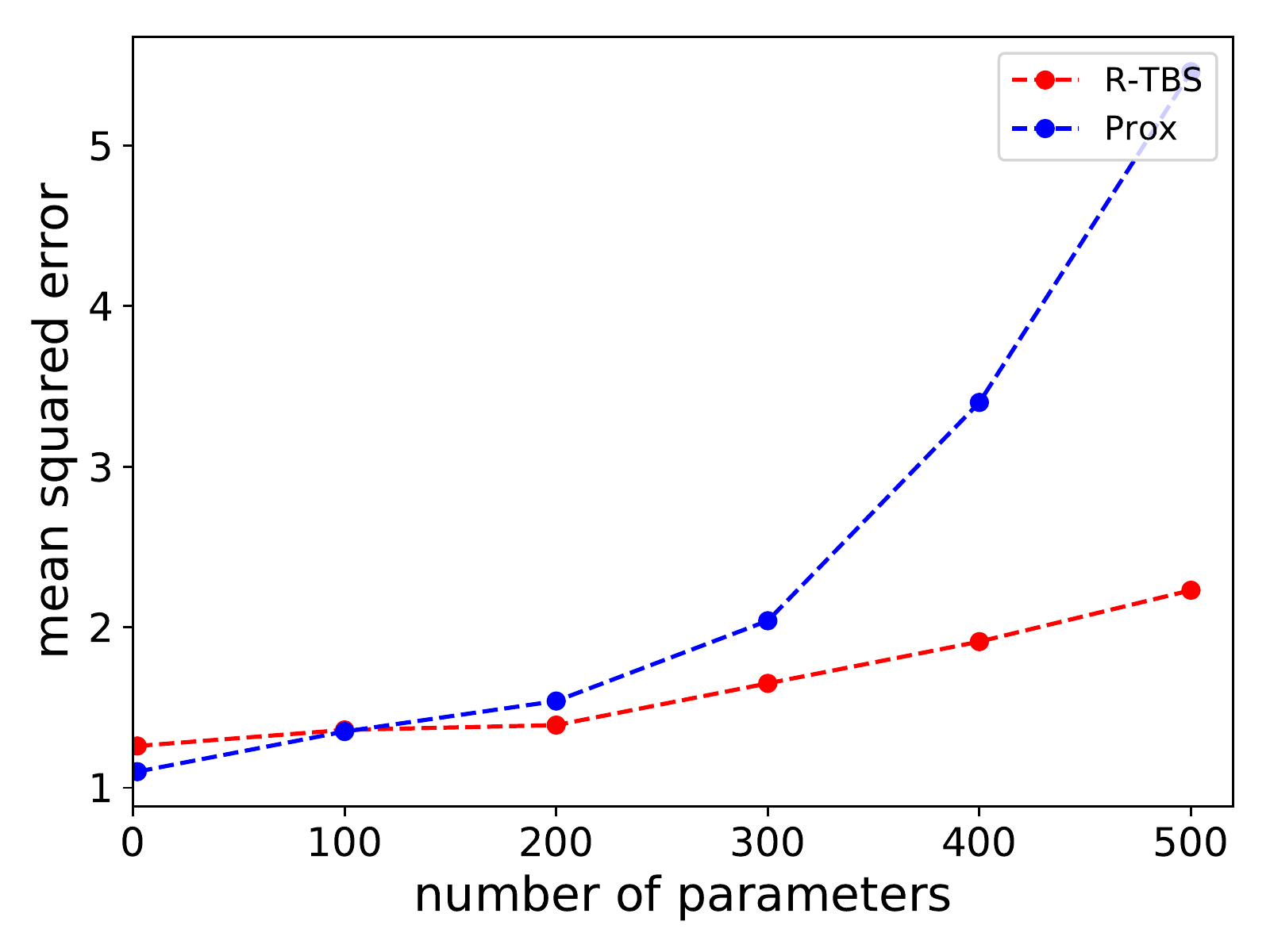}} 
	\caption{\label{fig:onlineVsTBS}Online methods react faster to changes but perform poorly as models get more complex}
\end{figure}

\textbf{Unsaturated Samples:} We now investigate the case of unsaturated samples for R-TBS. We increase the target sample size to $n=1600$. With a constant batch size of 100, and a decay rate $\lambda=0.10$, the reservoir of R-TBS is never full, stabilizing at 1051 items, whereas Unif and SW both have a full sample of 1600 items. 
%(Note that we use a constant batch size to make sure that the reservoir is unsaturated at all times.)

For the Periodic$(10, 10)$ pattern, shown in Figure~\ref{fig:regression1500_10}, SW has a window size large enough to keep some data from older time periods (up to 16 batches ago), making SW's robustness comparable to R-TBS (ES of \revision{5.86} for SW and \revision{6.01} for R-TBS). However, this amalgamation of old data also hurts its overall accuracy, with MSE rising to \revision{4.16}, as opposed to \revision{3.43} for R-TBS. In comparison, the shape of R-TBS remains almost unchanged from Figure~\ref{fig:regression}, and Unif behaves as poorly as before. When the pattern changes to Periodic$(16, 16)$ as shown in Figure~\ref{fig:regression1500_15}, SW doesn't contain enough old data, making its prediction performance suffer from huge fluctuations again, and the superiority of R-TBS is more prominent. In both cases, R-TBS provides the best overall performance, despite having a smaller sample size. This backs up our earlier claim that more data is not always better. A smaller but more balanced sample with good ratios of old and new data can yield better prediction performance than a large but unbalanced sample.

\revision{\textbf{Comparison to Online ML Approaches: } For parametric models, online approaches provide an alternative to data sampling in order to periodically adapt the model without retraining on all prior data. We consider two techniques: online gradient descent (OGD), where the current batch is used in a single step of \textit{mini-batch gradient descent}, and proximal gradient methods (Prox), where a model is retrained on the current batch, but with a penalty term on the distance between the parameter values for the new and previous models; the idea is to prevent the model from changing too drastically at any given time step. We adapt both algorithms to batched streaming input, i.e., models are updated based on all the items in an incoming batch, instead of one item at a time.

As a first experiment, we use the same regression setup as before. We individually tune hyperparameters for each of the three models---$\lambda$ for R-TBS, the learning rate for OGD, and the proximal distance parameter for the proximal gradient approach---to minimize average $L2$ error. Figure \ref{fig:onlineBetter} shows a sample result. Prox performs the best, as it learns the new pattern in just a single round. In comparison, R-TBS and OGD learn the new pattern a bit more slowly. The average mean squared error for R-TBS, Prox, and OGD are 3.28, 2.04, and 3.05, respectively. 

Our second experiment incorporates two changes. First, we scale down the regression coefficients $b$ for each mode by a factor of $1/3$, which has the effect of reducing the jump in error at a mode change from 10x normal error to only 2x. Additionally, we add another 198 parameters to the regression model, thus making the model harder to learn. % without modifying the effect of the data switch. %Otherwise, the experimental setup is identical as before. %as we use the P(10,10) pattern, features are generated from a Uniform(0,1) distribution, and our y values are generated from a linear model with N(0,1) noise. 
As seen in Figure~\ref{fig:tbsBetter}, both R-TBS and Prox still struggle at mode changes, but in non-change rounds R-TBS doesn't vary as wildly as Prox. OGD, not shown, performs so poorly that we didn't include it in the graph. This is due to the fact that it is hard to learn a pattern with many parameters in just a single pass over the data. The MSE values for this experiment are 1.39, 1.54, and 6.82 for R-TBS, Prox, and OGD, respectively. Figure~\ref{fig:paramIncrease} shows the effect of increasing the number of stable parameters in the model, with Prox performing better with fewer parameters and R-TBS performing better with a larger number of parameters. OGD again performs so poorly that it is left off the graph. Overall, from these experiments, along with others not shown, we observe that online models are more sensitive and react faster to drastic changes, but a data sampling approach based on R-TBS is more robust for milder changes and complex models with a large number of parameters.}

\begin{figure*}[tbh]
	\centering
		\includegraphics[width=0.35\linewidth]{} 
	\caption{Misclassification rate (percent) for Naive Bayes}\label{fig:NBaccuracy}
	%\BigCrunch
%	\SmallCrunch
\end{figure*}

\subsection{Application: Naive Bayes Classifier}\label{ssec:naivebayes}

In our final experiment, we evaluate the performance of R-TBS for retraining Naive Bayes models with the Usenet2 dataset (\url{mlkd.csd.auth.gr/concept_drift.html}), which was used in~\cite{KatakisTV08} to study classifiers coping with recurring contexts in data streams. This dataset contains a stream of 1500 messages on different topics from the 20 News Groups Collections~\cite{Lichman2013}. They are sequentially presented to a simulated user who marks whether a message is interesting or not. The user's interest changes after every 300 messages. More details of the dataset can be found in~\cite{KatakisTV08}.

%We next assess the performance of R-TBS on two combination real-synthetic datasets that exhibits Concept Drift. The two datasets both contain streams of messages and articles from the 20 News Groups Collection\cite{Lichman2013}. The messages are presented to the user sequentially and the user responds with whether they are interested or not. The goal is to predict the user's interest. While the messages in the dataset are real, the users interests in the articles are generated synthetically, and have been used in prior studies on Machine Learning under Concept Drift\cite{Katakis2007, Katakis2010}. In the labelling of interests, the user is given changing interests over time, with their interests changing by message number as shown in Table \ref{tab:interest}.

%\begin{table}[h]
%	\caption{Interest in topics for user over time}\label{tab:interest}
%	\BigCrunch
%	\scriptsize
%	\begin{tabular}{|c | c | c | c | c | c | } \hline
%		Topic & 1-300 & 301-600 & 601 - 900 & 901-1200 & 1201-1500 \\ \hline
%		\multicolumn{6}{|c|}{NewsGroup1} \\ \hline
%		Medicine & \checkmark & X & X & X & \checkmark \\ \hline
%		Baseball & X & \checkmark & X & \checkmark & X \\ \hline
%		Space & X & X & \checkmark & X & X \\ \hline
%		\multicolumn{6}{|c|}{NewsGroup2} \\ \hline
%		Medicine & \checkmark & X & \checkmark & X & \checkmark \\ \hline
%		Baseball & X & \checkmark & X & \checkmark & X \\ \hline
%		Space & X & \checkmark & X & \checkmark & X \\ \hline
%	\end{tabular}
%	\BigCrunch
%\end{table}
%\normalsize

Following~\cite{KatakisTV08}, we use Naive Bayes with a bag-of-words model, and set the optimal parameters for SW with maximum sample size of 300 and batch size of 50. Since this dataset is rather small and contexts change frequently, we use the optimal value of $0.35$ for $\lambda$. We find through experiments that R-TBS displays higher prediction accuracy for all $\lambda$ in the range of $[0.1,0.5]$, so precise tuning of $\lambda$ is not critical. In addition, there is not enough data to warm up the models on different sampling schemes, so we report the model performance on all 30 batches. Similarly, we report 20\% ES for this dataset, due to the limited number of batches. \revision{For each sampling scheme, the smoothing parameter that leads to the best misclassification rate is used}. 

The results are shown in Figure~\ref{fig:NBaccuracy}. The misclassification rate for R-TBS, SW, and Unif are \revision{26.2\%, 28.7\%, and 29.1\%} and the 20\% ES values are \revision{43.2\%, 48.8\%, and 42.5\%}. Importantly, for this dataset the changes in the underlying data patterns are less pronounced than in the previous two experiments. Despite this, SW fluctuates wildly, yielding inferior accuracy and robustness. In contrast, Unif barely reacts to the context changes. As a result, Unif is very slightly better than R-TBS with respect to robustness, but at the price of lower overall accuracy. Thus, R-TBS is generally more accurate under mild fluctuations in data patterns, and its superior robustness properties manifest themselves as the changes become more pronounced.

\section{Related Work}\label{sec:relwork}

\textbf{Time-decay and sampling:} Work on sampling with unequal probabilities goes back to at least Lahiri's 1951 paper~\cite{Lahiri51}; see~\cite[Section~4]{OlkenR95} for some additional discussion of early work. A growing interest in streaming scenarios with weighted and decaying items began in the mid-2000's, with most of that work focused on computing specific aggregates from such streams, such as heavy-hitters, subset sums, and quantiles; see, e.g., \cite{AlonDLT05a,CohenS06,CormodeKT08}. The first papers on time-biased reservoir sampling with exponential decay are due to Aggarwal~\cite{Aggarwal06} and Efraimidis and Spirakis~\cite{efraimidisS06}; batch arrivals are not considered in these works. As discussed in Section~\ref{sec:intro}, the sampling schemes in~\cite{Aggarwal06} are tied to item sequence numbers rather than the wall clock times on which we focus; the latter are more natural when dealing with time-varying data arrival rates.

\textbf{Forward decay:} Cormode et al.~\cite{CormodeSSX09} consider an alternative framework for temporally-biased sampling based on an increasing ``forward decay'' function $g$, where the appearance probability at time $t_k$ of an item arriving at time $t_i\le t_k$ is proportional to $g(t_i)/g(t_k)$. This approach can be used to convert any recursively-defined weighted sampling scheme for a finite population to a streaming algorithm. Note that the decay behavior is quite different between forward and backward schemes. For example, a decay function $f(\alpha)=1/(1+\alpha)^2$ would naturally lead to forward decay function $g(t)=(1+t)^2$. Observe, however, that in the backward scheme with uniform item arrival times and $\Delta=1$, an item arriving at time~$t_i$ decays by a factor of $f(0)/f(1)=1/4$ one time unit after it arrives, whereas in the forwards scheme, the decay factor is $g(i)/g(i+1)= i^2/(i+1)^2$, which becomes close to 1 for large~$i$. As discussed in \cite{CormodeSSX09}, forward decay schemes support a notion of ``relative decay'' where the weight of an item is determined by its fractional distance between an initial ``landmark'' time and the current time. In this paper we focus on backward decay because this latter notion is widely used and, we believe, easier for users to understand. Moreover, as indicated above, items decay relatively slowly in the forward scheme, which can cause ML algorithms to adapt too slowly to changes in the data.
Finally, having experimented with forward decay versions of T-TBS and R-TBS, we observed that another issue with the forward decay in our setting is numerical stability. The work in \cite{CormodeSSX09} focused on analytical queries that are defined for items arriving during a specified, not-too-long time interval. In this setting, one can choose the forward-decay landmark time to be the time when the first item of interest arrives. For a sequence of such queries, the landmark time can be repeatedly shifted forward so that none of the item timestamps (measured relative to the landmark time) becomes too large.  In our setting, our sample can retain items that have arrived arbitrarily long ago, so the only feasible landmark time is $t=0$. This means that as time passes, the (absolute) time stamps become very large. The large timestamps pose numerical difficulties that are not amenable to normalization strategies. So again, we are motivated to focus on backward decay.

In the special case of an exponential decay function, the notions of backward and forward decay coincide, and so algorithms based on forward decay fall within our framework. (It is not hard to show that exponential decay functions are the only functions for which these notions coincide.) In this exponential setting, the authors in \cite{CormodeSSX09} provide a time biased reservoir sampling algorithm based on the A-Res weighted sampling scheme proposed in~\cite{efraimidisS06}. Rather than enforcing \eqref{eq:expratio}, however, the algorithm enforces the (different) A-Res biasing scheme. In more detail, if $s_{i}$ denotes the element at slot~$i$ in the reservoir, then the algorithm in~\cite{efraimidisS06} implements a scheme where an item $x$ is chosen to be at slot $i + 1$ in the reservoir with probability $w_x/(\sum_{j= 1}^x w_j - \sum_{j=1}^{i} w_{s_j})$. From the form of this equation, it becomes clear that the resulting sampling algorithm violates \eqref{eq:expratio}. Indeed, Efraimidis~\cite{Efraimidis15} gives some numerical examples illustrating this point (in his comparison of the A-Res and A-Chao algorithms). We would argue that the constraint on appearance probabilities in \eqref{eq:expratio} is easier to understand in the setting of model management than the foregoing constraint on initial acceptance probabilities. The closest solution to the exponential version of R-TBS adapts the weighted sampling algorithm of Chao~\cite{Chao82} to batches and forward decay; we call the resulting algorithm B-Chao and describe it in Appendix~\ref{sec:chao}.
Unfortunately, as discussed in the appendix, the relation in \eqref{eq:expratio} is violated both during the initial fill-up phase and whenever the data arrival rate becomes slow relative to the decay rate, so that the sample contains ``overweight'' items. Including overweight items causes over-representation of older items during initial fill-up or newer items during low inflow, thus potentially degrading predictive accuracy. The root of the issue is that the sample size is nondecreasing over time. The R-TBS algorithm is the first algorithm to correctly (and, for exponential decay, optimally) deal with ``underflows'' by allowing the sample to shrink---thus handling data streams whose flow rates vary unrestrictedly over continuous time. The current paper also explicitly handles batch arrivals and explores parallel implementation issues. The VarOpt sampling algorithm of Cohen et al.~\cite{CohenDKLT11}---which was developed to solve the specific problem of estimating ``subset sums''---can also be modified via forward decay. The resulting algorithm is more efficient than Chao, but as stated in \cite{CohenDKLT11}, it has the same statistical properties, and hence does not satisfy \eqref{eq:expratio}.

\textbf{Model management:} A key goal of our work is to support model management; see~\cite{GamaZBPB14} for a survey on methods for detecting changing data---also called ``concept drift'' in the setting of online learning---and for adapting models to deal with drift. As mentioned previously, one possibility is to re-engineer the learning algorithm. This has been done, for example, with support-vector machines (SVMs) by developing incremental versions of the basic SVM algorithm~\cite{CauwenberghsP00} and by adjusting the training data in an SVM-specific manner, such as by adjusting example weights as in Klinkenberg~\cite{Klinkenberg04}. Klinkenberg also considers using curated data selection to learn over concept drift, finding that weighted data selection also improves the performance of learners. Our approach of model retraining using time-biased samples follows this latter approach, and is appealing in that it is simple and applies to a large class of machine-learning models. The recently proposed Velox system for model management~\cite{CrankshawBGLZFG15} ties together online learning and statistical techniques for detecting concept drift. After detecting drift through poor model performance, Velox kicks off batch learning algorithms to retrain the model. Our approach to model management is complementary to the work in~\cite{CrankshawBGLZFG15} and could potentially be used in a system like Velox to help deployed models recover from poor performance more quickly. The developers of the recent MacroBase system~\cite{BailisGMNRS17} have incorporated a time-biased sampling approach to model retraining, for identifying and explaining outliers in fast data streams. MacroBase essentially uses Chao's algorithm, and so could potentially benefit from the R-TBS algorithm to enforce the inclusion criterion~\eqref{eq:expratio} in the presence of highly variable data arrival rates.
% !TEX root = tbs-tods.tex
% above command is for TeXShop

\section{Conclusion}\label{sec:concl}

Our experiments with classification and regression algorithms, together with the prior work on graph analytics in \cite{XieTSBH15}, indicate the potential usefulness of periodic retraining over time-biased samples to help analytics algorithms deal with evolving data streams without requiring algorithmic re-engineering. To this end we have developed and analyzed several time-biased sampling algorithms that are of independent interest.

In particular, the R-TBS algorithm allows simultaneous control of both the item-inclusion probabilities and the sample size, even when the data arrival rate is unknown and can vary arbitrarily. We have generalized our preliminary algorithms, analyses, and experiments in \cite{HentschelHT18} to arbitrary decay functions. Both theory and empirical results lead us to recommend exponentially and subexponentially decreasing decay functions to achieve reasonable storage and performance.  We found that the runtime performance, as well as the resulting accuracy and robustness of ML models, was comparable for the various decay functions that we studied, so we expect that the choice of decay function will be driven by the application setting, as discussed in Section~\ref{sec:intro}. For exponential decay functions, R-TBS maximizes the expected sample size and minimizes sample-size variability. For non-exponential decay functions, we have provided techniques to trade off storage with sample-size behavior in a principled and controllable manner; the user can similarly trade off storage requirements and control of inclusion probabilities. 

We have also provided techniques for distributed implementation of R-TBS and T-TBS, and have shown that use of time-biased sampling together with periodic model retraining can improve model robustness in the face of abnormal events and periodic behavior in the data. In settings where (i) the mean data arrival rate is known and (roughly) constant, as with a fixed set of sensors, and (ii) occasional sample overflows can be easily dealt with by allocating extra memory, we recommend use of T-TBS to precisely control item-inclusion probabilities. In many applications, however,  we expect that either (i) or (ii) will violated, in which case we recommend the use of R-TBS. Our experiments showed that R-TBS is superior to sliding windows over a range of parameter values, and hence does not require highly precise parameter tuning; this may be because time-biased sampling avoids the all-or-nothing item inclusion mechanism inherent in sliding windows.

An interesting future direction is to apply and extend our sampling schemes to other types of streaming analytics. Another goal is to combine our methods with drift-detection techniques to achieve end-to-end model management solutions.

%\newpage
\begin{appendix}
% !TEX root = tbs-tods.tex
% above command is for TeXShop

\section{Proofs}\label{sec:proofs}

\textbf{Proof of Theorem~\ref{th:recurr}}
Denote by $B_k=|\xB_k|$ the (random) size of $\xB_k$ for $k\ge 1$. We therefore assume that $\{B_k\}_{k\ge 1}$  are mutually independent and identically distributed as a random variable $B$ having finite mean $b\ge n\gamma$. To prove assertion~(i) of the theorem, write
\begin{equation}\label{eq:csum}
C_k=\sum_{i=0}^{k-1} N_{i,k},
\end{equation}
where $N_{i,k}$ is the number of sample items from batch $\xB_{k-i}$, i.e, the number of sample items of age $i\Delta$. Observe that the set of sample items of age~$i\Delta$ comprises those items in batch $\xB_{k-i}$ that survive $k-i+1$ rounds of Bernoulli sampling with respective success probabilities of $q,p_{i,i+1},\ldots,p_{i,k}$. As is well known, such a sample is probabilistically equivalent to a single Bernoulli sample with success probability $q\times p_{i,i+1}\times \cdots\times p_{i,k}=qf_k$. Thus we have
\[
\begin{split}
\mean[C_k]&=\mean\bigl[\mean[C_k\mid B_1,\ldots,B_k]\bigr]=\mean\biggl[\sum_{i=0}^{k-1}\mean[N_{i,k}\mid B_{k-i}]\biggr]\\
&=\mean\biggl[\sum_{i=0}^{k-1} qf_i B_{k-i}\biggr] = \sum_{i=0}^{k-1} qf_i\mean[B_{k-i}]=qbF_{k-1}=nF_{k-1}/F_\infty,
\end{split}
\]
since the $N_{i,k}$ are mutually independent given $B_1,\ldots,B_k$, and each $N_{i,k}$ depends on the batch sizes only through $B_{k-i}$. Assertion~(i) now follows immediately. The proof of assertion~(ii) is similar, and uses the fact that $\var[N_{i,k}\mid B_{k-i}]=qf_i(1-qf_i)B_{k-i}$.

Assertion~(iii) follows from \eqref{eq:csum} and Hoeffding's inequalities~\cite{Hoeffding63}. Indeed, a direct application yields the result in assertion~(ii)(a):
$
\prob{C_k\ge (1+\eps)n}\le \exp(-2kn^2\eps^2/\bbar^2)
$
for $\eps,k>0$. To prove assertion~(ii)(b), fix $\eps>0$ and $\delta\in(0,\eps)$ and observe that, by assertion~(i), we have $\mu_k\triangleq \mean[C_k]\to n$ as $k\to\infty$, so that $\mu_k\ge (1-\delta)n$ for sufficiently large $k$. Again applying Hoeffding's inequalty, we have that
\[
\prob{C_k\le (1-\eps)n}\le \prb\Bigl[C_k\le \frac{(1-\eps)}{(1-\delta)}\mu_k\Bigr]
=\prob{C_k\le (1-\eps_\delta)\mu_k}\le e^{-2kn^2\eps_\delta^2/\bbar^2},
\]
where $\eps_\delta=(\eps-\delta)/(1-\delta)$.

%%%%%%%%%%%%%%%%%%%%%%%%%%%%%%%%%%%%%%%%%%%%%%%%%%%%%%%%%
\vskip\baselineskip
\textbf{Proof of Theorem~\ref{th:recurGam}} Fix $f$ and define the process $\{\xS_k\}_{k\ge 0}$ as in Section~\ref{sec:ttbs}. We claim that $\{\xS_k\}_{k\ge 0}$ is an irreducible, aperiodic, time-homogeneous Markov chain with state space $\Sigma=2^{[0..\bbar]\times[0,1,\ldots]}$. Indeed the time-homogenous Markov property follows from the one-step recursive nature of the sample-update process. To prove the rest of the claim, set $\alphabar(s)=\max\{\,i:(n,i)\in s\,\}$ for $s\in\Sigma$, so that $\alphabar(s)$ is the age of the oldest item(s) in the sample. Next observe that, for any $s,s'\in\Sigma$ there is a positive probability of going from $s$ to $\emptyset$ in one step, and then a positive probability of going from $\emptyset$ to $s'$ in $l$ steps for any $l\ge \alphabar(s')$.

To prove assertion~(i) of the theorem,  fix $m\ge 0$. If $\bbar=\infty$, so that the batch size is unbounded, then set $s=\{(m',0)\}$, where $m'=\min\{\,i\ge m: \prob{B=i}>0\,\}$, and observe that $\prob{\xS_0=s}\ge q^{m'}\prob{B=m'}>0$. If $\bbar<\infty$, then set $k=\lceil{m/\bbar}\rceil$ and $s=\{(\bbar,0),\ldots,(\bbar,k-1)\}$, and observe that $\prob{\xS_1=s}\ge \prod_{i=0}^{k-1}(qf_i)^{\bbar}\prob{B=\bbar}>0$.

To prove assertion~(ii), it suffices to show that the chain is \emph{recurrent} in that $\prob{\xS_k=s\text{ i.o.}}=1$ for all $s\in\Sigma$. To this end, we apply an extended version of Foster's Theorem due to Meyn and Tweedie~\cite[Theorem~2.1(i)]{MeynT94}. This result asserts that a sufficient condition for recurrence is the existence of a nonnegative unbounded function $V$ on $\Sigma$, a function $k: \Sigma\mapsto \{1,2,\ldots\}$, and a finite subset $A\subseteq\Sigma$ such that
\begin{equation}\label{eq:extfoster}
\mean_s[V(\xS_{k(s)})] - V(s)\le 0
\end{equation}
for all $s\in\Sigma\setminus A$, where, in general, $\mean_s[g(\xS_k)]=\mean[g(\xS_k)\mid \xS_0=s]$.\footnote{For purposes of analyzing the chain, we let $s$ be any valid state that lies in $\Sigma\setminus A$, even though, when actually sampling, the (random) initial state is $\xS_0=\{(N_0,0)\}$ where $N_0$ is a Binomial$(B,q)$ random variable.} Let $\psi(s)$ denote the sample size corresponding to state $s\in\Sigma$, i.e., $\psi(s)=\sum_{(n,i)\in s}n$. Then we set $V(s)=(\psi(s)-n)^2+\alphabar(s)$. We now develop expressions for the left side of \eqref{eq:extfoster}, which then determine the required values for $k(s)$ and $A$. First consider a fixed integer $k\ge 1$ and state $s\in\Sigma$, and observe that we can write $\mean_s[V(\xS_k)] - V(s) = \mean_s\bigl[\bigl(\psi(\xS_k)-n\bigr)^2-\bigl(\psi(s)-n\bigr)^2\bigr] + \mean_s[\alphabar(\xS_k)-\alphabar(s)]$. We analyze each of the two terms on the right separately.

For the first term, denote by $n_i$ the number of age-$i$ sample items when the sample is in state $s$ and set $r_{i,k}=1-(f_{i+k}/f_i)$. (Here and elsewhere we suppress the explicit dependence upon $f$ in our notation.) We can write $\psi(\xS_k)-\psi(s) = D_1-D_2$, where $D_1=\sum_{i=0}^{k-1}N^{(1)}_{i,k}$ and $D_2=\sum_{i=0}^{\alphabar(s)}N^{(2)}_{i,k}$, with $N^{(1)}_{i,k}$ and $N^{(2)}_{i,k}$ denoting Binomial$(B_{k-i},qf_i)$ and Binomial$(n_i,r_{i,k})$ random variables, respectively. Here $D_1$ is the net number of items (after decay) inserted into the sample during the first $k$ steps and $D_2$ is the total number of initial items that have been removed from the sample during these $k$ steps. Then $D=D_1-D_2$ is the overall change in the sample size. Observe that $F^{(2)}_\infty\le F_\infty\le \sum_{i=0}^\infty if_i<\infty$ by assumption. Recalling that $q=n/(bF_\infty)$, straightforward calculations similar to those given previously show that $\mean[D^2]=\mean[D^2_1]-2\mean[D_1]\mean[D_2]+\mean[D^2_1]$, where
\[
\begin{split}
&\mean[D_1]= nF_{k-1}/F_\infty,\quad
\mean[D^2_1]=nF_{k-1}/F_\infty-n^2F^{(2)}_{k-1}/(bF^2_\infty)+\mean^2[D_1],\\
&\mean[D_2]=\sum_{i=1}^{\alphabar(s)}n_ir_{i,k},\quad
\mean[D^2_2]= \sum_{i=1}^{\alphabar(s)}n_ir_{i,k}(1-r_{i,k})+\mean^2[D_2].
\end{split}
\]
Then we have 
\begin{equation}\label{eq:psiRel}
\mean_s\bigl[\bigl(\psi(\xS_k)-n\bigr)^2-\bigl(\psi(s)-n)^2\bigr]
=\mean_s\bigl[\bigl(\psi(s)+D-n\bigr)^2-\bigl(\psi(s)-n\bigr)^2\bigr] =2\mean[D]\bigl(\psi(s)-n\bigr)+\mean[D^2].
\end{equation}
Because $\lim_{k\to\infty}r_{i,k}=1$ for all $i$, it follows that
\begin{equation}\label{eq:limdrifta}
\lim_{k\to\infty}\mean_s\bigl[\bigl(\psi(\xS_k)-n\bigr)^2-\bigl(\psi(s)-n)^2\bigr]
=-\bigl(\psi(s)-n\bigr)^2+\phi_{n,b},
\end{equation}
where $\phi_{n,b}=n\bigl(1-nF^{(2)}_\infty/(bF^2_\infty)\bigr)$.

For the second term, we have
$
\mean_s[\alphabar(\xS_k)]
=\sum_{l=0}^{k-1}\prob{\alphabar(\xS_k) >l}
=\sum_{l=0}^{k-1}\bigl[1-\prob{\alphabar(\xS_k) \le l}\bigr].
$
The event $\alphabar(\xS_k)\le l$ occurs if and only if, after $k$ steps, every group---both initial and subsequent---with age $>l$ has lost all of its members due to the decay process. Conditioning on the batch sizes, we have
\[
\begin{split}
&\prob{\alphabar(\xS_k) \le l}
=\mean\bigl[\prob{\alphabar(\xS_k) \le l\mid B_0,\ldots,B_{k-1}}\bigr]\\
&\quad=\mean\biggl[\prod_{i=l+1}^{\alphabar(s)+k}(1-qf_i)^{Z_i}\biggr]
\ge \prod_{i=l+1}^{\alphabar(s)+k}(1-qf_i)^{\bbar}\ge  \prod_{i=l+1}^{\alphabar(s)+k}(1-f_i)^{\bbar}
\end{split}
\]
where $Z_i=B_i$ if $i<k$ and $Z_i=n_{i-k}$ if $i\ge k$. Thus
\begin{equation}\label{eq:meanAgeBd}
\mean_s[\alphabar(\xS_k)]\le \Gamma\bigl(\alphabar(s)+k;\bbar),
\end{equation}
where we define
$
\Gamma(j;v)=\sum_{l=0}^{j-1}\bigl[1-\prod_{i=l+1}^j(1-f_i)^v\bigr]
$
for $j\ge 0$ and $v>0$. For $v>0$, denote by $v^*$ the smallest even integer greater than or equal to $v$. Since $(1-x)(1-y)\ge (1-x-y)$ for any $x,y\in[0,1]$, we have by induction that 
\[
\prod_{i=l+1}^j(1-f_i)^v=\biggl[\prod_{i=l+1}^j(1-f_i)\biggr]^v\ge \biggl[\prod_{i=l+1}^j(1-f_i)\biggr]^{v^*} 
 \ge (1-F_{l+1,j})^{v^*}\ge 1-v^*F_{l+1,j},
\]
where $F_{a,b}=\sum_{i=a}^bf_i$ and we have used Bernoulli's inequality. It follows that
\[
\Gamma(j;v)
\le v^*\sum_{l=0}^{j-1} F_{l+1,j}=v^*\sum_{l=0}^{j-1}\sum_{m=l+1}^{j-1}f_m
=v^*\sum_{i=0}^j if_i\le v^*\sum_{i=0}^\infty if_i\triangleq\Gamma^*(v)<\infty,
\]
for $v>0$; the second equality follows by interchanging the order of summation. Thus, by \eqref{eq:meanAgeBd},
\begin{equation}\label{eq:limdriftb}
\mean_s[\alphabar(\xS_k)-\alphabar(s)]\le \Gamma^*(\bbar)-\alphabar(s)
\end{equation}
for $k\ge 0$. Combining \eqref{eq:limdrifta} and \eqref{eq:limdriftb}, we have
$
\lim_{k\to\infty}\mean_s[V(\xS_k)]-V(s)\le -\bigl(\psi(s)-n\bigr)^2-\alphabar(s)+\tilde\phi_{n,b},
$
where $\tilde\phi_{n,b}=\phi_{n,b}+\Gamma^*(\bbar)$.

Fix $\eps>0$ and, for $s\in\Sigma$, choose $k(s)$ to be large enough so that
\[
\mean_s[V(\xS_{k(s)})]-V(s)\le -\bigl(\psi(s)-n\bigr)^2-\alphabar(s)+\tilde\phi_{n,b}+\eps.
\]
Next choose $c$ and $a$ large enough so that $\min\{(c-n)^2,a\}\ge \tilde\phi_{n,b}+\eps$, and define the finite set
\begin{equation}\label{eq:defA}
A=\{s\in\Sigma:\psi(s)\le c\text{ and }\alphabar(s)\le a\}.
\end{equation}
It is now straightforward to verify that \eqref{eq:extfoster} holds. Indeed, if $s\in\Sigma\setminus A$, then either $\psi(s)>c$, or $\alphabar(s)>a$, or both. If $\phi(s)>c$, then, since $-\alphabar(s)\le 0$, 
\[
\mean_s[V(\xS_k)]-V(s)\le -\bigl(\psi(s)-n\bigr)^2+\tilde\phi_{n,b}+\eps\le -(c-n\bigr)^2+\tilde\phi_{n,b}+\eps\le 0,
\]
and if $\alphabar(s)>a$, then, since $-\bigl(\psi(s)-n\bigr)^2\le 0$,
\[
\mean_s[V(\xS_k)]-V(s)\le -\alphabar(s)+\tilde\phi_{n,b}+\eps\le -a+\tilde\phi_{n,b}+\eps\le 0,
\]
so that \eqref{eq:extfoster} holds in both these cases. If both $\phi(s)>c$ and $\alphabar(s)>a$, then, of course, \eqref{eq:extfoster} holds as well.  Assertion~(ii) now follows from \cite[Theorem~2.1(i)]{MeynT94}.

To prove assertion~(iii), we show that if \eqref{eq:supratio} holds, then the chain is \emph{positive} recurrent in that the expected time between successive visits to any fixed state $m$ is finite. Note that, by the Markov property, the times between successive visits to $m$ are independent and identically distributed. Thus positive recurrence immediately implies assertion~(iii)(a). Positive recurrence also implies assertion~(iii)(b). Specifically, the chain $\{\xS_k\}_{t\ge 0}$---being irreducible, aperiodic, and positive recurrent---is ergodic, and so has a stationary distribution $\pi$ \cite[Thm.~3.3.1]{bremaud99}. This distribution is also a limiting distribution of the chain; in other words, $\xS_k\Rightarrow \xS_\infty$ for any fixed initial state, where $\Rightarrow$ denotes convergence in distribution and $\xS_\infty$  has distribution $\pi$ \cite[Thm.~4.2.1]{bremaud99}. Setting $C_\infty=\psi(\xS_\infty)$, we then have $C_k\Rightarrow C_\infty$. Moreover, a simple calculation as in the proofs of assertions~(i) and (ii) of Theorem~\ref{th:recurr} shows that $\mean[C_k^2]=qbF_k-q^2bF^{(2)}_k+(qbF_k)^2$, so that $\sup_k\mean[C_k^2]<\infty$. Thus $\{C_k\}_{k\ge 0}$ is uniformly integrable, and hence $\mean[C_\infty]=\lim_{k\to\infty}E[C_k]=n$~\cite[p.~338]{Billingsley95}. Finally, by the strong law of large numbers for Markov chains---see, e.g., \cite[Thm.~3.4.1]{bremaud99}---we have $\lim_{k\to\infty}k^{-1}\sum_{i=0}^k C_i=E[C_\infty]=n$ w.p.1.

To show positive recurrence, we establish a stronger drift condition using the function $V$ defined above, namely that 
\begin{equation}\label{eq:extfosterPos}
\mean_s[V(\xS_{k(s)})] - V(s)\le -k(s)
\end{equation}
for $x\in\Sigma\setminus A$, where $A$ has the same structure as in \eqref{eq:defA}, but uses different constants $c$ and $a$. We then apply Theorem~2.1(ii) in~\cite{MeynT94}. (The theorem also requires that $V$ be bounded on $A$, which is obvious.) First observe that our assumption on $f$ implies that, for any $k\ge 1$, we have $\inf_{i\ge 0}r_{i,k}\ge r_k$ for $k\ge 1$, where $r_k=(1-g_k)\to 1$ as $k\to\infty$. Expanding the terms in \eqref{eq:psiRel} and using the inequality on $r_{i,k}$, we obtain
\begin{equation}\label{eq:expCalc}
\begin{split}
&\mean_s\bigl[\bigl(\psi(\xS_k)-n\bigr)^2-\bigl(\psi(s)-n\bigr)^2\bigr]\\
&\ \le 2\bigl(\beta_k n-r_k\psi(s)\bigr)\bigl(\psi(s)-n)+ \beta_k n
+\beta_k^2n^2-2n\beta_kr_k\psi(s)+\psi(s)+\psi^2(s)\\
&\ = u_k + v_k\psi(s) -(2r_k-1)\psi^2(s),
\end{split}
\end{equation}
where $\beta_k=F_{k-1}/F_\infty$, $u_k=n\beta_k+n^2\beta_k(\beta_k-2)\le 0$, and $v_k=1+2n(\beta_k+r_k-\beta_kr_k)$. Thus
\[
\mean_s\bigl[V(\xS_k)]-V(s)
\le v_k\psi(s) -(2r_k-1)\psi^2(s)+\Gamma^*(\bbar)-\alphabar(s)
\]
Fix $m$ such that $r_m>1/2$. From straightforward calculus, we see that
\[
v_m\psi(s) -(2r_m-1)\psi^2(s)+\Gamma^*(\bbar)-\alphabar(s)\le \frac{v_m}{(8r_m-4)}+\Gamma^*(\bbar)-\alphabar(s)
\]
for all $s\in\Sigma$. Now define $A$ as in \eqref{eq:defA}, choosing $c$ and $a$ so that
\[
 \max\bigl\{v_mc -(2r_m-1)c^2 +\Gamma^*(\bbar),v_m(8r_m-4)^{-1}+\Gamma^*(\bbar)-a\bigr\}< -m.
\]
The inequality in \eqref{eq:extfosterPos} now follows with $k(s)\equiv m$.

%%%%%%%%%%%%%%%%%%%%%%%%%%%%%%%%%%%%%%%%%%%%%%%%%%%%%%%%%
\vskip\baselineskip
\textbf{Proof of Theorem~\ref{th:downsamp}} We first assume that $\pi=\{x^*\}$, so that there exists a partial item in $L$, and prove the result for $x=x^*$ and then for $x\not= x^*$. We then prove the result when $\pi=\emptyset$. 

\vskip 0.5\baselineskip
\textit{Proof for $x=x^*$}: Observe that when $\pi=\{x^*\}$, we have $\prob{x^*\in S}=\frc(C)$. First suppose that $\floor{C'}=0$, so that $\frc(C')=C'$. Either the partial item $x^*$ is swapped and ejected in lines~\ref{ln:swap0} and \ref{ln:killA} or is retained as a partial item: $\pi'=\{x^*\}$. Thus
\[
\begin{split}
\prob{x^*\in S'}&=\prob{x^*\in S'\mid x^*\in L'}\prob{x^*\in L'}
=\frc(C')\prob{\text{no swap}}=\frc(C')\bigl(\frc(C)/C\bigr)\\
&=(C'/C)\frc(C)=\theta\prob{x^*\in S}.
\end{split}
\]
Next suppose that $0<\floor{C'}=\floor{C}$. Then the partial item may or may not be converted to a full item via the swap in line~\ref{ln:convert}. Denoting by
$r=\bigl(1-(C'/C)\frc(C)\bigr)/\bigl(1-\frc(C')\bigr)$ the probability that this swap does not occur, we have
\[
\begin{split}
&\prob{x^*\in S'}=\prob{x^*\in S'\mid x^*\in\pi'}\prob{x^*\in\pi'}
+\prob{x^*\in S'\mid x^*\not\in\pi'}\prob{x^*\not\in\pi'}\\
&\quad=\frc(C')\cdot\prob{\text{no swap}}+1\cdot \prob{\text{swap}}
=1-r\bigl(1-\frc(C')\bigr)\\
&\quad=(C'/C)\frc(C)=\theta\prob{x^*\in S}.
\end{split}
\]
Finally, suppose that $\floor{C'}<\floor{C}$. Either the partial item $x^*$ is swapped into $A$ in line~\ref{ln:swapB} or ejected in line~\ref{ln:move}. Thus
$\prob{x^*\in S'}=\prob{\text{swap}} =(C'/C)\frc(C)=\theta\prob{x^*\in S}$, establishing the assertion of the lemma for $x=x^*$ when the partial item~$x^*$ exists.

\vskip 0.5\baselineskip
\textit{Proof for $x\not=x^*$:} Still assuming the existence of $x^*$, set $I_x=1$ if item~$x$ belongs to $S'$ and $Y=I_x=0$ otherwise. Also set $p_x=\prob{x\in S'}=\mean[I_x]$. Since all full items in $S$ are treated identically, we have $p_x\equiv p$ for  $x\in A$, and
\[
\mean[|S'|]=\mean\Bigl[\sum_{x\in A}I_x+I_{x^*}\Bigr]=\sum_{x\in A}\mean[I_x]+\mean[I_{x^*}]=\floor{C}p+p_{x^*}
\]
so that, using \eqref{eq:meansize} and the above result, 
\[
\begin{split}
\prob{x\in S'} &=(\mean[|S'|]-p_{x^*})/\floor{C}=\bigl(C'-(C'/C)\frc(C)\bigr)/\floor{C}
=(C'/C)\bigl(C-\frc(C)\bigr)/\floor{C}\\
&=C'/C=\theta\prob{x\in S}
\end{split}
\]
for any full item $x\in A$.

\vskip 0.5\baselineskip
\textit{Proof when $\pi=\emptyset$:} We conclude the proof by observing that, if $\pi=\emptyset$, then
$
C'=\mean[|S'|]=\sum_{x\in A}p_x=\floor{C}p=Cp
$
and again $\prob{x\in S'}=C'/C=(C'/C)\prob{x\in S}$.

%%%%%%%%%%%%%%%%%%%%%%%%%%%%%%%%%%%%%%%%%%%%%%%%%%%%%%%%%
\vskip\baselineskip
\textbf{Proof of Theorem~\ref{th:union}}
First observe that $L$ is indeed a latent sample: $|A|+|\pi|=\ceil{C}$, and $|\pi|\le 1$. Since $C=C_1+C_2$ by Line~\ref{ln:plusC}, the remainder of assertion~(i) of the theorem follows from \eqref{eq:meansize}. To prove assertion~(ii), observe that for every $x\in A_1$, we have that $x\in A$, so that  $\prob{x\in S}=\prob{x\in S_1}=1$. If there is a partial item~$x^*\in\pi_1$, we have three cases. If $\frc(C_1)+\frc(C_2)<1$, then 
\[
\begin{split}
\prob{x^*\in S}
&=\prob{x^*\in\pi}\cdot\prob{x^*\in S\mid x^*\in\pi}=\frac{\frc(C_1)}{\frc(C_1)+\frc(C_2)}\cdot\bigl(\frc(C_1)+\frc(C_2)\bigr)\\
&=\frc(C_1)=\prob{x^*\in S_1}.
\end{split}
\]
If $\frc(C_1)+\frc(C_2)=1$, then
\[
\prob{x^*\in S}
=\prob{x^*\in A}\cdot\prob{x^*\in S\mid x^*\in A}=\frc(C_1)\cdot 1=\prob{x^*\in S_1}.
\]
Finally, if $\frc(C_1)+\frc(C_2)>1$, then
\[
\begin{split}
\prob{x^*\in S}
&=\prob{x^*\in\pi}\cdot\prob{x^*\in S\mid x^*\in\pi}+\prob{x^*\in A}\cdot\prob{x^*\in S\mid x^*\in A}\\
&=\frac{1-\frc(C_1)}{2-\frc(C_1)-\frc(C_2)}\cdot\bigl(\frc(C_1)+\frc(C_2)-1\bigr)+ \frac{1-\frc(C_2)}{2-\frc(C_1)-\frc(C_2)}\cdot 1\\
&=\frc(C_1)=\prob{x^*\in S_1}.
\end{split}
\]
This proves assertion~(ii), and the proof of assertion~(iii) is almost identical.

%%%%%%%%%%%%%%%%%%%%%%%%%%%%%%%%%%%%%%%%%%%%%%%%%%%%%%%%%
\vskip\baselineskip
\textbf{Proof of Theorem~\ref{th:rtbsIncl}}
Our proof of assertions~(i) and (ii) of the theorem is by induction on $k$. For $k=1$ and $x\in\xB_1$, we see from line~\ref{ln:dsampleBE} that $x$ initially appears in the latent sample $L'_0=(\xB_1,\emptyset,|\xB_1|)$ with probability~1. Denote by $S'_0$, $S_0$, and $S_1$ random samples generated from $L'_0$, $L_0$, and $L_1$ via Algorithm~\ref{alg:getSample} and by $C'_0$, $C_0$, and $C_1$ the sample weights of $L'_0$, $L_0$, and $L_1$. By Theorem~\ref{th:downsamp}, we have $\prob{x\in S_0}=\rho_1\prob{x\in S'_0}= \rho_1\cdot 1=\rho_1f(0)=\rho_1f(\alpha_{1,1})$. After being unioned with the empty latent sample $(A,\pi,C)=(\emptyset,\emptyset,0)$, we have by Theorem~\ref{th:union} that $\prob{x\in S_1}=\prob{x\in S_0}=\rho_1f(\alpha_{1,1})$. By definition of the downsampling operation, we have $C_0=\rho_1|\xB_1|=\rho_1 W_1$, and thus, again by Theorem~\ref{th:union}, $C_1=C_0+0=\rho_1 W$. This proves (i) and (ii) for $k=1$. 
Assume for induction that (i) and (ii) hold for $k-1$. Then the inductive step for (i) is as given in Section~\ref{sec:rtbsAlgE}. To prove the inductive step for (ii), denote by $L_{k,1}$ the latent sample obtained by downsampling $L_{k-1}$ in line~\ref{ln:dsampleE},  by $C_{k,0}=\rho_k|\xB_k|$ the sample weight resulting from downsizing $(\xB_k,\emptyset,|\xB_k|)$ in line~\ref{ln:dsampleBE}, and by $C_{k,1}$ the sample weight of $L_{k,1}$. Using Theorem~\ref{th:union} and the inductive hypothesis that $\rho_{k-1}=C_{k-1}/W_{k-1}$, we have
\[
C_k=C_{k,0}+C_{k,1}=\rho_k|\xB_k|+\frac{\rho_k}{\rho_{k-1}}\theta_k C_{k-1}=\rho_k(|\xB_k|+\theta W_{k-1})=\rho_k W_k,
\]
and the desired result follows.

%Assuming for induction that (i) and (ii) hold for $k-1$, we first see that for $x\in\xB_k$, we have $\prob{x\in S_k}=\rho_kf(\alpha_{k,k})$ by the previous argument. For $x\in\xB_i$ with $i<k$, we have by induction that $\prob{x\in S_{k-1}}=\rho_{k-1}f(\alpha_{i,k-1})$. Denote by $L_{k,1}$ the latent sample obtained by downsampling $L_{k-1}$ in line~\ref{ln:dsampleE}, and by $S_{k,1}$ a sample generated from $L_{k,1}$. Because, for exponential decay, $\theta_k=e^{-\lambda(t_k-t_{k-1})}=f(\alpha_{i,k})/f(\alpha_{i,k-1})$ for all $i\le k$, Theorem~\ref{th:downsamp} implies that
%\[
%\prob{x\in S_{k,1}}=\frac{\rho_k}{\rho_{k-1}}\theta_k\prob{x\in S_{k-1}}=\frac{\rho_k f(\alpha_{i,k})}{\rho_{k-1}f(\alpha_{i,k-1})}\cdot \rho_{k-1}f(\alpha_{i,k-1})=\rho_k f(\alpha_{i,k}).
%\]
%Finally, denote by $C_{k,0}=\rho_k|\xB_k|$ the sample weight resulting from downsizing $(\xB_k,\emptyset,|\xB_k|)$ in line~\ref{ln:dsampleBE} and by $C_{k,1}$ the sample weight of $L_{k,1}$ defined above. Using Theorem~\ref{th:union} and the inductive hypothesis that $\rho_{k-1}=C_{k-1}/W_{k-1}$, we have
%\[
%C_k=C_{k,0}+C_{k,1}=\rho_k|\xB_k|+\frac{\rho_k}{\rho_{k-1}}\theta_k C_{k-1}=\rho_k(|\xB_k|+\theta W_{k-1})=\rho_k W_k,
%\]
%and the desired result follows.

To prove assertion~(iii), fix $k$ and $i$, and first observe that $f(\alpha_{i,k})\le f(\alpha_{i,k-1})$ since $f$ is nonincreasing. If $W_k\ge W_{k-1}$ or $W_{k-1}\le n$, then $\rho_k\le\rho_{k-1}$ and the desired result follows. If $W_{k-1}>n$ and $W_k<W_{k-1}$, then we have, setting $\theta_k=e^{-\lambda(t_k-t_{k-1})}$,
\[
\begin{split}
\rho_kf(\alpha_{i,k})&=\min\Bigl(1,\frac{n}{W_k}\Bigr)f(\alpha_{i,k})=\min\Bigl(1,\frac{n}{\theta_k W_{k-1}+|\xB_k|}\Bigr)\theta_k f(\alpha_{i,k-1})
\le \frac{n}{W_{k-1}}f(\alpha_{i,k-1})\\
&=\rho_{k-1}f(\alpha_{i,k-1}).
\end{split}
\]

%%%%%%%%%%%%%%%%%%%%%%%%%%%%%%%%%%%%%%%%%%%%%%%%%%%%%%%%%
\vskip\baselineskip
\textbf{Proof of Theorem~\ref{th:consol}}
We first show that $\fa_k(\alpha)\le f(\alpha)$ for all $\alpha$ of the form $\alpha=\alpha_{i,k}$. For $\alpha_{i,k}$ with $i>m(k)$, the assertion  follows immediately  from the definition of $\fa_k$, so we assume that $\alpha=\alpha_{i,k}$ with $i\le m(k)$ and argue by induction. The assertion is trivially true for $i=m(k)$ by definition of $\fa$. Write $m=m(k)$ and assume for induction that the assertion holds for some $i\le m$. Using the inductive assumption and \eqref{eq:bdLambda}, we have that
\[
f(\alpha_{i-1,m})
=f(\alpha_{i,m})\frac{f(\alpha_{i-1,m})}{f(\alpha_{i,m})}
\ge \fa_k(\alpha_{i,m})\frac{f(\alpha_{i,m}+\Delta)}{f(\alpha_{i,m})}
\ge  \fa_k(\alpha_{i,m})e^{-\lambda\Delta}
=\fa_k(\alpha_{i-1,m}),
\]
and the assertion holds.

To establish assertion~(i) of the theorem, observe that $|f(\alpha)-\fa_k(\alpha)|=f(\alpha)-\fa_k(\alpha)\le f(\alpha)$. Because of the check in line~\ref{ln:entryCheck}, a batch of age $\alpha\ge\alpha^*_k$ that joins the consolidated latent sample satisfies $f(\alpha)<\delta_1$. From that point on, the value of $\alpha$ increases for the batch, and hence $f(\alpha)$ decreases, so the relation $f(\alpha)<\delta_1$ continues to hold, proving (i).

We now prove assertion~(ii) by induction. For $k=1$, we have $m(k)=1$ and thus $\sum_{i=1}^0 |\xB_i| f(\alpha_{i,k})=0<\delta_2$. By the induction hypothesis, we have $\sum_{i=1}^{m(k-1)-1} |\xB_i| f(\alpha_{i,k-1})<\delta_2$. Since $f$ is monotonically decreasing, we have that  $\sum_{i=1}^{m(k-1)-1} |\xB_i| f(\alpha_{i,k})<\delta_2$ prior to line~\ref{ln:entryCheck} during the processing of the new batch $\xB_k$. It follows that, at line~\ref{ln:bigUnion}, we have $F_\infty-\sum_{i=m(k)}^k f(\alpha_{i,k})<\delta_2/B^*_k$, where $B^*_k=\max_{i\le k}|\xB_i|$. Since $\sum_{i=1}^{m(k)-1} f(\alpha_{i,k})+\sum_{i=m(k)}^k f(\alpha_{i,k})<F_\infty$, we have $\sum_{i=1}^{m(k)-1} f(\alpha_{i,k})<\delta_2/B^*_k$. Thus $\sum_{i=1}^{m(k)-1} |\xB_i|f(\alpha_{i,k})\le B^*_k\sum_{i=1}^{m(k)-1} f(\alpha_{i,k})<\delta_2$.

To prove assertion~(iii), recall that $f_i=f(i\Delta)$ and observe that $\sum_{i=m(k)}^k f(\alpha_{i,k})=\sum_{i=0}^{k-m(k)}f_i$, so that $F_\infty-\sum_{i=m(k)}^k f(\alpha_{i,k})=\sum_{i=k-m(k)+1}^\infty f_i$. Thus, at line~\ref{ln:bigUnion}, $k-m(k)$ is the smallest integer such that $\sum_{i=k-m(k)+1}^\infty f_i<\delta_2/B^*_k$. Since $N$ is the smallest integer such that $\sum_{i=N}^\infty f_i<\delta_2/\bbar\le \delta_2/B^*_k$, we have $k-m(k)+1\le N$, and the assertion follows.

%%%%%%%%%%%%%%%%%%%%%%%%%%%%%%%%%%%%%%%%%%%%%%%%%%%%%%%%%
\vskip\baselineskip
\textbf{Proof of Proposition~\ref{prop:ssBDx}}
To prove assertion~(i) of the proposition, observe that $n'/W^*_k\ge 1$, and hence $n'/W_i\ge 1$ for all $i\le k$. Since $\rho_1=1$, it follows that $\rho_i=1$ for $i\le k$.

To prove assertion~(ii), first observe that since $f$ is nonincreasing, we have $\rho^*_{i,k}\ge \rho_{k-1}$ for $i\le k$ and hence $\rho^*_k\ge \rho_{k-1}$. Now suppose that $\rho_k$ achieves its minimum value in $[t_1,t_k]$ at time $t_{i'}$. Then $\rho_{i'}$ equals 1, $n'/W^*_{i'}$, or $\rho^*_{i'}$. Since $\rho^*_{i'}\ge\rho_{i'-1}\ge\rho_{i'}$, and since $n'/W_j<1$ for some $t_j\le t_k$ (so that $\rho_{i'}\le\rho_j< 1$) we have that $\rho_{i'}=n'/W_{i'}=n'/W^*_k$, the latter equality holding since $\rho_{i'}$ is the smallest value of $\rho$ in $[t_1,t_k]$. Thus for any $i\le k$ we have $\rho_i\ge \rho_{i'}\ge n'/W^*_k$.

To prove assertion~(iii), note that, under the conditions of the proposition, $\rho_k=\rho^*_k$ by \eqref{eq:setRho}. As shown above, we have $\rho^*_k>\rho_{k-1}$---with strict inequality because $f$ is strictly decreasing---and hence $\rho_k>\rho_{k-1}$.

%\input{chao}
%\input{spark-implementation}
%\input{batchproofs}
% !TEX root = tbs-tods-onlineApp.tex
% above command is for TeXShop

\section{Analysis for Section~2}\label{sec:batchproofs}

\textbf{Bernoulli downsampling.} We first show that downsampling using the binomial distribution, as in Algorithm~1,
%\ref{alg:bernsamp},
is statistically equivalent to simple sequential downsampling via Bernoulli coin flips. Consider a set $S$ and a subset $S'\subseteq S$ with $|S|=n$ and $|S'|=k$ (with $k\le n$). The probability of producing $S'$ from $S$ via $n$ coin flips with retention probability $p$ is $P_1(S')=p^k(1-p)^{n-k}$. Now consider the probability $P_2(S')$ of producing $S'$ from $S$ by first generating a binomial number $M$ of items to retain and then uniformly selecting $M$ specific items uniformly from $S$. The probability of selecting $M=k$ items is $\binom{n}{k}p^k(1-p)^{n-k}$ and the probability of selecting the specific set $S'$ of $k$ elements, given that $M=k$, is $\binom{n}{k}^{-1}$. Thus the overall probability $P_2(S')$ is the product of these terms, which equals $P_1(S')$. Thus, for any subset $S'$, both sampling schemes produce $S'$ with the same probability, and hence the schemes are statistically identical.

\vskip\baselineskip
\noindent\textbf{Batch reservoir sampling.} We now prove that Algorithm~2
%\ref{alg:rs} 
does in fact produce uniform samples. As before, for $k\ge 1$, let $\xU_k=\bigcup_{j=1}^k \xB_j$ be the set of items arriving up through time $t_k$ and set $W_k=|\xU_k|$; we take $W_0=0$. Also write $B_k=|\xB_k|$.
%We need to prove that, for any $k\ge 1$,
%\begin{enumerate}
%\item If $W_k\le n$, then $S_k=\xU_k$ (and hence is trivially a uniform sample from $\xU_k$);
%\item If $W_k>n$, then for any set $S\subseteq\xU_k$ with $|S|=n$, we have $\prob{S_k=S}=1/\binom{W_k}{n}$.
%\end{enumerate}
Observe that $\{W_k\}_{k\ge 1}$ is nondecreasing and set $K=\min\{\,k\ge 1: W_k> n\,\}$. We first show that $S_k$ is a uniform sample from $\xU_k$ for $k\in[1..K]$. For $k<K$, we have $W_{k-1}\le n$ and $W_k=W_{k-1}+B_k\le n$. In this case, $S_{k-1}=\xU_{k-1}$ and $M=B_k$ with probability~1, since $M$ is hypergeometric$(B_k+W_{k-1},B_k,W_{k-1})$, so that $S_k=S_{k-1}\cup\xB_k=\xU_k$ and hence is trivially a uniform sample from $\xU_k$. For $k=K$, we have that $W_{k-1}\le n$ and $W_{k-1}+B_k> n$. Again, $S_{k-1}=\xU_{k-1}$. Fix $m\in[n-W_{k-1}..n]$ and consider a set $S=B\cup R$, where $B\subseteq \xB_k$ with $|B|=m$ and $R\subseteq \xU_{k-1}$ with $|R|=n-m$. In this case, we will have $S_k=S$ if (i) $M=m$, where $M$ is hypergeometric$(n,B_k,W_{k-1})$, (ii) the set of $m$ items accepted into the sample is exactly the set $B$, and (iii) the set of $m-(n-W_{k-1})$ items chosen to be overwritten is exactly the set $\xU_{k-1}-R$. Multiplying the probabilities of these three events together, we find that
\[
\prob{S_k=S}=\frac{\binom{B_k}{m}\binom{W_{k-1}}{n-m}}{\binom{W_{k-1}+B_k}{n}}\cdot \frac{1}{\binom{B_k}{m}}\cdot \frac{1}{\binom{W_{k-1}}{m-n+W_{k-1}}}=\frac{1}{\binom{W_{k-1}+B_k}{n}}=\frac{1}{\binom{W_k}{n}}.
\]
Since $m$ and $S$ were chosen arbitrarily, and there are $\binom{W_k}{n}$ possible choices for $S$, it follows that $S_K$ is a uniform sample of $\xU_K$. We establish the desired result for $k>K$ by induction. Suppose that $S_j$ is a uniform random sample of $\xU_j$ for $j\le k-1$. Since $k>K$, we have that $W_{k-1}> n$. Consider a set $S=B\cup R$ as above, but with $m\in[0..\min(n,B_k)]$. Also let $\xE$ denote the set of all subsets of $\xU_{k-1}-R$ of size $m$. Thus $\xE$ contains all possible sets of items that might be overwritten when accepting  $m$ items from $\xB_k$ into the sample. Fix $E\in\xE$ and consider the case where $S_{k-1}=E\cup R$. Then we will have $S_k=S$ if (i) $M=m$, where $M$ is hypergeometric$(n,B_k,W_{k-1})$, (ii) the set of $m$ items accepted into the sample is exactly the set $B$, and (iii) the set of $m$ overwritten items is exactly the set $E$. Also, by induction, we have $\prob{S_{k-1}=E\cup R}=1/\binom{W_{k-1}}{n}$ for all $E\in\xE$. Putting everything together, we have
\[
\begin{split}
&\prob{S_k=S}=\sum_{E\in\xE}\prob{S_{k-1}=E\cup R}\prob{S_k=S\mid S_{k-1}=E\cup R}\\
&\qquad=\sum_{E\in\xE} \frac{1}{\binom{W_{k-1}}{n}}\cdot\frac{\binom{B_k}{m}\binom{W_{k-1}}{n-m}}{\binom{W_{k-1}+B_k}{n}}\cdot \frac{1}{\binom{B_k}{m}}\cdot \frac{1}{\binom{n}{m}}
= |\xE|\frac{1}{\binom{W_{k-1}}{n}}\cdot\frac{\binom{B_k}{m}\binom{W_{k-1}}{n-m}}{\binom{W_{k-1}+B_k}{n}}\cdot \frac{1}{\binom{B_k}{m}}\cdot \frac{1}{\binom{n}{m}}\\
&\qquad =\textstyle{\binom{W_{k-1}-(n-m)}{m}}\cdot\frac{1}{\binom{W_{k-1}}{n}}\cdot\frac{\binom{B_k}{m}\binom{W_{k-1}}{n-m}}{\binom{W_{k-1}+B_k}{n}}\cdot \frac{1}{\binom{B_k}{m}}\cdot \frac{1}{\binom{n}{m}}=\frac{1}{\binom{W_k}{n}}.
\end{split}
\]
Again, since $m$ and $S$ are arbitrary, the desired result follows.

% !TEX root = tbs-tods-onlineApp.tex
% above command is for TeXShop

\section{Chao's Algorithm}\label{sec:chao}

In this section, we provide pseudocode for a batch-oriented, time-decayed version of Chao's algorithm~\cite{Chao82} for maintaining a weighted reservoir sample of $n$ items, which we call B-Chao. Recall that the goal of time-biased sampling is to enforce the relationship
\begin{equation}%\label{eq:expratio}
\prob{x\in S_k}/\prob{y\in S_k}=f(\alpha_{i,k})/f(\alpha_{j,k}) \nonumber
\end{equation}
for arbitrary batch arrival times $t_i\le t_j\le t_k$ and arbitrary items $x\in\xB_i$ and $y\in\xB_j$, where $f$ is the decay function and $\alpha_{i,k}=t_k-t_i$ the age at time $t_k$ of an item belonging to batch $\xB_i$. For simplicity, we focus on exponential decay functions.

The pesudocode is given as Algorithm~\ref{alg:chao}. In the algorithm, the function $\textsc{Get1}(x,A)$ randomly chooses an item $i$ in a set $A$, and then sets $x\gets i$ and  $A\gets A\setminus\{x\}$. We explain the function \textsc{Normalize} below.

\begin{algorithm}[t]
\caption{Batched version of Chao's scheme (B-Chao)}\label{alg:chao}
{\footnotesize
$\lambda$: decay factor ($\ge 0$)\;
$n$: reservoir size\;
\BlankLine
\Comment{Initialize}
$S\gets \emptyset$\;
$W\gets 0$ \Comment*[r]{$W=$ agg. weight of non-overweight items}
$V\gets\emptyset$\Comment*[r]{$V$ holds overweight items}
$A\gets\emptyset$\Comment*[r]{$A$ hold newly non-overweight items}
\Comment{Process batches}
\For{$i\gets1,2,\ldots$}{
  \Comment{update weights}
  $W\gets e^{-\lambda}W$\;
   \lFor{$(z,w_z)\in V$}{$w_z\gets e^{-\lambda}w_z$}
%  \lFor(\Comment*[f]{$V$ holds overwt items}){$(z,w_z)\in V$}{$w_z\gets e^{-\lambda}w_z$}
 \Comment{Process items in batch}
  \For{$j\gets 1,2,\ldots,|\xB_t|$}{
    $\textsc{Get1}(x,\xB_i)$ \Comment*[r]{get new item to process}
    \If(\Comment*[f]{reservoir not full yet}){$|S|<n$}{
      $S\gets S\cup \{x\}$; $W\gets W+1$\;\label{ln:nonFull}
      }
    \Else(\Comment*[f]{reservoir is full}){
      $\textsc{Normalize}(x,V,A,W,\pi_x)$ \Comment*[r]{categorize items}
     \If{$\textsc{\emph{Uniform}}()\le \pi_x$}{
        \Comment{accept $x$ and choose victim to eject}
        $\alpha=0$; $y\gets$ {\bf null}; $U\gets\textsc{Uniform}()$\;
        \For(\Comment*[f]{attempt to choose from $A$}...){$(z,w_z)\in A$}{
          $\alpha\gets\alpha + \bigl(1-\frac{(n-|V|)w_z}{W}\bigr)/\pi_x$\;
          \If{$U\le \alpha$}{$A\gets A\setminus\{(z,w_z)\}$; $y\gets z$; {\bf break}}
          }
        \lIf(\Comment*[f]{... else remove victim from $S$}){$y==$ {\bf null}}{$\textsc{Get1}(y,S)$}
        \lIf(\Comment*[f]{Add new item to sample if not overweight}){$(x,1)\notin V$}{$S\gets S\cup\{x\}$}

        }
      $S\gets S\cup \{z:(z,w_z)\in A\}$; $A\gets\emptyset$\Comment*[r]{if no longer overweight, stop tracking}
      }
    }
  output $S\cup \{z:(z,w_z)\in V\}$
  }
}
\end{algorithm}

\begin{algorithm}[t]
\caption{Normalization of appearance probabilities}\label{alg:normp}
{\footnotesize
$x$: newly arrived item (has weight $=1$)\;
$V$: set of items that remain overweight (and their weights)\;
$A$: set of items that become non-overweight (and their weights)\;
$W$: aggregate weight of non-overweight items\;
$\pi_x$: inclusion probability for $x$\;
$n$: reservoir size\;
\BlankLine
$W\gets W+1+\sum_{(z,w_z)\in V}w_z$ \Comment*[r]{agg. wt. of new \& sample items}
\If(\Comment*[f]{$x$ is not overweight}){$n/W\le 1$}{
  $A\gets V$; $V\gets\emptyset$ \Comment*[r]{no item is now overweight}
  $\pi_x\gets n/W$
  }
\Else(\Comment*[f]{$x$ is overweight}){
  $\pi_x\gets 1$;
  $W\gets W-1$\;
  $D\gets \{(x,1)\}$ \Comment*[r]{$D=$ set of overweight items so far}
  \Repeat{$(n-|D|)w_z/W\le1$\Comment*[f]{first non-overweight item}}{
    $(z,w_z)\gets\textsc{GetMax}(V)$\;
    \If(\Comment*[f]{$z$ remains overweight}){$(n-|D|)w_z/W>1$}{
      $D\gets D\cup\{(z,w_z)\}$; $W\gets W-w_z$
      }
    \Else(\Comment*[f]{$z$ no longer overweight}){
      $A\gets A\cup \{(z,w_z)\}$}
      }
    $A\gets A\cup V$; $V\gets D$ \Comment*[r]{no more overweight items in $V$}
  }
}
\end{algorithm}

Note that the sample size increases to~$n$ and remains there, regardless of the decay rate. During the initial period in which the sample size is less than $n$, arriving items are included with probability~1 (line~\ref{ln:nonFull}); if more than one batch arrives before the sample fills up, then clearly the relative inclusion property in \eqref{eq:expratio} will be violated since all items will appear with the same probability even though the later items should be more likely to appear. Put another way, the weights on the first $n$~items are all forced to equal~1.

After the sample fills up, B-Chao encounters additional technical issues due to ``overweight'' items. In more detail, observe that $\mean[|S|]=\sum_{i\in S}\pi_i$, where $\pi_i=P[i\in S]$. At any given moment we require that $\mean[|S|]=\sum_{i\in S}\pi_i=n$. If we also require for each~$i$ that $\pi_i\propto w_i$, then we must have  $\pi_i=n w_i/W$, where $W=\sum_{i\in S}w_i$. It is possible, however, that $w_i/W>1/n$, and hence $\pi_i>1$, for one or more items $i\in S$. Such items are called \emph{overweight}. As in \cite{Chao82}, B-Chao handles this by retaining the most overweight item, say $i$,  in the sample with probability~1. The algorithm then looks at the reduced sample of size $n-1$ and weight $W-w_i$, and identifies the item, say $j$, having the largest weight $w_j$. If item~$j$ is overweight in that the modified relative weight $w_j/(W-w_i)$ exceeds $1/(n-1)$, then it is included in the sample with probability~1 and the sample is again reduced. This process continues until there are no more overweight items, and can be viewed as a method for categorizing items as overweight or not, as well as normalizing the appearance probabilities to all be less than 1. The \textsc{Normalize} function in Algorithm~\ref{alg:normp} carries out this procedure; Algorithm~\ref{alg:normp} gives the pseudocode. In Algorithm~\ref{alg:normp}, the function $\textsc{GetMax}(V)$ returns the pair $(z,w_z)\in V$ having the maximum value of $w_z$ and also sets $V\gets V\setminus\{(z,w_z)\}$; ties are broken arbitrarily. An efficient implementation would represent $V$ as a priority queue.

When overweight items are present, it is impossible to both maintain a sample size equal to $n$ and to maintain the property in \eqref{eq:expratio}. Thus, as discussed in Section~2.1 of \cite{Chao82}, the algorithm only enforces the relationship in \eqref{eq:expratio} for items that are not overweight. When the decay rate $\lambda$ is high, newly arriving items are typically overweight, and transform into non-overweight items over time due to the arrival of subsequent items. In this setting, recently-arrived items are overrepresented. The R-TBS algorithm, by allowing the sample size to decrease, avoids the over\-weight-item problem, and thus the violation of the relative inclusion property \eqref{eq:expratio}, as well as the complexity arising from the need to track overweight items and their individual weights (as is done in the pseudocode via $V$). We note that prior published descriptions of Chao's algorithm tend to mask the complexity and cost incurred by the handling of overweight items.%; R-TBS is lightweight compared to B-Chao.
% !TEX root = tbs-tods-onlineApp.tex
% above command is for TeXShop

\section{Implementation of D-R-TBS on Spark}\label{sec:spark-impl}

In this section we discuss aspects of our implementation that are specific to Spark. Spark is a natural platform for implementing D-R-TBS because it supports streaming, machine learning, and efficient distributed data processing, and is widely used.  Efficient implementation is relatively straightforward for T-TBS but decidedly nontrivial for R-TBS because of both Spark's idiosyncrasies and the coordination needed between nodes.
%The key is to leverage the in-place updating technique for RDDs introduced in \cite{XieTSBH15}.
%We have chosen to implement our distributed versions of the R-TBS algorithm on Spark in order to leverage Spark's streaming support for handling large volumes of data and its powerful distributed processing mechanism.

\subsection{Spark Overview}
Spark is a general-purpose distributed processing framework based on a functional programming paradigm. Spark provides a distributed memory abstraction called a Resilient Distributed Dataset (RDD). An RDD is divided into partitions that are then distributed across the cluster for parallel processing. RDDs can either reside in the aggregate main memory of the cluster or in efficiently serialized disk blocks. An RDD is immutable and cannot be modified, but a new RDD can be constructed by transforming an existing RDD. Spark utilizes both lineage tracking and checkpointing of RDDs for fault tolerance. A Spark program consists of a single driver and many executors. The driver of a Spark program orchestrates the control flow of an application, while the executors perform operations on the RDDs, creating new RDDs. 

\subsection{Distributed Data Structures}

We leverage Spark Streaming to ingest batches of arriving data, thereby supporting input sources such as HDFS, Kafka, Flume, and so on. Each incoming batch $\xB_k$ is thus naturally stored as an RDD. We can store the reservoir using either a key-value store or a co-partitioned reservoir---see Section~5.2---but  prefer using a co-partitioned reservoir because it has lower overhead (since incoming batch partitions align with local reservoir partitions). We would like use Spark's distributed fault-tolerant RDD data structure to implement the co-partitioned reservoir. A problem arises, however, if we try to store the reservoir as a vanilla RDD: because RDDs are immutable, the large numbers of reservoir inserts and deletes at each time point would trigger the constant creation of new RDDs, quickly saturating memory. We therefore augment the RDD with with the in-place update technique proposed by Xie, et al.~\cite{XieTSBH15}. The key idea is to share objects across different RDDs. In particular, we store the reservoir as an RDD, each partition of which contains only one object, a (mutable) vector containing the items in the corresponding reservoir partition. A new RDD created from an old RDD via a batch of inserts and deletes references the same vector objects as the old RDD. We keep the lineage of RDDs intact by notifying Spark of changes to old RDDs by calling the \textsc{Unpersist} function. In case of failure, old RDDs (with old samples) can be recovered from checkpoints, and Spark's recovery mechanism based on lineage will regenerate the sample at the point of failure.

\subsection{Choosing Items to Delete and Insert}

Section~5.3
%\ref{sec:updates}
has detailed the centralized and distributed decision mechanisms for choosing items to delete and insert. Here, we add some Spark-related details for centralized decisions.

All of the transient large data structures are stored as RDDs in Spark; these include the set of item locations for the insert items $\mathcal{Q}$, the set of retrieved insert items $\mathcal{S}$, and the set of item locations for the delete items $\mathcal{R}$. To ensure the co-partitioning of these RDDs with the incoming batch RDD---and the reservoir RDD when the co-partitioned reservoir is used---we use a customized partitioner. For the join operations between RDDs, we use by default Spark's standard repartition-based join. When RDDs are co-partitioned and co-located, however, we implement a customized join algorithm that performs only local joins on corresponding partitions. 

\subsection{Fault Tolerance of Distributed Implementations}\label{sec:checkpoint}

\revision{We rely primarily on Spark's lineage tracking and checkpointing mechanisms to ensure the fault tolerance of our distributed algorithms. Spark Streaming's checkpointing mechanism is used to ensure the resiliency of the incoming batches. If the co-partitioned reservoir approach is used, we simply leverage Spark's built-in lineage and checkpointing mechanisms for the reservoir RDD. If the key-value store approach is used, then, because such stores are non-native to Spark, we need to do our own checkpointing, writing the reservoir content to the distributed file system; this adds more implementation overhead to the distributed algorithms. We also have to do our own checkpointing for any variables not stored in the foregoing distributed data structures, such as the current total weight and the current sample weight, for the distributed algorithms.}

\revision{Finally, we distinguish between checkpointing the reservoir for fault tolerance and materializing the reservoir for the use of external ML applications. Because failure doesn't happen very often, checkpointing occurs much less frequently than the arrival of incoming batches. On the other hand, no matter how the reservoir is implemented, its content needs to be materialized in a consumable format after processing each incoming batch, thereby enabling an external ML application to access the sample for model retraining. Because the changes to the reservoir between subsequent incoming batches are usually small, the system can write a small delta for each new batch, and write full snapshots  periodically.}

\end{appendix}

\section*{Acknowledgment}
The authors wish to thank Valerie Caro for her help with the numerical experiments.

\bibliographystyle{ACM-Reference-Format}
\bibliography{tbs-arxiv}

\end{document}